\definecolor{navy}{RGB}{0,0,128}
\definecolor{changecol}{RGB}{0,128,0}
\newtheorem{theorem}{Theorem}
\newtheorem{lemma}{Lemma}
\theoremstyle{definition}
\newtheorem{remark}{Remark}
\newtheorem{assumptionCausal}{Condition}
\newtheorem{assumptionEncourage}{Condition}
\newcommand{\real}{\mathbb{R}}
\newcommand{\Q}{\mu}
\DeclareMathOperator{\expect}{E}
\DeclareMathOperator{\cexpect}{\mathbb{E}}
\DeclareMathOperator{\expit}{expit}
\newcommand{\edecision}{{\rho}}
\newcommand{\bigO}{{\mathrm{O}}}
\newcommand{\smallo}{{\mathrm{o}}}
\newcommand{\const}{{\mathscr{C}}}
\newcommand{\FR}{{\mathrm{FR}}}
\newcommand{\TP}{{\mathrm{TP}}}
\newcommand{\RD}{{\mathrm{RD}}}
\begin{document}
\title{Individualized treatment rules\\ under stochastic treatment cost constraints}
\date{}
\author[1]{Hongxiang Qiu}
\author[2]{Marco Carone}
\author[3]{Alex Luedtke}

\affil[1]{Department of Statistics, the Wharton School, University of Pennsylvania}
\affil[2]{Department of Biostatistics, University of Washington}
\affil[3]{Department of Statistics, University of Washington}

\maketitle

\begin{abstract}
    Estimation and evaluation of individualized treatment rules have been studied extensively, but real-world treatment resource constraints have received limited attention in existing methods. We investigate a setting in which treatment is intervened upon based on covariates to optimize the mean counterfactual outcome under treatment cost constraints when the treatment cost is random.
    In a particularly interesting special case, an instrumental variable corresponding to encouragement to treatment is intervened upon with constraints on the proportion receiving treatment.
    For such settings, we first develop a method to estimate optimal individualized treatment rules. We further construct an asymptotically efficient plug-in estimator of the corresponding average treatment effect relative to a given reference rule.
\end{abstract}

\section{Introduction}

The effect of a treatment often varies across subgroups of the population \citep{Rothwell2005,Varadhan2013}. When such differences are clinically meaningful, it may be beneficial to assign treatments strategically depending on subgroup membership. 
Such treatment assignment mechanisms are called individualized treatment rules (ITRs). 
A treatment rule is commonly evaluated on the basis of the mean counterfactual outcome value it generates --- what is often referred to as the treatment rule's value --- and an ITR with an optimal value is called an optimal ITR. There is an extensive literature on estimation of optimal ITRs and their corresponding values using data from randomized trials or observational studies \citep{Chakraborty2013,Luedtke2016optimalrule,Murphy2003,Robins2004,Zhao2012}.

Most existing approaches for estimating ITRs do not %
incorporate real-world resource constraints. Without such constraints, an optimal ITR would assign the treatment to members of a subgroup provided there is any benefit for such individuals, even when this benefit is minute. In contrast, under treatment resource limits, it may be more advantageous to reserve treatment for subgroups with the greatest benefit from treatment. This issue has received attention in recent work. \citeauthor{Luedtke2016} developed methods for estimation and evaluation of optimal ITRs with a constraint on the proportion receiving treatment \citep{Luedtke2016}. \citeauthor{Qiu2021} instead considered related problems in settings in which instrumental variables (IVs) are available \citep{Qiu2021}. In one of the settings they considered, the same resource constraint is imposed as in \citet{Luedtke2016} but a binary IV is used to identify optimal ITRs even in settings in which there may be unmeasured confounders. In another setting considered in \citet{Qiu2021}, the authors considered interventions on a causal IV or encouragement status, and developed methods to estimate individualized encouragement (rather than treatment) rules with a constraint on the proportion receiving both encouragement and treatment \citep{Qiu2021correction}. They also developed nonparametrically efficient estimators of the average causal effect of optimal rules relative to a prespecified reference rule. \citeauthor{Sun2021} considered a setting in which the cost of treatment is random and dependent upon baseline covariates. They developed methods to estimate optimal ITRs under a constraint on the expected additional treatment cost as compared to control, though inference on the impact of implementing the optimal ITR in the population was not studied \citep{Sun2021}. 
\citet{Sun2021EWM} considered a related problem involving the development of optimal ITRs under resource constraints, and established the asymptotic properties of the estimated optimal ITR. Their method appears viable when the class of ITRs is restricted by the user \textit{a priori}.

In this paper, we study estimation and inference for an optimal rule under two different cost constraints. The first is the same as appearing in \citet{Sun2021}. In contrast to earlier work on this setting, we do not constrain the class of ITRs considered and provide a means to obtain inference about the optimal ITR. The second constraint we consider places a cap on the total cost under the rule rather than on the incremental cost relative to control. To our knowledge, the latter problem has not previously been considered in the literature. 
Both of these estimation problems mirror the intervention-on-encouragement setting considered in \citet{Qiu2021} but involve different constraints and a more general cost function.

Similarly as in \citet{Qiu2021}, the estimators that we develop are asymptotically efficient within a nonparametric model and enable the construction of asymptotically valid confidence intervals for the impact of implementing the optimal rule. We develop our estimators using similar tools --- such as semiparametric efficiency theory \citep{Pfanzagl1990,vanderVaart1998} and targeted minimum loss-based estimation (TMLE) \citep{VanderLaan2006,VanderLaan2018} --- as were used to tackle the related problem studied in \citet{Qiu2021}. Consequently, our proposed estimators are similar to that in \citet{Qiu2021}. Therefore, we will streamline the presentation by highlighting the key similarities and focusing on the differences between these related problems and estimation schemes.

The rest of this paper is organized as follows. In Section~\ref{section: basic assumption}, we describe the problem setup, introduce notation, and present the causal estimands along with basic causal conditions. In Section~\ref{section: identification}, we present additional causal conditions and the corresponding nonparametric identification results. In Section~\ref{section: estimators}, we present our proposed estimators and their theoretical properties. In Section~\ref{section: simulation}, we present a simulation illustrating the performance of our proposed estimators. We make concluding remarks in Section~\ref{section: discussion}. 
Proofs, technical conditions, and additional simulation results can be found in the Supplementary Material.

\section{Setup and objectives} \label{section: basic assumption}

To facilitate comparisons with \citeauthor{Qiu2021} \citep{Qiu2021}, we adopt similar notation as in that work. %
Suppose that we observe independent and identically distributed data units $O_1,O_2,\ldots,O_n \sim P_0$, where $P_0$ is an unknown sampling distribution. A prototypical data unit $O$ consists of the quadruplet $(W,T,C,Y)$, where $W \in \mathscr{W}\subseteq \mathbb{R}^p$ is the vector of baseline covariates, $T \in \{0,1\}$ is the treatment status, $C \in [0,\infty)$ is the random treatment cost, and $Y \in \real$ is the outcome of interest. As a convention, we assume that larger values of $Y$ are preferable. We use $V=V(W) \in \mathcal{V}$ to denote a fixed transformation of $W$ upon which we allow treatment decisions to depend. For example, $V$ may be a subset of covariates in $W$ or a summary of $W$ (e.g., BMI as a summary of height and weight). In practice, $V$ may be chosen based on prior knowledge on potential modifiers of the treatment effect as well as the cost of measuring various covariates. We distinguish between $V(W)$ and $W$ because of their different roles. On the one hand, we will assume that the full covariate $W$ contains all confounders and thus is used to identify causal effects, while $V(W)$ might not be sufficient for this purpose. On the other hand, some covariates in $W$ may be expensive or difficult to measure in future applications, and thus implementing an optimal ITR based on a subset $V(W)$ of covariate $W$ may be desirable. In the rest of this paper, we will use the shorthand notation $V$, $V_i$ and $v$ to refer to $V(W)$, $V(W_i)$ and $V(w)$, respectively. We define an individualized (stochastic) treatment rule (ITR) to be a function $\edecision:\mathcal{V}\rightarrow[0,1]$ that prescribes treatment with probability $\edecision(v)$ according to an exogenous source of randomness for an individual with covariate value $v$. Any stochastic ITR that only takes values in $\{0,1\}$ is referred to as a deterministic ITR. %

In this work, we adopt the potential outcomes framework \citep{Neyman1923,Rubin1974}. For each individual, we use $C(t)$ and $Y(t)$ to denote the potential treatment cost and potential outcome, respectively, corresponding to scenarios in which the individual has treatment status $t$. We use $\cexpect$ to denote an expectation over the counterfactual observations and the exogenous random mechanism defining a rule, and $\expect_0$ to denote an expectation over observables alone under sampling from $P_0$. We make the usual Stable Unit Treatment Value Assumption (SUTVA) assumption.
\begin{assumptionCausal}[Stable Unit Treatment Value Assumption] \label{IV assumption: sutva}
    The counterfactual data unit of one individual is unaffected by the treatment assigned to other individuals, and there is only a single version of the treatment, so that $T=t$ implies that $C=C(t)$ and $Y=Y(t)$.
\end{assumptionCausal}

\begin{remark}
    The ITRs we consider are not truly individualized, because they are based on the value of covariate $V$ rather than each individual's unique potential treatment effects $Y(1)-Y(0)$ and $C(1)-C(0)$. Nevertheless, depending on the resolution of $V$, these ITRs can be considerably more individualized than assigning everyone to either treatment or control. In this paper, we adopt the conventional nomenclature and refer to the treatment rules we study as ITRs \citep[see, e.g.,][]{Butler2018,Chen2018,Imai2021,Laber2015,Lei2012,Petersen2007,Qian2011,Song2015,VanderLaan2007ITR,Zhao2012,Zhao2015DR,Zhou2017}.
\end{remark}

We define $C(\edecision)$ and $Y(\edecision)$ to be the
counterfactual treatment cost
and outcome, respectively, for an ITR $\edecision$ under an exogenous random mechanism. We note that if $\edecision(v) \in (0,1)$ for an individual with covariate $v$, an exogenous random mechanism is used to randomly assign treatment with probability $\edecision(v)$ and thus $C(\edecision)$ and $Y(\edecision)$ are random for this given individual. If $\edecision$ were implemented in the population, then the population mean outcome would be $\cexpect\left[Y(\edecision)\right]$, where we use $\cexpect$ to denote expectation under the true data-generating mechanism involving potential outcomes $(C(t),Y(t))$ and exogenous randomness in $\rho$. We consider a generic treatment resource constraint requiring that a convex combination of the population average treatment cost and the population average additional treatment cost compared to control be no greater than a specified constant $\kappa \in (0,\infty]$. Consequently, an optimal ITR $\edecision_0$ under this constraint is a solution in $\edecision: \mathcal{V} \rightarrow [0,1]$ to
\begin{align}
    \textnormal{maximize}\hspace{0.5em} & \cexpect[Y(\edecision)] \hspace{1.5em}\text{subject to} \hspace{1.5em} \alpha \cexpect[C(\edecision)] + (1-\alpha) \cexpect[C(\edecision)-C(0)] \leq \kappa\ . \label{eq:encouragedecision}
\end{align}
Here, $\alpha \in [0,1]$ is also a constant specified by the investigator. Natural choices of $\alpha$ are $\alpha=0$, corresponding to a constraint on the population average additional treatment cost compared to control, and $\alpha=1$, corresponding to a constraint on the population average treatment cost. The first choice may be preferred when the control treatment corresponds to the current standard of care and a limited budget is available to fund the novel treatment to some patients. The second choice may be more relevant when both treatment and control incur treatment costs.

\begin{remark} \label{remark: previous works}
    Our setup is similar to that in \citet{Qiu2021} if we view $T$ and $C$ defined here as the instrumental variable/encouragement $Z$ and treatment status $A$ defined in those prior works, respectively. However, the constraint in our setup is different from the constraint  $\cexpect[\edecision(V) C(\edecision)] \leq \kappa$ considered previously. In IV settings, the constraint in \eqref{eq:encouragedecision} with $\alpha=1$ is useful when assigning treatment \emph{always} incurs a cost, regardless of whether encouragement is applied, such as in distributing a limited supply of an expensive drug within a health system based on the results of a randomized clinical trial. It is instead useful with $\alpha=0$ when no encouragement is present under the standard of care but intervention on the encouragement is of interest when additional treatment resources are available. The constraint considered in \citet{Qiu2021,Qiu2021correction} was instead useful in cases in which treatment only incurs a cost when paired with encouragement, such as  when housing vouchers are used to encourage individuals to live in a certain area. In the general setting in which $T$ is viewed as treatment status and $C$ as a random treatment cost, the constraint in \eqref{eq:encouragedecision} with $\alpha=0$ is identical to that considered in \citeauthor{Sun2021} \citep{Sun2021} --- we refer the readers to these works for a more in-depth discussion of the relation between the current problem setup and IV settings.

\end{remark}

To evaluate an optimal ITR $\edecision_0$, we follow \citet{Qiu2021} in considering three types of reference ITRs and develop methods for statistical inference on the difference in the mean counterfactual outcome between $\edecision_0$ and a reference ITR $\edecision^\mathcal{R}_0 : \mathcal{V} \rightarrow [0,1]$. %
The first type of reference ITR considered, denoted by $\edecision^\FR$ ($\FR$=fixed rule), is any fixed ITR that may be specified by the investigator before the study. When $\alpha=0$, it is usually most reasonable to consider the rule that always assigns control, namely $v \mapsto 0$, because the constraint in \eqref{eq:encouragedecision} may arise due to limited funding for implementing treatment whereas the standard of care rule is to always assign control. The second type, denoted by $\edecision^\RD_0$ ($\RD$=random), prescribes treatment completely at random to individuals regardless of their baseline covariates. The probability of prescribing treatment is chosen such that the treatment resource is saturated (i.e., all available resources are used) or all individuals receive treatment, if such a probability exists. Symbolically, this ITR is given by $\edecision^\RD_0:v\mapsto\min\left\{1,(\kappa-\alpha\cexpect\left[C(0)\right])/\cexpect\left[C(1)-C(0)\right]\right\}$ under the condition that $\cexpect\left[C(0)\right] \leq \kappa$ and $\cexpect[C(1)-C(0)]>0$. Although $\edecision^\RD_0$ has the same interpretation as the corresponding encouragement rule in \citet{Qiu2021}, its mathematical expression is different due to the different resource constraint. This rule may be of interest if it is known \textit{a priori} that treatment is harmless. The third type, denoted by $\edecision^\TP_0$ ($\TP$=true propensity), prescribes treatment according to the true propensity of the treatment implied by the study sampling mechanism $P_0$, so that $\edecision^\TP_0$ equals  $w \mapsto P_0\left(T=1\mid W=w\right)$. This ITR may be of interest in two settings. In one setting, $\edecision^\TP_0$ satisfies the treatment resource constraint. The investigator may wish to determine the extent to which the implementation of an optimal ITR would improve upon the standard of care. In the other setting, the treatment resource constraint is newly introduced and the standard of care ITR may lead to overuse of treatment resources. The investigator may then be interested in whether the implementation of an optimal constrained ITR would result,  despite the new resource constraint, in a noninferior mean outcome.

\section{Identification of causal estimands} \label{section: identification}

In this section, we present nonparametric identification results. Though these results are similar to those for individualized encouragement rules in \citet{Qiu2021}, there are two key differences. First, the form of some of the conditions in \citet{Qiu2021} need to be modified to account for the novel resource constraint considered here. Second, two additional conditions are needed to overcome challenges that arise due to this new constraint.

We first introduce notation that will be useful when presenting our identification results and our proposed estimators. For any observed-data distribution $P$, we define pointwise the conditional mean functions $\Q^C_{P}(t,w):=\expect_{P}(C\mid T=t,W=w)$ and $\Q^Y_{P}(t,w):=\expect_{P}(Y\mid T=t,W=w)$, where we use $\expect_P$ to denote an expectation over observables alone under sampling from $P$, and their corresponding contrasts due to different treatment status, $\Delta_{P}^C(w):=\Q^C_{P}(1,w)-\Q^C_{P}(0,w)$ and $\Delta_{P}^Y(w):=\Q^Y_{P}(1,w)-\Q^Y_{P}(0,w)$. We also define the average of these contrasts conditional on $V$ as $\delta^C_{P}(v):=\expect_P[\Delta^C_P(W) \mid V=v]$ and $\delta^Y_{P}(v):=\expect_P[\Delta^Y_P(W) \mid V=v]$, and the propensity to receive treatment $\Q^T_P(w):= P\left(T=1\mid W=w\right)$.
Additionally, we define $\nu_P(t,v):=\expect_{P}\left[\expect_P\left(C\mid T=t,W\right) \mid V=v\right]$, $\phi_P := \expect_P[\Q^C_P(0,W)]$. These quantities play an important role in tackling the problem at hand. Throughout the paper, for ease of notation, if $f_P$ is a quantity or operation indexed by distribution $P$, we may denote $f_{P_0}$ by $f_0$. As an example, we may use $\Delta^Y_0$ to denote $\Delta^Y_{P_0}$.

We introduce additional causal conditions we will require, positivity and unconfoundedness. In one form or another, these conditions commonly appear in the causal inference literature \citep{VanderLaan2018}, including in the IV literature \citep{Abadie2003,Imbens1994,TchetgenTchetgen2013,Wang2018}.

\begin{assumptionCausal}[Strong positivity] \label{IV assumption: strong IV positivity}
    There exists a constant $\epsilon_T>0$ such that  $\epsilon_T<\Q^T_{0}(w) <1-\epsilon_T$ holds for $P_0$-almost every $w$.
\end{assumptionCausal}

\begin{assumptionCausal}[Unconfoundedness of treatment] \label{IV assumption: LATE confounder}
    For each $t\in\{0,1\}$, $T$ and $(C(t),Y(t))$ are conditionally independent given $W=w$ for $P_0$-almost every $w$.
\end{assumptionCausal}

Equipped with these conditions, we are able to state a theorem on the nonparametric identification of the mean counterfactual outcomes and average treatment effect (ATE) --- these results can be viewed as a corollary of the well-known G-formula \citep{Robins1986}.
\begin{theorem}[Identification of ATE and expected treatment resource expenditure] \label{theorem: identify ATE}
    Provided Conditions~\ref{IV assumption: sutva}--\ref{IV assumption: LATE confounder} are satisfied, it holds that $\cexpect[Y(t) \mid W=w]=\Q^Y_0(t,w)$, $\cexpect[Y(1) - Y(0) \mid W=w] = \Delta^Y_0(w)$ and $\cexpect[Y(1)-Y(0) \mid V=v]=\delta^Y_{0}(v)$ for $P_0$-almost every $w$ and $v$, and so,  $\cexpect[Y(\edecision) - Y(\edecision^\mathcal{R}_0)] = \expect_0[\{\edecision(V)-\edecision^\mathcal{R}_0(W)\} \Delta^Y_0(W)]$. In addition, it holds that $\cexpect[C(t) \mid W=w]=\Q^C_0(t,w)$ for $P_0$-almost every $w$, and so, $\cexpect[C(\edecision)] = \expect_0[\edecision(V) \Q^C_0(1,W) + (1-\edecision(V)) \Q^C_0(0,W)]$.
\end{theorem}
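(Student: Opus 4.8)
The plan is to reduce every claim to the standard G-formula argument, applied separately to the potential outcome $Y$ and the potential cost $C$, and then to lift the resulting pointwise identities to the rule-indexed estimands by integrating over the exogenous randomization and over the law of $W$. Throughout I would use that the marginal law of $W$ is the same under $\cexpect$ as under $P_0$, so that $\cexpect$ and $\expect_0$ agree on integrable functions of $W$. First, fix $t\in\{0,1\}$. Condition~\ref{IV assumption: strong IV positivity} guarantees $\prob_0(T=t\mid W=w)>0$ for $P_0$-almost every $w$, so $\cexpect[Y(t)\mid T=t,W=w]$ is well defined $P_0$-a.e.; Condition~\ref{IV assumption: LATE confounder} gives $T\independent Y(t)\mid W$, hence $\cexpect[Y(t)\mid W=w]=\cexpect[Y(t)\mid T=t,W=w]$ for $P_0$-almost every $w$; and on the event $\{T=t\}$ Condition~\ref{IV assumption: sutva} yields $Y(t)=Y$, so this last quantity equals $\expect_0[Y\mid T=t,W=w]=\Q^Y_0(t,w)$. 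Taking the difference of the $t=1$ and $t=0$ identities and using linearity of the conditional expectation gives $\cexpect[Y(1)-Y(0)\mid W=w]=\Delta^Y_0(w)$, and, since $V=V(W)$ is a measurable function of $W$, the tower property then gives $\cexpect[Y(1)-Y(0)\mid V=v]=\expect_0[\Delta^Y_0(W)\mid V=v]=\delta^Y_0(v)$. Replacing $Y$ by $C$ verbatim in this argument establishes $\cexpect[C(t)\mid W=w]=\Q^C_0(t,w)$.

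Next I would handle the rule-indexed quantities. Formalize the exogenous mechanism by a variable $U\sim\mathrm{Uniform}(0,1)$ independent of $(W,C(0),C(1),Y(0),Y(1))$, so that under rule $\edecision$ the realized treatment is $A_\edecision=\ind\{U\le\edecision(V)\}$ with $Y(\edecision)=A_\edecision Y(1)+(1-A_\edecision)Y(0)$ and $C(\edecision)=A_\edecision C(1)+(1-A_\edecision)C(0)$. Conditioning on $W=w$ and integrating out $U$ (which, conditionally on $W$, remains independent of the potential outcomes) gives $\cexpect[Y(\edecision)\mid W=w]=\edecision(v)\Q^Y_0(1,w)+(1-\edecision(v))\Q^Y_0(0,w)$, and likewise $\cexpect[Y(\edecision^\mathcal{R}_0)\mid W=w]=\edecision^\mathcal{R}_0(w)\Q^Y_0(1,w)+(1-\edecision^\mathcal{R}_0(w))\Q^Y_0(0,w)$; note that a reference rule such as $\edecision^\TP_0$ may depend on the full covariate $W$, which causes no difficulty here. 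Subtracting, the $\Q^Y_0(1,w)$ and $\Q^Y_0(0,w)$ contributions collapse to $\{\edecision(v)-\edecision^\mathcal{R}_0(w)\}\Delta^Y_0(w)$, and taking the outer $\expect_0$-expectation over $W$ yields $\cexpect[Y(\edecision)-Y(\edecision^\mathcal{R}_0)]=\expect_0[\{\edecision(V)-\edecision^\mathcal{R}_0(W)\}\Delta^Y_0(W)]$. The identical conditioning step for $C$ gives $\cexpect[C(\edecision)\mid W=w]=\edecision(v)\Q^C_0(1,w)+(1-\edecision(v))\Q^C_0(0,w)$, and integrating over $W$ gives $\cexpect[C(\edecision)]=\expect_0[\edecision(V)\Q^C_0(1,W)+(1-\edecision(V))\Q^C_0(0,W)]$.

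The step I expect to require the most care is not any computation but the measure-theoretic setup: one must place everything on a common probability space carrying $U$ jointly independent of $W$ and of the four potential outcomes, so that the factorizations of the conditional expectations above are legitimate, and one must record the (implicit) integrability assumptions $\expect_0|Y|<\infty$ and $\expect_0 C<\infty$, which together with the positivity bound of Condition~\ref{IV assumption: strong IV positivity} justify the tower-property and Fubini exchanges. Everything else is routine bookkeeping with linearity of (conditional) expectation, which is precisely why the statement can be read off as a corollary of the G-formula.
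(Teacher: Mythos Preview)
Your proposal is correct and follows essentially the same G-formula argument as the paper: unconfoundedness plus consistency gives $\cexpect[Y(t)\mid W]=\Q^Y_0(t,W)$, the tower property yields the $V$-conditional version, and the rule-indexed identities follow by integrating out the exogenous randomization and then over $W$. The only cosmetic difference is that the paper first collapses $Y(\edecision)-Y(\edecision^\mathcal{R}_0)$ to $\{\edecision(V)-\edecision^\mathcal{R}_0(W)\}\{Y(1)-Y(0)\}$ in expectation and then conditions on $W$, whereas you condition on $W$ first and compute each term separately; these are the same computation reordered, and your added care about the joint probability space and integrability is a harmless elaboration.
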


In view of Theorem~\ref{theorem: identify ATE}, the objective function in \eqref{eq:encouragedecision} can be identified as
\begin{align*}
    \cexpect\left[Y(\edecision)\right]= \expect_0 \left[ \edecision(V) \Q^Y_0(1,W) + (1-\edecision(V)) \Q^Y_0(0,W) \right] = \expect_0\left[\edecision(V)\Delta_0^Y(W)\right] + \expect_{0}[\Q^Y_0(0,W)]\ , %
\end{align*}
and, similarly, the expected cost is identified as $\cexpect\left[C(\edecision)\right]=\expect_0\left[\edecision(V)\Delta_0^C(W)\right] + \expect_{0}[\Q^C_0(0,W)]$. It follows that the optimization problem \eqref{eq:encouragedecision} is equivalent to
\begin{align}
    \textnormal{maximize}\hspace{0.5em} & \expect_0\left[\edecision(V) \delta_{0}^Y(V)\right] \hspace{1.5em}\text{subject to} \hspace{1.5em}  \expect_0\left[\edecision(V) \delta_{0}^C(V)\right] + \alpha \phi_0 \leq \kappa\ . \label{eq:ATEknapsack}
\end{align}
This differs from Equation~3 defining optimal individualized encouragement rules in \citet{Qiu2021}. We now present two additional conditions so that \eqref{eq:ATEknapsack} is a fractional knapsack problem \citep{Dantzig1957}, thereby allowing us to use existing results from the optimization literature. These conditions are similar to those in \citeauthor{Sun2021} \citep{Sun2021}.

\begin{assumptionCausal}[Strictly costlier treatment] \label{IV assumption: strong encouragement}
    There exists a constant $\epsilon_C>0$ such that $\Delta^C_0(w) > \epsilon_C$ holds for $P_0$-almost every $w$.
\end{assumptionCausal}

\begin{assumptionCausal}[Financial feasibility of assigning treatment] \label{IV assumption: existence of nontrivial feasible IER}
    The inequality $\alpha \phi_0 < \kappa$ holds.
\end{assumptionCausal}

Condition~\ref{IV assumption: strong encouragement} is reasonable if treatment is more expensive than control. When applied to an IV setting as outlined in Remark~\ref{remark: previous works}, this condition corresponds to the assumption that the IV is indeed an encouragement to take treatment. This condition is slightly stronger than its counterpart in \citeauthor{Sun2021} \citep{Sun2021}, which only requires that $\Delta^C_0 \geq 0$. This stronger condition is needed to ensure the asymptotic linearity of our proposed estimator in Section~\ref{section: estimators}. Under Condition~\ref{IV assumption: strong encouragement}, it is evident that Condition~\ref{IV assumption: existence of nontrivial feasible IER} is reasonable because  if $\alpha \phi_0 > \kappa$, then no ITR satisfies the treatment resource constraint in view of the fact that $\expect_0 [\edecision(V) \delta_{0}^C(V) ] \geq 0$, whereas if $\alpha \phi_0 = \kappa$, then only the trivial ITR $v \mapsto 0$ satisfies the constraint and there is no need to estimate an optimal ITR.

Under these two additional conditions, \eqref{eq:ATEknapsack} is a fractional knapsack problem \citep{Dantzig1957} in which every subgroup defined by a different value of $V$ corresponds to a different `item'. A solution in the special case in which $V(W)=W$ and $\alpha=0$ was given in Theorem~1 of \citet{Sun2021}. We now state a more general result with the following differences: (i) the treatment decision may be based on a summary $V$ rather than the entire covariate vector $W$, and (ii) $\alpha$ may take any value in $[0,1]$ rather than only zero. We also explicitly state the randomization probability at the boundary for completeness and clarity. Despite these differences, the result we obtain is similar to Theorem~1 in \citet{Sun2021}. Define pointwise $\xi_0(v) := \delta^Y_{0}(v)/\delta^C_{0}(v)$, and write $\eta_0:=\inf \{ \eta:  \expect_0 [I(\xi_0(V) > \eta) \delta^C_{0}(V) ] \leq \kappa - \alpha \phi_0 \}$ and $\tau_0 := \max\{\eta_0,0 \}$.

\begin{theorem}[Optimal ITR] \label{theorem: true optimal IER}
     Under Conditions~\ref{IV assumption: sutva}--\ref{IV assumption: existence of nontrivial feasible IER}, a solution to \eqref{eq:ATEknapsack} is explicitly given by
\begin{align*}
    \edecision_0(v) := \begin{cases}
        \ \frac{\kappa - \alpha \phi_0 - \expect_0\left[I(\xi_0(V) > \tau_0) \delta^C_{0}(V)\right]}{\expect_0\left[I(\xi_0(V) = \tau_0) \delta^C_{0}(V)\right]} &:\ \text{ if } \tau_0>0,\ \xi_0(v)=\tau_0\textnormal{ and }\expect_0\left[I(\xi_0(V) = \tau_0) \delta^C_{0}(V)\right] > 0 \\
        \ I\left(\xi_0(v)>\tau_0\right) &:\ \text{ otherwise\ .}
    \end{cases}
\end{align*}
Here, the first case is the boundary case with the randomization probability that saturates the treatment resource.
\end{theorem}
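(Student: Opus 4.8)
The plan is to recognize \eqref{eq:ATEknapsack} as a fractional knapsack problem \citep{Dantzig1957} and to solve it by a Lagrangian weak-duality argument with the scalar multiplier $\tau_0$, mirroring the proof of Theorem~1 in \citet{Sun2021}. First I would record two consequences of the conditions. By Condition~\ref{IV assumption: strong encouragement}, $\delta^C_0(v)=\expect_0[\Delta^C_0(W)\mid V=v]>\epsilon_C>0$ for $P_0$-almost every $v$, so $\xi_0=\delta^Y_0/\delta^C_0$ is well defined and, for any $\lambda\in\real$, the sign of $\delta^Y_0(v)-\lambda\,\delta^C_0(v)=\delta^C_0(v)\{\xi_0(v)-\lambda\}$ equals that of $\xi_0(v)-\lambda$; and by Condition~\ref{IV assumption: existence of nontrivial feasible IER}, the available budget $\kappa-\alpha\phi_0$ is strictly positive. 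Consequently, writing $\mathcal{L}(\edecision):=\expect_0[\edecision(V)\delta^Y_0(V)]+\tau_0\{\kappa-\alpha\phi_0-\expect_0[\edecision(V)\delta^C_0(V)]\}$, the integrand of $\mathcal{L}$ is $\edecision(v)\,\delta^C_0(v)\{\xi_0(v)-\tau_0\}$ up to an additive constant, so, using $\tau_0\ge0$, $\mathcal{L}$ is maximized over all measurable $\edecision:\mathcal{V}\to[0,1]$ by any rule equal to $1$ on $\{\xi_0>\tau_0\}$ and to $0$ on $\{\xi_0<\tau_0\}$, its value on $\{\xi_0=\tau_0\}$ being immaterial to $\mathcal{L}$.

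The crux is to pick that value on $\{\xi_0=\tau_0\}$ so that the resulting rule is feasible in \eqref{eq:ATEknapsack} and satisfies complementary slackness. To this end I would analyze $g(\eta):=\expect_0[I(\xi_0(V)>\eta)\,\delta^C_0(V)]$, which is non-increasing and, by monotone convergence as $\eta$ decreases (so that $\{\xi_0>\eta\}\uparrow\{\xi_0>\eta_0\}$), right-continuous; hence $\{\eta:g(\eta)\le\kappa-\alpha\phi_0\}$ is a closed half-line (or all of $\real$), and when $\eta_0$ is finite the infimum defining it is attained, so $g(\eta_0)\le\kappa-\alpha\phi_0$ while $g(\eta)>\kappa-\alpha\phi_0$ for every $\eta<\eta_0$. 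Letting $\eta\uparrow\eta_0$ (so that $\{\xi_0>\eta\}\downarrow\{\xi_0\ge\eta_0\}$) yields the matching bound $\expect_0[I(\xi_0(V)\ge\eta_0)\,\delta^C_0(V)]\ge\kappa-\alpha\phi_0$. With $\tau_0=\max\{\eta_0,0\}$, these two inequalities show that the displayed boundary fraction has numerator $\kappa-\alpha\phi_0-g(\tau_0)$ lying in $[0,\expect_0[I(\xi_0(V)=\tau_0)\,\delta^C_0(V)]]$, so $\edecision_0$ does map into $[0,1]$ and is a maximizer of $\mathcal{L}$; and a short case analysis (namely: $\tau_0=0$ is trivial; for $\tau_0>0$ with positive mass at $\tau_0$ substitute directly; for $\tau_0>0$ with zero mass at $\tau_0$ pinch $g(\tau_0)$ between its left limit and $\kappa-\alpha\phi_0$) gives $\tau_0\{\kappa-\alpha\phi_0-\expect_0[\edecision_0(V)\delta^C_0(V)]\}=0$ and, in particular, feasibility of $\edecision_0$. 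This critical-ratio bookkeeping at $\eta_0$ is the step I expect to be the main obstacle; the rest is routine.

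The argument then closes by weak duality: for any $\edecision$ feasible in \eqref{eq:ATEknapsack}, since $\tau_0\ge0$ and $\kappa-\alpha\phi_0-\expect_0[\edecision(V)\delta^C_0(V)]\ge0$,
\begin{align*}
    \expect_0[\edecision(V)\delta^Y_0(V)]\ \le\ \mathcal{L}(\edecision)\ \le\ \mathcal{L}(\edecision_0)\ =\ \expect_0[\edecision_0(V)\delta^Y_0(V)],
\end{align*}
where the last equality uses complementary slackness. Hence $\edecision_0$ attains the maximum in \eqref{eq:ATEknapsack}, which is the claim.
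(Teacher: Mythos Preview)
Your proposal is correct and is essentially the same argument as the paper's: the paper directly computes $\expect_0[(\edecision_0-\edecision)\delta^Y_0]\ge\tau_0\,\expect_0[(\edecision_0-\edecision)\delta^C_0]$ by splitting on $\{\xi_0\gtrless\tau_0\}$ and then uses constraint saturation when $\tau_0>0$, which is precisely your Lagrangian maximality plus complementary slackness without the duality vocabulary. If anything, your version is slightly more thorough in that you explicitly verify, via the right-continuity analysis of $g$, that the boundary fraction lies in $[0,1]$ and that complementary slackness holds; the paper leaves these implicit in the construction of $\edecision_0$.
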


We also note that the reference ITRs introduced in Section~\ref{section: basic assumption} are also identified under the above conditions. In particular, it can be shown that $\edecision^\RD_0(v):= \min\{1,(\kappa-\alpha \phi_0)/\expect_0[\Delta^C_0(W)]\}$ and $\edecision^\TP_0=\Q^T_0$.

\section{Estimating and evaluating optimal individualized treatment rules} \label{section: estimators}

In this section, we present an estimator of an optimal ITR $\edecision_0$ and an inferential procedure for its ATE relative to a reference ITR $\edecision^\mathcal{R}_0$, where $\mathcal{R}$ is any of $\FR$, $\RD$ or $\TP$. The proposed procedure is an adaptation of the method first proposed in \citet{Qiu2021,Qiu2021correction}.

We begin by introducing some notations that are useful for defining the estimands. We define the parameter $\Psi_{\edecision}(P):=\expect_P\left[\edecision(V)\Delta^Y_P(W)\right]$ or $\Psi_{\edecision}(P):=\expect_P\left[\edecision(W)\Delta^Y_P(W)\right]$ for each ITR $\edecision$ and distribution $P\in\mathscr{M}$, depending on whether the domain of $\edecision$ is $\mathcal{V}$ or $\mathcal{W}$. Here, we consider the model $\mathscr{M}$ to be locally nonparametric at $P_0$ \citep{Pfanzagl1990}. For $P\in\mathscr{M}$, the ATE of an optimal ITR $\edecision_P$ relative to a reference ITR $\edecision^\mathcal{R}_P$ equals $\Psi_{\mathcal{R}}(P):=\Psi_{\edecision_P}(P) - \Psi_{\edecision^\mathcal{R}_P}(P)$. We are interested in making inference about $\psi_{0}:=\Psi_{\mathcal{R}}(P_0)$, where we have suppressed dependence on $\mathcal{R}$ from our shorthand notation.

\subsection{Pathwise differentiability of the ATE}

We first present a result regarding the pathwise differentiability of the ATE. Pathwise differentiability of the parameter of interest serves as the foundation for constructing asymptotically efficient estimators of this parameter, based on which an inferential procedure may be developed. Additional technical conditions are required and are provided in Section~\ref{section: additional technical conditions} in the Supplementary Material.
For a distribution $P \in \mathscr{M}$, a function $\Q^C: \{0,1\} \times \mathcal{W} \rightarrow \real$, an ITR $\edecision$, and a decision threshold $\tau \in \real$, we define pointwise the following functions:
\begin{align}
\begin{split}
    D(P,\edecision,\tau,\Q^C)(o)\ &:=\ \edecision(v)\left[\frac{y-\Q^Y_P(t,w)}{t + \Q^T_P(w) - 1} +  \Delta^Y_P(w)\right] - \Psi_{e}(P) \\
    &\hspace{.25in}- \tau \left\{ \edecision(v) \left[ \frac{c - \Q^C(t,w)}{t + \Q^T_P(w) - 1} + \Delta^C(w) \right] + \alpha \left[ \frac{(1-t)(c - \Q^C(0,w))}{1-\Q^T_P(w)} + \Q^C(0,w) \right] - \kappa \right\}; \\
    G(P)(o)\ &:= D(P,\edecision_P,\tau_P,\Q^C_P) (o) \ ; \\
    D_1(P,\Q^C)(o)\ &:=\ \frac{(1-t)(c - \Q^C(0,w))}{1-\Q^T_P(w)} + \Q^C(0,w) - \expect_P \left[ \Q^C(0,W) \right]\ ; \\
    D_2(P,\Q^C)(o)\ &:=\ \frac{c - \Q^C(t,w)}{t+\Q^T_P(w)-1} + \Delta^C(w) - \expect_P \left[ \Delta^C(W) \right]\ ; \\
    G_\RD(P)(o)\ &:=\ D(P,\edecision^\RD_P,0,\Q^C_P)(o) - \frac{\alpha \Psi_{\edecision^\RD_P}(P) D_1(P,\Q^C_P)}{\kappa - \phi_P}- \frac{\Psi_{\edecision^\RD_P}(P) D_2(P,\Q^C_P)}{\expect_P\left[\Delta_P^C(W)\right]}\ ;\\
    G_\TP(P)(o)\ &:=\ \frac{\Q^T_P(w)}{t+\Q^T_P(w)-1} \left[y-\Q^Y_P(t,w)\right] + t \Delta^Y_P(w) - \Psi_{\edecision^\TP_P}(P)\ ; \\
    G_\FR(P)(o)\ &:=D(P,\edecision^\FR,0,\Q^C_P) (o) \ .
\end{split} \label{eq: gradients}
\end{align}

One key condition we rely on is the following non-exceptional law assumption.
\begin{assumptionEncourage}[Non-exceptional law] \label{e RC assumption: non-exceptional law}
    $P_0(\xi_0(V)=\tau_0)=0$.
\end{assumptionEncourage}
Under this condition, the true optimal ITR $\rho_0$ is identical to an indicator function. If all covariates are discrete, then we can plug in the empirical estimates into the identification formulae in Theorems~\ref{theorem: identify ATE}--\ref{theorem: true optimal IER} and show that the resulting estimators of the ATE are asymptotically normal by the delta method even when Condition~\ref{e RC assumption: non-exceptional law} does not hold. We do not further pursue this simple case in this paper, and thus need to rely on the non-exceptional law assumption, namely Condition~\ref{e RC assumption: non-exceptional law}, to account for continuous covariates. We list additional technical conditions in Supplement~\ref{section: additional technical conditions}.

We can now provide a formal result describing the pathwise differentiability of the ATE parameter.

\begin{theorem}[Pathwise differentiability of the ATE] \label{e RC theorem: differentiability}
    Let $\mathcal{R} \in \{\FR,\RD,\TP\}$. Provided Conditions~\ref{IV assumption: sutva}--\ref{IV assumption: existence of nontrivial feasible IER} and \ref{e RC assumption: non-exceptional law}--\ref{e RC assumption2: active constraint} are satisfied, the parameters $P \mapsto \Psi_{\edecision_P}(P)$ and $P \mapsto \Psi_{\edecision^\mathcal{R}_P}(P)$ are pathwise differentiable at $P_0$ relative to $\mathscr{M}$ with canonical gradients $G(P_0)$ and $G_\mathcal{R}(P_0)$, respectively.
\end{theorem}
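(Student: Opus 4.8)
## Proof proposal

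The plan is to establish pathwise differentiability by exhibiting, for each one-dimensional regular parametric submodel $\{P_\epsilon\} \subseteq \mathscr{M}$ passing through $P_0$ with score $s$ at $\epsilon = 0$, that $\frac{d}{d\epsilon}\Psi_{\edecision_{P_\epsilon}}(P_\epsilon)\big|_{\epsilon=0} = \expect_0[G(P_0)(O)\, s(O)]$ and similarly for the reference parameter with gradient $G_\mathcal{R}(P_0)$; since each candidate gradient in \eqref{eq: gradients} is a square-integrable, mean-zero function of $O$ (which should be checked using the positivity Conditions~\ref{IV assumption: strong IV positivity} and boundedness assumptions in the supplement), and since $\mathscr{M}$ is locally nonparametric at $P_0$, each such gradient is automatically the canonical gradient once the derivative identity is verified. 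I would first record the decomposition $\Psi_{\edecision_P}(P) = \expect_P[\edecision_P(V)\Delta^Y_P(W)]$ and treat this as a functional of three ingredients: the data-generating distribution $P$ entering through $\Delta^Y_P$, $\Q^T_P$, the marginal of $W$, and so on; and the \emph{implicitly defined} optimization objects $\tau_P$ (equivalently $\eta_P$) and $\edecision_P$. The chain rule then splits the derivative into a ``direct'' contribution from differentiating $\Psi_{\edecision}(P)$ holding $\edecision = \edecision_0$ fixed, plus a contribution from perturbing the rule $\edecision_P$ itself.

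The crux is showing that the second contribution vanishes. Here the non-exceptional law Condition~\ref{e RC assumption: non-exceptional law} does the work: because $P_0(\xi_0(V) = \tau_0) = 0$, the optimal rule $\edecision_0$ equals the indicator $I(\xi_0(V) > \tau_0)$, and the objective $\expect_0[\edecision(V)\delta^Y_0(V)]$ as a function of the rule is flat to first order at the optimum along feasible perturbations --- the usual envelope-theorem phenomenon for fractional knapsack solutions, where the ``fractional'' boundary item has zero mass. Concretely, I would argue that for small $\epsilon$ the threshold $\tau_{P_\epsilon}$ is differentiable in $\epsilon$ (this is where Condition~\ref{e RC assumption2: active constraint} and the supplementary regularity on the distribution of $\xi_0(V)$ near $\tau_0$ enter, guaranteeing the defining equation for $\eta_P$ has a nondegenerate derivative), and that the change in $\edecision_{P_\epsilon}$ is supported on a set of $V$-values with $\xi_0(V)$ within $O(\epsilon)$ of $\tau_0$, which has $P_0$-measure $o(1)$ and contributes $o(\epsilon)$ to the objective after accounting for the fact that $\delta^Y_0 = \tau_0 \delta^C_0$ on that boundary set. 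The Lagrangian reformulation is the clean way to see this: $\Psi_{\edecision_0}(P) = \max_{\edecision}\big\{\expect_P[\edecision(V)\delta^Y_P(V)] - \tau\,(\expect_P[\edecision(V)\delta^C_P(V)] + \alpha\phi_P - \kappa)\big\}$ at the optimal multiplier $\tau = \tau_P$, and differentiating a max with respect to a parameter that also enters the constraint produces exactly the $-\tau_0\{\cdots - \kappa\}$ correction term and the cost-model gradient terms visible in $D$.

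Having disposed of the rule-perturbation term, the remaining computation is the pathwise derivative of $\expect_P[\edecision_0(V)\Delta^Y_P(W)]$ and of $\expect_P[\edecision_0(V)\delta^C_P(V)] + \alpha\phi_P$, holding $\edecision_0$ fixed. This is a routine efficiency-theory calculation: the influence function of $\expect_P[\edecision_0(V)\Delta^Y_P(W)]$ under Conditions~\ref{IV assumption: strong IV positivity}--\ref{IV assumption: LATE confounder} is the familiar AIPW-type object $\edecision_0(v)\big[\frac{y - \Q^Y_P(t,w)}{t + \Q^T_P(w) - 1} + \Delta^Y_P(w)\big] - \Psi_{\edecision_0}(P)$ (the denominator $t + \Q^T_P(w) - 1$ equals $\Q^T_P(w)$ when $t=1$ and $-(1-\Q^T_P(w))$ when $t=0$, reproducing the standard propensity weights with sign), and similarly for the cost pieces $D_1$ and $D_2$; assembling these with the Lagrange multiplier $\tau_0$ gives $G(P_0) = D(P_0, \edecision_0, \tau_0, \Q^C_0)$. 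For the three reference parameters I would specialize: $G_\FR$ is immediate since $\edecision^\FR$ is fixed (no rule-perturbation term, no constraint differentiation, hence $D(P,\edecision^\FR,0,\Q^C_P)$); $G_\TP$ follows from plugging $\edecision^\TP_P = \Q^T_P$ into $\expect_P[\Q^T_P(W)\Delta^Y_P(W)]$ and differentiating, noting the extra score contribution from $\Q^T_P$ appearing in the rule --- this is why $G_\TP$ has the distinctive form with $t\Delta^Y_P(w)$ rather than $\edecision(v)\Delta^Y_P(w)$; and $G_\RD$ requires applying the delta method through the closed form $\edecision^\RD_P = \min\{1, (\kappa - \alpha\phi_P)/\expect_P[\Delta^C_P(W)]\}$, whose dependence on $\phi_P$ and $\expect_P[\Delta^C_P(W)]$ generates exactly the two correction terms $-\frac{\alpha\Psi_{\edecision^\RD_P}(P)D_1}{\kappa - \phi_P}$ and $-\frac{\Psi_{\edecision^\RD_P}(P)D_2}{\expect_P[\Delta^C_P(W)]}$ (here Condition~\ref{IV assumption: strong encouragement} ensures the denominators are bounded away from zero so the min is smoothly differentiable, at least away from the kink, and Condition~\ref{IV assumption: existence of nontrivial feasible IER} handles $\kappa - \phi_P > 0$). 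The main obstacle, as flagged above, is the rigorous treatment of the implicitly-defined threshold $\tau_P$ and the verification that the rule-perturbation contribution is genuinely $o(\epsilon)$ rather than merely heuristically flat; this is precisely the step that needs the non-exceptional law Condition~\ref{e RC assumption: non-exceptional law} together with the supplementary smoothness conditions on the law of $\xi_0(V)$ in a neighborhood of $\tau_0$, and it is essentially the same argument as in \citet{Qiu2021} adapted to the modified constraint, so I would mirror that structure.
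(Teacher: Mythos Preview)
Your proposal is correct and follows essentially the same approach as the paper: the paper carries out the argument via an explicit six-term decomposition of $\Psi_{\edecision_\epsilon}(P_\epsilon) - \Psi_{\edecision_0}(P_0)$ together with two lemmas establishing $\tau_\epsilon = \tau_0 + O(\epsilon)$, which is precisely the unpacked form of your Lagrangian/envelope-theorem framing, and the three reference-rule gradients are derived exactly as you outline. One minor correction on condition bookkeeping: the nondegeneracy you need for the implicit-function step on $\eta_P$ comes from Conditions~\ref{e RC assumption: continuous density}--\ref{e RC assumption: continuous weight} rather than Condition~\ref{e RC assumption2: active constraint}, the latter being specific to the $\RD$ reference and serving only to keep you away from the kink in the $\min$.
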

We note that the pathwise differentiability of $P \mapsto \Psi_{\edecision^\mathcal{R}_P}(P)$ was established in Theorem~3 of \citet{Qiu2021} for $\mathcal{R} \in \{\FR,\TP\}$. The other results can be proven using similar techniques. We put the proof of these results in Supplement~\ref{section: proof differentiability}. In view of Theorem~\ref{e RC theorem: differentiability}, it follows that the ATE parameter $\Psi_{\mathcal{R}}$ is pathwise differentiable at $P_0$ with nonparametric canonical gradient
\begin{equation}
    D_\mathcal{R}(P_0) := G(P_0) - G_\mathcal{R}(P_0) \label{eq: gradients2}
\end{equation}
for $\mathcal{R} \in \{\FR,\RD,\TP\}$.

\begin{remark} \label{remark: additional term in gradient from knapsack}
We have noted similar additional terms related to the resource being used in the canonical gradient of the mean counterfactual outcome or ATE of optimal ITRs under resource constraints, for example, in \citet{Luedtke2016} and \citet{Qiu2021}. In our problem, this additional term is
$$- \tau_0 \left\{ \edecision_0(v) \left[ \frac{c - \Q^C_0(t,w)}{t + \Q^T_0(w) - 1} + \Delta^C_0(w) \right] + \alpha \left[ \frac{(1-t)(c - \Q^C_0(0,w))}{1-\Q^T_0(w)} + \Q^C_0(0,w) \right] - \kappa \right\}.$$
Such terms appear to come from solving a fractional knapsack problem with truncation at zero and take the form of a product of (i) the threshold in the solution, and (ii) a term that equals the influence function of the resource being used under the solution when the resource is saturated. We conjecture that such structures generally exist for fractional knapsack problems.
\end{remark}

\subsection{Proposed estimator and asymptotic linearity} \label{section: procedure}

We next present our proposed nonparametric procedure for estimating an optimal ITR $\edecision_0$ and the corresponding ATE $\psi_0$. We will generally use subscript $n$ to denote an estimator with sample size $n$, and add a hat to a nuisance function estimator that is targeted toward estimating $\phi_0$.

\begin{enumerate}
    \item Use the empirical distribution $\hat{P}_{W,n}$ of $W$ as an estimate of the true marginal distribution of $W$. Compute estimates $\Q^Y_n$, $\Q^C_n$, $\Q^T_n$, $\delta^Y_{n}$ and $\delta^C_{n}$ of $\Q^Y_0$, $\Q^C_0$, $\Q^T_0$, $\delta^Y_{0}$ and $\delta^C_{0}$, respectively, using flexible regression methods. Recall that $\Q^Y_0(t,w)=\expect_0[Y \mid T=t,W=w]$, $\Q^C_0(t,w) = \expect_0[C \mid T=t,W=w]$, $\delta^Y_{0}(v)=\expect_0[\Q^Y_0(1,W)-\Q^Y_0(0,W) \mid V=v]$, and $\delta^C_{0}(v)=\expect_0[\Q^C_0(1,W)-\Q^C_0(0,W) \mid V=v]$. Define pointwise $\Delta^C_n(w):=\Q^C_n(1,w)-\Q^C_n(0,w)$.
    
    \item Estimate an optimal ITR:
    \begin{enumerate}
        \item Estimate $\phi_0 =\expect_0[\Q^C_0(0,W)]$ with a one-step correction estimator
        $$\phi_n := \frac{1}{n} \sum_{i=1}^n \left[ \Q^C_n(0,W_i) + \frac{(1-T_i)(C_i - \Q^C_n(0,W_i))}{1-\Q^T_n(W_i)} \right].$$
        \item Let $\xi_n:=\delta^Y_{n}/\delta^C_{n}$, $\Gamma_n : \tau \mapsto \frac{1}{n}\sum_{i : \xi_n(V_i)>\tau} \Delta^C_n(W_i)$ and $\gamma_n : \tau\mapsto \frac{1}{n}\sum_{i : \xi_n(V_i)=\tau} \Delta^C_n(W_i)$. For any $k \in [0,\infty]$, define $\eta_n(k) := \inf \left\{\tau: \Gamma_n(\tau) \leq k - \alpha \phi_n \right\}$, $\tau_n(k) := \max\left\{\eta_n(k),0\right\}$ and
        $$d_{n,k}: v \mapsto \begin{cases}
            \ \frac{k - \alpha \phi_n - \Gamma_n(\eta_n(k))}{ \gamma_n(\eta_n(k))} &:\ \text{ if } \xi_n(v)=\eta_n(k) \textnormal{ and } \gamma_n(\eta_n(k)) > 0\ , \\
            \ I\{\xi_n(v)>\eta_n(k)\} &:\ \text{ otherwise.}
        \end{cases}$$
        The rule $d_{n,k}$ is the sample analog of an ITR that prescribes treatment to those with the highest values of $\xi_0(V)$, regardless of whether treatment is harmful or not, until treatment resources run out.
        \item Compute $k_n$, which is used to define an estimate of $\edecision_0$ for which the plug-in estimator is asymptotically linear under conditions, as follows:
        \begin{itemize}
            \item if $\tau_n (\kappa) > 0$ \textit{and} there is a solution in $k\in[0,\infty)$ to
            \begin{equation} \label{e RC equation: update quantile}
                \frac{1}{n}\sum_{i=1}^n  d_{n,k}(V_i) \left[ \Delta^C_n(W_i) + \frac{C_i - \Q^C_n(T_i,W_i)}{t_i + \Q^T_n(W_i) - 1} \right] + \alpha \phi_n = \kappa \ ,
            \end{equation}
            then take $k_n$ to be this solution;
            \item otherwise, set $k_n=\kappa$.
        \end{itemize}
        \item Estimate $\edecision_0$ using the sample analog of $\edecision_0$ with treatment resource constraint $k_n$, namely
        $$\edecision_n: v \mapsto \begin{cases}
            \ \frac{k_n - \alpha \phi_n - \Gamma_n(\tau_n(k_n))}{\gamma_n(\tau_n(k_n))} &:\ \text{ if } \xi_n(v)=\tau_n(k_n) \textnormal{ and } \gamma_n(\tau_n(k_n))>0 \\
            \ I\{\xi_n(v)>\tau_n(k_n)\} &:\ \text{ otherwise.}
        \end{cases}$$
    \end{enumerate}
    
    \item Obtain an estimate $\edecision^\mathcal{R}_n$ of the reference ITR $\edecision^\mathcal{R}_0$ as follows:
    \begin{itemize}
        \item For $\mathcal{R}=\FR$, take $\edecision^\mathcal{R}_n$ to be $\edecision^\FR$.
        \item For $\mathcal{R}=\RD$,
        \begin{enumerate}
            \item \label{ie:targQA} obtain a targeted estimate $\hat{\Q}^C_n(1,\cdot)$ of $\Q^C_0(1,\cdot)$: run an ordinary least-squares linear regression with outcome $C$, covariate $1/(T+\Q^T_n(W)-1)$, offset $\Q^C_n(T,W)$ and no intercept. Take $\hat{\Q}^C_n$ to be the fitted mean model;
            \item take $\edecision^\mathcal{R}_n$ to be the constant function $w \mapsto \min \left\{1,(\kappa - \phi_n)/\tfrac{1}{n}\sum_{i=1}^{n} \hat{\Delta}^C_n(W_i)\right\}$, where we define pointwise $\hat{\Delta}^C_n(w):=\hat{\Q}^C_n(1,w)-\hat{\Q}^C_n(0,w)$.
        \end{enumerate}
        \item For $\mathcal{R}=\TP$,
        take $\edecision^\mathcal{R}_n$ to be $\Q^T_n$.
    \end{itemize}

    \item Estimate ATE of $\edecision_0$ relative to the reference ITR $\edecision^\mathcal{R}_0$ with a targeted minimum-loss based estimator (TMLE) $\psi_n$:
    \begin{enumerate}
        \item \label{ie:targQY} obtain a targeted estimate $\hat{\Q}^Y_n$ of $\Q^Y_0$: run an ordinary least-squares linear regression with outcome $Y$, covariate $[\edecision_n(V)-\edecision^\mathcal{R}_n(W)]/[T + \Q^T_n(W) - 1]$, offset $\Q^Y_n(T,W)$ and no intercept. Take $\hat{\Q}^Y_n$ to be the fitted mean function.
        
        \item with $\hat{P}_n$ being any distribution with components $\hat{\Q}^Y_n$ and $\hat{P}_{W,n}$, take 
        $$\psi_n:=\Psi_{\edecision_n}(\hat{P}_n)-\Psi_{\edecision^\mathcal{R}_n}(\hat{P}_n) = \frac{1}{n} \sum_{i=1}^n [\edecision_n(V_i) - \edecision^\mathcal{R}_{n,i}] [\hat{\Q}^Y_n(1,W_i) - \hat{\Q}^Y_n(1,W_i)]\ ,$$
        where $\edecision^\mathcal{R}_{n,i}$ is defined as $\edecision^\mathcal{R}_n(W_i)$ or $\edecision^\mathcal{R}_n(V_i)$ depending on the covariate used by the reference ITR.
    \end{enumerate}
\end{enumerate}

The above procedure is similar to that proposed in \citet{Qiu2021}. One key difference is the use of the refined estimator $k_n$ of $\kappa$ obtained via the estimating equation \eqref{e RC equation: update quantile}, which is key to ensuring the asymptotic linearity of  $\psi_n$. Another difference is that the denominator of $\xi_n$ is now $\delta^C_{n}$, which is consistent with our different definition of the unit value for solving the fractional knapsack problem \eqref{eq:ATEknapsack}. Similarly to TMLE for other problems, when $C$ or $Y$ has known bounds (e.g., the closed interval $[0,1]$), to obtain a corresponding targeted estimate that respect the known bounds, we may use logistic regression rather than ordinary least-squares \citep{Gruber2010}.

The above procedure has both similarities and substantial differences compared to the estimation procedure proposed by \citet{Sun2021}. The main difference is that our procedure is targeted towards efficient estimation of and inference about the ATE of $\psi_0$ of the optimal ITR under a nonparametric model, while \citet{Sun2021} focus on estimating the optimal ITR $\edecision_0$ and does not evaluate this optimal ITR. This leads to a key difference between the two procedures when estimating the optimal ITR: we need to solve an estimating equation \eqref{e RC equation: update quantile}, which is crucial to ensuring that the estimator $\psi_n$ is asymptotically linear, while \citet{Sun2021} do not. The requirement of solving \eqref{e RC equation: update quantile} is related to the nature of the fractional knapsack problem discussed in Remark~\ref{remark: additional term in gradient from knapsack}, and we conjecture that such a calibration on the resource used is necessary for general problems of the same nature. Our procedure is also related to the method in \citet{Sun2021EWM}. \citet{Sun2021EWM} rely on the availability of asymptotically normal estimators of both the average benefit and average resource used (Assumption~2.4), a nontrivial requirement when the propensity score $\Q^T_0$ is unknown in observational studies. Our procedure essentially produces such estimators: in Step~4, an asymptotically normal estimator of the ATE is constructed, whereas an asymptotically normal estimator of the expected resource is produced in Step~2 and used to calibrate the resource expenditure of the estimated optimal ITR $\edecision_n$ in Step~2(c).

\begin{remark}
In Step~1 of the above procedure, we estimate the functions $\delta^Y_{0}$ and $\delta^C_{0}$ using a na\"ive approach based on outcome regression. It is viable to use more advanced techniques such as the doubly robust methods in \citet{VanderLaan2006}, \citet{VanderLaan2014,Luedtke2016pseudooutcome}, and \citet{Kennedy2020} or R-learning as in \citet{Nie2021}. These methods were developed for conditional average treatment effect estimation and might lead to better estimators of $\delta^Y_{0}$ and $\delta^C_{0}$. It is also possible to develop multiply robust methods to estimate $\xi_0$ using influence function techniques. Such methods to estimate $\xi_0$ are beyond the scope of our paper, whose main focus is on the inference for the ATE. Our theoretical analysis of the estimator only applies to na\"ive estimators based on outcome regression, but we expect only minor modifications to be required to study these more advanced estimators once their asymptotic behavior is characterized.
\end{remark}

\begin{remark}
In Step~2(a), it is also viable to use other efficient estimators of $\phi_0$, for example, a targeted minimum loss-based estimator (TMLE). We note that estimating $\phi_0$ is only one component of estimating the optimal ITR $\edecision_0$. Methods such as TMLE can be preferable to ensure that the estimator respects  known bounds on the estimand. However, in our case, such an improvement in estimating $\phi_0$ does not necessarily lead to an improvement in the estimation of $\edecision_0$.
\end{remark}

We now present results on the asymptotic linearity and efficiency of our proposed estimator.
We state and discuss the technical conditions required by the theorem below in Supplement~\ref{section: additional technical conditions}.

\begin{theorem}[Asymptotic linearity of ATE estimator] \label{e RC theorem: asymptotic linearity}
    Let $\mathcal{R} \in \{\FR,\RD,\TP\}$. Under Conditions~\ref{e RC assumption: non-exceptional law}--\ref{e RC assumption: Glivenko-Cantelli}, with the canonical gradient $D_\mathcal{R}(P_0)$ defined in \eqref{eq: gradients} and \eqref{eq: gradients2}, it holds that
    $$\psi_n - \psi_0 = \frac{1}{n} \sum_{i=1}^n D_\mathcal{R}(P_0)(O_i) + \smallo_p(n^{-1/2})\ .$$
    Therefore, $\sqrt{n}\left(\psi_n - \psi_0\right) \overset{d}{\longrightarrow} \textnormal{N}\left(0, \sigma_0^2 \right)$, where $\sigma_0^2 := \expect_0 \left[ D_\mathcal{R}(P_0)(O)^2 \right]$. Since $\psi_n$ is asymptotically linear with influence function equal to the canonical gradient, $\psi_n$ is also asymptotically efficient.
\end{theorem}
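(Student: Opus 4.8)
The plan is to establish the asymptotic linearity of $\psi_n$ by showing that the plug-in estimator, after targeting, satisfies the efficient estimating equation up to $\smallo_p(n^{-1/2})$, and then invoking the standard argument that a pathwise differentiable parameter evaluated at a distribution solving its efficient estimating equation is asymptotically linear with influence function equal to the canonical gradient. Since Theorem~\ref{e RC theorem: differentiability} already provides the canonical gradient $D_\mathcal{R}(P_0) = G(P_0) - G_\mathcal{R}(P_0)$, the core of the work is a von Mises / first-order expansion of $\psi_n - \psi_0$ together with control of the remainder terms. I would write $\psi_n - \psi_0 = \Psi_\mathcal{R}(\hat P_n) - \Psi_\mathcal{R}(P_0)$ and expand this as the empirical mean of $D_\mathcal{R}$ evaluated at an intermediate distribution plus a second-order remainder; the targeting steps (Steps~2(c), 3, and 4(a), i.e. the OLS updates producing $\hat\Q^Y_n$, $\hat\Q^C_n$, and the calibrated $k_n$ via \eqref{e RC equation: update quantile}) are designed precisely so that the relevant empirical-mean-of-gradient terms vanish or reduce to $\mathbb{P}_n D_\mathcal{R}(P_0) + \smallo_p(n^{-1/2})$.

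The key steps, in order, are as follows. First, decompose $\psi_n - \psi_0$ into (i) a term that the TMLE targeting in Step~4(a) forces to zero by construction (the score equation for the submodel through $\hat\Q^Y_n$), (ii) an empirical process term $(\mathbb{P}_n - P_0)\{D_\mathcal{R}(\hat P_n) - D_\mathcal{R}(P_0)\}$, and (iii) a second-order "remainder" term $R_\mathcal{R}(\hat P_n, P_0)$ collecting products of nuisance estimation errors. Second, handle (ii) using the Glivenko--Cantelli / Donsker-type conditions (Condition~\ref{e RC assumption: Glivenko-Cantelli} and the technical conditions in Supplement~\ref{section: additional technical conditions}) together with consistency of the nuisance estimators $\Q^Y_n, \Q^C_n, \Q^T_n, \delta^Y_n, \delta^C_n, \edecision_n, \edecision^\mathcal{R}_n$; the non-exceptional law assumption (Condition~\ref{e RC assumption: non-exceptional law}) ensures $\edecision_0$ is an indicator so that $\edecision_n$ converges to it in $L^2(P_0)$ without a degenerate-density obstruction, which is what keeps $D_\mathcal{R}(\hat P_n)$ in a Donsker class and makes (ii) $\smallo_p(n^{-1/2})$. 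Third, show $R_\mathcal{R}(\hat P_n, P_0) = \smallo_p(n^{-1/2})$: this is where the product-rate (doubly-robust-style) structure enters — the remainder is a sum of terms each bounded by a product of two nuisance $L^2$ errors (e.g. $\|\Q^T_n - \Q^T_0\| \cdot \|\Q^Y_n - \Q^Y_0\|$, $\|\Q^T_n - \Q^T_0\|\cdot\|\Q^C_n - \Q^C_0\|$, and cross terms involving $\|\delta^C_n - \delta^C_0\|$ and $\|\edecision_n - \edecision_0\|$), plus a term arising from the knapsack threshold $\tau_n$ vs.\ $\tau_0$ and the calibration constant $k_n$ vs.\ $\kappa$. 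Fourth, assemble: the calibration equation \eqref{e RC equation: update quantile} is what removes the first-order contribution of the additional "resource" term in the gradient (the $\tau_0\{\cdots\}$ term highlighted in Remark~\ref{remark: additional term in gradient from knapsack}), so that after all three pieces are controlled one obtains $\psi_n - \psi_0 = \mathbb{P}_n D_\mathcal{R}(P_0) + \smallo_p(n^{-1/2})$. The CLT and the identification of the influence function with the canonical gradient (hence efficiency, by the convolution theorem) then follow immediately.

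The main obstacle I expect is controlling the remainder contribution coming from estimation of the optimal rule $\edecision_n$ and the threshold $\tau_n$ — specifically, showing that replacing $(\edecision_0,\tau_0,\kappa)$ by $(\edecision_n,\tau_n,k_n)$ in the gradient contributes only $\smallo_p(n^{-1/2})$. This is delicate because $\edecision_n$ depends on $\xi_n = \delta^Y_n/\delta^C_n$ through an indicator-type thresholding operation, so small perturbations in $\xi_n$ near the threshold $\tau_0$ can move $\edecision_n$ by $O(1)$ on a small set; the non-exceptional law condition plus a margin/continuity condition on the distribution of $\xi_0(V)$ near $\tau_0$ is what converts this into a contribution of order $o_p(1)$ times $\|\xi_n - \xi_0\|$, and the calibration step \eqref{e RC equation: update quantile} is needed to absorb the leftover first-order term. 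I would borrow the machinery developed for this exact issue in \citet{Luedtke2016} and \citet{Qiu2021}, adapting their margin-condition arguments to the present constraint; the bulk of the genuinely new work is checking that the extra $\alpha\phi_0$ piece of the constraint and the generalized cost function do not break those arguments, which I expect to be routine but notationally heavy and therefore deferred to the Supplement.
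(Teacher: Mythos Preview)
Your proposal is correct and follows essentially the same route as the paper: the proof in Supplement~\ref{proof: asymptotic linearity} expands $\psi_n-\psi_0$ via the pseudo-gradient identities of Section~\ref{section: pseudo-gradient expansion}, uses the TMLE score equation (Step~4(a)) and the calibration equation \eqref{e RC equation: update quantile} to annihilate the empirical-mean-of-gradient term (including the $\tau_0\{\cdots\}$ resource piece), controls the empirical-process drift via Conditions~\ref{e RC assumption: consistency of estimated influence function} and~\ref{e RC assumption: Donsker condition}, and bounds the second-order remainder by Cauchy--Schwarz product rates (Condition~\ref{e RC assumption: remainder}) plus the separate optimal-rule remainder $P_0\{(\edecision_n-\edecision_0)(\delta^Y_0-\tau_0\delta^C_0)\}$ handled by Condition~\ref{e RC assumption: IER remainder}. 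The margin/threshold analysis you anticipate is exactly Lemmas~\ref{e RC lemma: consistency of eta}--\ref{e RC lemma: updated tau} and Theorem~\ref{theorem: remainder due to estimating optimal rule}, adapted from \citet{Luedtke2016,Qiu2021} as you suggest.
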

To conduct inference about $\psi_0$, we can directly plug the estimators of nuisance functions into $D_\mathcal{R}(P_0)$ to obtain a consistent estimator of $D_\mathcal{R}(P_0)$, and then take the sample variance to obtain a consistent estimator of the asymptotic variance $\sigma_0^2$. The proof of Theorem~\ref{e RC theorem: asymptotic linearity} can be found in Supplements~\ref{section: pseudo-gradient expansion} and \ref{proof: asymptotic linearity}.

\begin{remark}\label{remark: sample splitting}
    It may be desirable to use cross-fitting \citep{Newey2018,Zheng2011} to estimate an optimal ITR for better finite-sample performance. The asymptotic linearity is maintained by a similar argument that is used to prove Theorem~\ref{e RC theorem: asymptotic linearity}. We describe this algorithm in Section~\ref{section: sample splitting} in the Supplementary Material.
\end{remark}

\begin{remark} \label{remark: bounded outcome}
    We note that, unlike in \citet{Qiu2021} where the bound $\kappa$ lies in $(0,1]$ due to the binary nature of treatment status, the methods we propose here do not require knowledge of an upper bound on treatment costs. When such a bound is indeed known (e.g., one), our methods may still be applied as long as all special cases corresponding to $\kappa=\infty$ or $\kappa<\infty$ in Section~\ref{section: estimators} are replaced by $\kappa$ being equal to or less than the known bound, respectively.
\end{remark}

\section{Simulation} \label{section: simulation}

\subsection{Simulation setting}

In this simulation study, we investigate the performance of our proposed estimator of the ATE of an optimal ITR relative to specified reference ITRs. We focus here on the setting  $\alpha=1$. This scenario is more difficult than the case $\alpha=0$ because it requires the estimation of $\phi_0$.

We generate data from a model in which the treatment $T$ is an IV and the treatment cost $C$ and outcome $Y$ are both binary. This data-generating mechanism satisfies all causal conditions and has an unobserved confounder between treatment cost and outcome. We first generate a trivariate covariate $W=(W_1,W_2,W_3)$, where $W_1 \sim \mathrm{Unif}(-1,1)$, $W_2 \sim \mathrm{Bernoulli}(0.8)$ and $W_3 \sim \mathrm{N}(0,1)$ are mutually independent. We also simulate an unobserved treatment-outcome confounder $U \sim \mathrm{Bernoulli}(0.5)$ independently of $W$, and then simulate $T$, $C$, and $Y$ as follows:
\begin{alignat*}{2}
    & T \mid W,U\  &&\sim\  \text{Bernoulli}\left(\expit(2.5 W_1 + 0.5 W_2 W_3)\right), \\
    & C \mid T,W,U\  &&\sim\  \text{Bernoulli}\left(\expit(2T-1-W_1+0.2 W_2 + 0.7 W_3+2W_1W_2+0.5U)\right), \\
    & Y \mid T,C,W,U\   &&\sim\ \text{Bernoulli}\left(\expit(-0.3 C + C W_2 - W_1 + 0.2 W_2 - 0.9 W_3 + 0.3 C U)\right).
\end{alignat*}
We introduce $U$ in the data-generating mechanism to emphasize that we do not require assumptions on the joint distribution of treatment cost and outcome conditional on covariates. We consider all three reference ITRs $\mathcal{R} \in \{\FR,\RD,\TP\}$, where we set $\edecision^\FR: v \mapsto 0$. We set $\kappa=0.68$, which is an active constraint with $\tau_0>0$ and $\edecision^\RD_0 < 1$.

The ITRs we consider are based on all covariates --- that is, we take $V(W)=W$. We estimate the nuisance functions using the Super Learner \citep{VanderLaan2007} with library including a logistic regression, generalized additive model with logit link \citep{Hastie1990}, gradient boosting machine \citep{Friedman2001,Friedman2002,Mason1999,Mason2000}, support vector machine \citep{Bennett2000,Cortes1995} and neural network \citep{Bishop1995,Ripley2014}. Because none of the nuisance functions follow a logistic regression model, the resulting ensemble learner is not expected to achieve the parametric convergence rate. Since both $C$ and $Y$ are binary, we use logistic regression rather than ordinary least-squares to obtain their corresponding targeted estimates in Section~\ref{section: procedure}. We consider sample size $n\in\{500, 1000, 4000, 16000\}$, and run 1000 Monte Carlo repetitions for each sample size. We implement the algorithm that incorporates cross-fitting discussed in Remark~\ref{remark: sample splitting} and described in Section~\ref{section: sample splitting} in the Supplementary Material.

To evaluate the performance of our proposed estimator, we investigate the bias and root mean squared error (RMSE)
of the estimator. We also investigate the coverage probability and the width of nominal 95\% Wald CIs constructed using influence function-based standard error estimates. We further investigate the probability that our confidence lower limit falls below the true ATE, that is, the coverage probability of the 97.5\% Wald confidence lower bound.

\subsection{Simulation results}

Table~\ref{table: simulation} presents the performance of our proposed estimator in this simulation. For sample sizes 500, 1000 and 4000, the CI coverage of our proposed method is lower than the nominal coverage 95\%. When sample size is larger (16000), the CI coverage of our proposed method increases to 90--93\%. The coverage of the confidence lower bounds is much closer to nominal (97.5\%) for all sample sizes considered, though, and is always approximately nominal when the sample size is large. %
For all reference ITRs, the bias and RMSE
of our proposed estimator appear to converge to zero faster than and at the same rate as the square root of sample size, respectively. All biases are negative, which is expected in view of Remark~\ref{remark: sample splitting}.
All standard errors underestimate the variation of the estimator with the extent decreasing as sample size increases.

\begin{table}
    \caption{Performance of estimators of average treatment effects in the simulation with nuisance functions estimated via machine learning.}
    \label{table: simulation}
    \begin{center}
        \begin{tabular}{lr|r|r|r}
            Performance measure & Sample size & $\FR$ & $\RD$ & $\TP$ \\ \hline \hline
            95\% Wald CI coverage & 500 & $74\%$ & $71\%$ & $70\%$ \\
            & 1000 & $78\%$ & $74\%$ & $73\%$ \\
            & 4000 & $90\%$ & $84\%$ & $88\%$ \\
            & 16000 & $93\%$ & $90\%$ & $93\%$ \\ \hline
            97.5\% confidence lower & 500 & $94\%$ & $96\%$ & $96\%$ \\
            bound coverage & 1000 & $97\%$ & $98\%$ & $96\%$ \\
            & 4000 & $98\%$ & $98\%$ & $98\%$ \\
            & 16000 & $97\%$ & $98\%$ & $97\%$ \\ \hline
            bias & 500 & $-0.018$ & $-0.018$ & $-0.020$ \\
            & 1000 & $-0.014$ & $-0.013$ & $-0.013$ \\
            & 4000 & $-0.003$ & $-0.004$ & $-0.003$ \\
            & 16000 & $-0.000$ & $-0.001$ & $-0.000$ \\ \hline
            RMSE & 500 & $0.056$ & $0.039$ & $0.046$ \\
            & 1000 & $0.039$ & $0.025$ & $0.031$ \\
            & 4000 & $0.017$ & $0.009$ & $0.012$ \\
            & 16000 & $0.009$ & $0.004$ & $0.005$ \\
            \hline
            Ratio of mean standard error & 500 & $0.620$ & $0.620$ & $0.571$ \\
            to standard deviation & 1000 & $0.683$ & $0.673$ & $0.637$ \\
            & 4000 & $0.868$ & $0.765$ & $0.809$ \\
            & 16000 & $0.913$ & $0.870$ & $0.906$ \\
        \end{tabular}
    \end{center}
\end{table}

Figure~\ref{figure: CI width} presents the width of the  Wald CIs scaled by the square root of sample size $n$. Our theory indicates that the CI width should shrink at a root-$n$ rate, and our simulation results are consistent with this. There are some outlying cases of extremely wide or narrow CIs. This is expected for small sample sizes because the estimator of $\sigma_0^2$ in Theorem~\ref{e RC theorem: asymptotic linearity} resembles a sample mean and might not be close to $\sigma_0^2$ with high probability when sample size is small. In practice, this issue might be slightly mitigated by fine-tuning the involved machine learning algorithms.

\begin{figure}
    \begin{center}
        \includegraphics{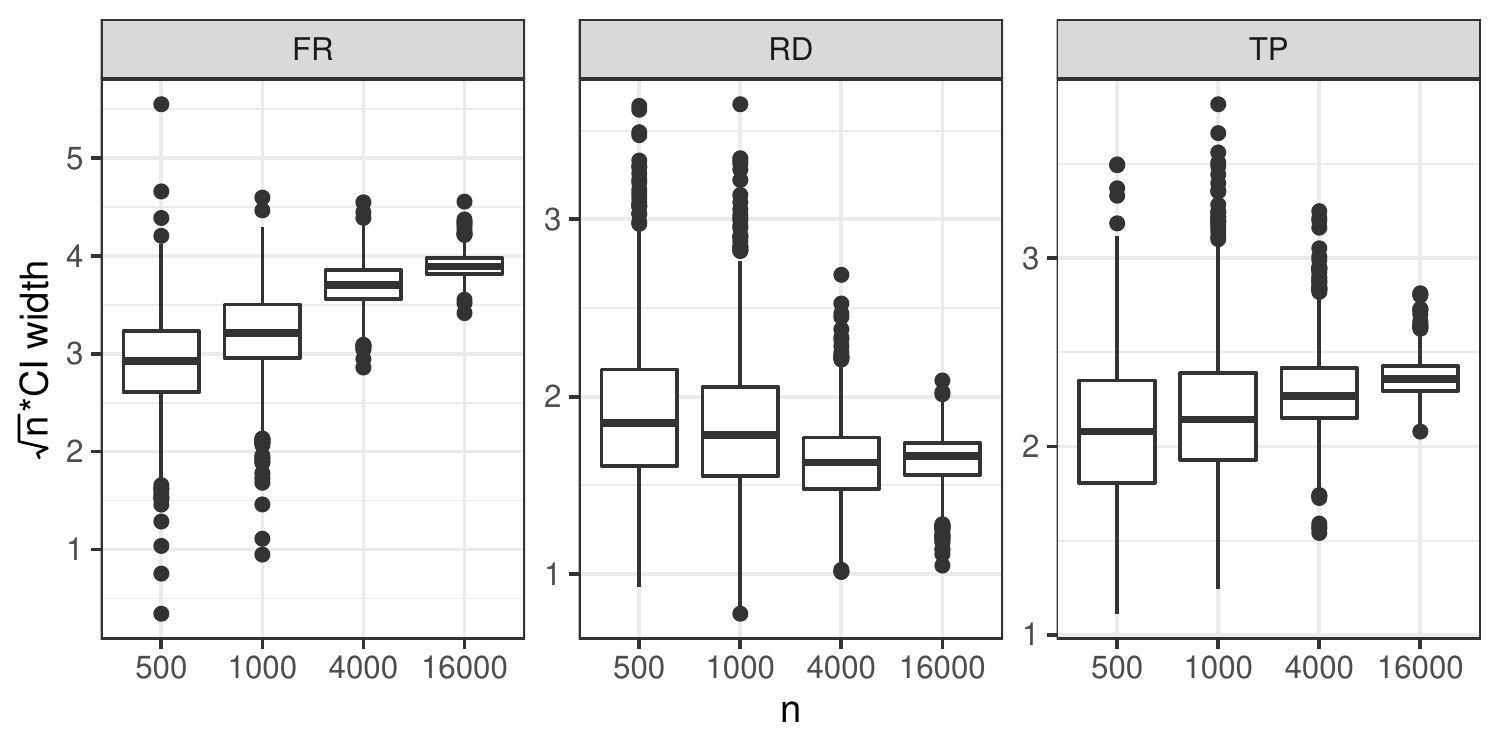}
    \end{center}
    \caption{Boxplot of $\sqrt{n} \times$ CI width for ATE relative to each reference ITR.\label{figure: CI width}}
\end{figure}

As indicated in Theorem~\ref{e RC theorem: asymptotic linearity}, theoretical guarantees for the validity of the Wald CIs rely on the nuisance function estimators converging to the truth sufficiently quickly. 
It appears that the undercoverage of our Wald CI in small samples may owe, in part, to poor estimation of these nuisance functions in small sample sizes. 
To illustrate how our procedure may perform with improved small-sample nuisance function estimators, 
we conducted another two simulations: one is identical to those reported earlier in all ways except that the nuisance function estimators $\Q^Y_n$, $\Q^C_n$ and $\Q^T_n$ are taken to be equal to the truth; the other is a simpler scenario under a lower dimension and a parametric model.
The results are presented in Section~\ref{section: simulation2} in the Supplementary Material and suggest that our proposed estimator may achieve significantly better performance with improved machine learning estimators of the nuisance functions.
This motivates seeking ways to optimize the finite-sample performance of the nuisance function estimators employed in future applications of the proposed method, possibly based on prior subject-matter expertise.
The underestimation of standard errors in this simulation also motivates future work exploring whether there are standard error estimators with better finite-sample performance, for example, estimators based on the bootstrap.

\section{Conclusion} \label{section: discussion}

There is an extensive literature on estimating optimal ITRs and evaluating their performance. Among these works, only a few incorporated treatment resource constraints. In this paper, we build upon \citet{Sun2021} and study the problem of estimating optimal ITRs under treatment cost constraints when the treatment cost is random. Using similar techniques as used in \citet{Qiu2021}, we have proposed novel methods to estimate an optimal ITR and infer about the corresponding average treatment effect relative to a prespecified reference ITR, under a locally nonparametric model. Our methods may also be applied to instrumental variable (IV) settings studied in \citet{Qiu2021} when the IV is intervened on.

\bibliographystyle{chicago}
\bibliography{references_v6}

\newpage

\setcounter{page}{1}
\renewcommand*{\theHsection}{S.\the\value{section}}
\setcounter{section}{0}
\renewcommand{\thesection}{S\arabic{section}}%
\setcounter{table}{0}
\renewcommand{\thetable}{S\arabic{table}}%
\setcounter{figure}{0}
\renewcommand{\thefigure}{S\arabic{figure}}%
\setcounter{equation}{0}
\renewcommand{\theequation}{S\arabic{equation}}%
\setcounter{lemma}{0}
\renewcommand{\thelemma}{S\arabic{lemma}}%
\setcounter{theorem}{0}
\renewcommand{\thetheorem}{S\arabic{theorem}}%
\setcounter{corollary}{0}
\renewcommand{\thecorollary}{S\arabic{corollary}}%

\begin{center}
    \LARGE Supplementary Material for ``Individualized treatment rules under stochastic treatment cost constraints''
\end{center}

This Supplementary Material is organized as follows. Section~\ref{section: additional technical conditions} contains technical conditions to ensure that the statistical parameter of interest, the average treatment effect, is pathwise differentiable and that our proposed estimator is asymptotically efficient. We discuss a particular technical condition that may be difficult to verify in Section~\ref{section: discussion of IER remainder}. In Section~\ref{section: sample splitting}, we describe a modified version of our proposed estimator with improved performance in small to moderate samples. We present proofs of theoretical results in Section~\ref{section: proof}. In Section~\ref{section: simulation2}, we present the results of a simulation under an idealized setting. These results may provide guidance on interpreting the simulation results in Section~\ref{section: simulation}.

As noted in the main text, the methods proposed in this work build upon tools used in \citet{Qiu2021}; as such, the involved technical details bear similarity.
To orient readers and facilitate comparisons, we have organized these supplementary materials for these papers similarly and shared portions of technical details when appropriate.

\section{Technical conditions for pathwise differentiability of parameter and asymptotic linearity of proposed estimator} \label{section: additional technical conditions}

In this section, we list the additional technical conditions required by Theorems~\ref{e RC theorem: differentiability} and \ref{e RC theorem: asymptotic linearity} in Section~\ref{section: estimators} that we omit in the main text. Before doing this, we define pointwise
\begin{align*}
    D_{n,\FR}(o) &:= D(\hat{P}_n,\edecision_n,\tau_0,\Q^C_n)(o) - D(\hat{P}_n,\edecision^\FR,0,\Q^C_0)(o) \ , \\
    D_{n,\RD}(o) &:= D(\hat{P}_n,\edecision_n,\tau_0,\Q^C_n)(o) - D(\hat{P}_n,\edecision^\RD_n,0,\Q^C_0)(o) \\
    &\hspace{0.5in}- \alpha \frac{\Psi_{\edecision^\RD_n}(\hat{P}_n)}{\kappa-\phi_n} D_1(\hat{P}_n,\Q^C_n)(o) - \frac{\Psi_{\edecision^\RD_n}(\hat{P}_n)}{P_n \hat{\Delta}^C_n} D_2(\hat{P}_n,\hat{\Q}^C_n)(o) \ , \\
    D_{n,\TP}(o) &:= D(\hat{P}_n,\edecision_n,\tau_0,\Q^C_n)(o) - G_\TP(\hat{P}_n)(o) \ .
\end{align*}

\begin{assumptionEncourage}[Nonzero continuous density of $\xi_0(V)$ around $\eta_0$] \label{e RC assumption: continuous density}
    If $\eta_0 > -\infty$, then the distribution of $\xi_0(V)$ has positive, finite and continuous Lebesgue density in a neighborhood of $\eta_0$.
\end{assumptionEncourage}

Since Condition~\ref{e RC assumption: continuous density} is most plausible when covariates are continuous, in this case, it is also plausible to expect the distribution of $\xi_0(V)$ to be continuous and thus Condition~\ref{e RC assumption: continuous density} holds.

\begin{assumptionEncourage}[Smooth treatment cost function or lack of constraint] \label{e RC assumption: continuous weight}
    If $\eta_0 > -\infty$, then the function $\eta \mapsto \expect_0\left[I\left(\xi_0(V)>\eta\right) \Delta^C_0(W)\right]$ is continuously differentiable with nonzero derivative in a neighborhood of $\eta_0$; if $\eta_0 = -\infty$ and $\kappa<\infty$, then $\expect_0\left[\Delta^C_0(W)\right] < \kappa - \alpha \phi_0$.
\end{assumptionEncourage}

Condition~\ref{e RC assumption: continuous weight} requires different conditions in separate cases. There are three cases in terms of the sufficiency of the budget to treat every individual: (i) there is an infinite budget and no constraint is present ($\kappa=\infty$); (ii) the budget is insufficient ($\eta_0>-\infty$); and (iii) the budget is finite but sufficient ($\eta_0=-\infty$ and $\kappa<\infty$). Condition~\ref{e RC assumption: continuous weight} makes no assumption for Case~(i). In Case~(ii), we require a function $\eta \mapsto \expect_0\left[I\left(\xi_0(V)>\eta\right) \Delta^C_0(W)\right]$ to be locally continuously differentiable. Since $\Delta_0^C > 0$ by Condition~\ref{IV assumption: strong encouragement}, this function is nonincreasing and thus only continuous differentiability is required. For each $\eta$, this function is an integral of additional cost $\Delta_0^C$ over the set $\{v: \xi_0(v)>0\}$ and has a similar nature to survival functions. When covariates are continuous, it is plausible to assume that $\Delta_0^C(W)$ is continuous and thus $\eta \mapsto \expect_0\left[I\left(\xi_0(V)>\eta\right) \Delta^C_0(W)\right]$ is continuously differentiable. In Case~(iii), we require that the budget has a surplus. When it is unknown \textit{a priori} whether the budget is sufficient to treat every individual, namely in Case~(ii) or (iii), it is highly unlikely that the budget exactly suffices with no surplus. Therefore, Condition~\ref{e RC assumption: continuous weight} is mild.

\begin{assumptionEncourage}[Bounded additional treatment cost] \label{e RC assumption: bounded encouragement effect}
    $\Delta^C_0$ is bounded.
\end{assumptionEncourage}

\begin{assumptionEncourage}[Active constraint] \label{e RC assumption2: active constraint}
    If $\mathcal{R}=\RD$, then it holds that $(\kappa-\alpha\phi_0)/\expect_0\left[\Delta^C_0(W)\right] < 1$.
\end{assumptionEncourage}

Condition~\ref{e RC assumption2: active constraint} requires that, when the rule $\edecision^\RD$ that assigns treatment completely at random while respecting the budget constraint is the reference rule of interest, it should not correspond to the trivial rule $v \mapsto 1$ that assigns treatment to every individual. The rule $\edecision^\RD$ equals $v \mapsto 1$ only when the budget is sufficient to treat every individual. Since, as a separate reference rule from given fixed rules $\edecision^\FR$, the reference rule $\edecision^\RD$ is only interesting when the budget constraint is active, Condition~\ref{e RC assumption2: active constraint} often holds automatically.

\begin{assumptionEncourage}[Sufficient rates for nuisance estimators] \label{e RC assumption: remainder}
    \begin{align*}
        \| \Q^T_n - \Q^T_0 \|_{2,P_0} \Big\{& \| \Q^Y_n - \Q^Y_0 \|_{2,P_0} + \| \hat{\Q}^Y_n - \Q^Y_0 \|_{2,P_0}  \\
        &+ \| \Q^C_n - \Q^C_0 \|_{2,P_0} + \| \hat{\Q}^C_n - \Q^C_0 \|_{2,P_0} \Big\} =\smallo_p(n^{-1/2}) \ .
    \end{align*}
\end{assumptionEncourage}

Condition~\ref{e RC assumption: remainder} holds if all above nuisance estimators converge at a rate faster than $n^{-1/4}$, which may be much slower than the parametric rate $n^{-1/2}$ and thus allows for the use of flexible nonparametric estimators. This condition also holds if $\Q^Y_n$, $\hat{\Q}^Y_n$, $\Q^C_n$ and $\hat{\Q}^C_n$ each converges slower than $n^{-1/4}$, as long as the estimated propensity score $\Q^T_n$ converges sufficiently fast to compensate.

\begin{assumptionEncourage}[Consistency of estimated influence function] \label{e RC assumption: consistency of estimated influence function}
    The following terms are all $\smallo_p(1)$:
    \begin{align*}
        & \| D_1(\hat{P}_n,\hat{\Q}^C_n)-D_1(P_0,\Q^C_0) \|_{2,P_0} \ , \quad \| D_2(\hat{P}_n,\Q^C_n)-D_2(P_0,\Q^C_0) \|_{2,P_0} \ , \quad \| D_{n,\mathcal{R}} - D_\mathcal{R}(P_0) \|_{2,P_0} \ , \\
        & \| [D(\hat{P}_n,\edecision_n,\tau_0,\Q^C_n) - D(\hat{P}_n,\edecision^\RD_n,0,\Q^C_0)] - [D(P_0,\edecision_0,\tau_0,\Q^C_0) - D(P_0,\edecision^\RD_0,0,\Q^C_0)] \|_{2,P_0} \ .
    \end{align*}
\end{assumptionEncourage}

\begin{assumptionEncourage}[Consistency of strong positivity] \label{e assumption: consistency of strong IV positivity}
    With probability tending to one over the sample used to obtain $\Q^T_n$, it holds that $\int I\{\epsilon_T<\Q^T_n(w) < 1-\epsilon_T\} dP_0(w)=1$.
\end{assumptionEncourage}

\begin{assumptionEncourage}[Consistency of strictly more costly treatment] \label{e RC assumption: consistency of strong encouragement}
    With probability tending to one over the sample used to obtain $\Delta^C_n$ and $\delta^C_{n}$, it holds that $\int I(\Delta^C_n(w) > \delta_C) dP_0(w) = 1$ and $\int I(\delta^C_{n}(v) > \delta_C) dP_0(v) = 1$.
\end{assumptionEncourage}

\begin{assumptionEncourage}[Fast rate of estimated optimal ITR] \label{e RC assumption: IER remainder}
    As sample size $n$ tends to infinity, it holds that $$\int \left\{ \edecision_n(v)-\edecision_0(v) \right\} \left\{\delta^Y_{0}(v)-\tau_0 \delta^C_{0}(v)\right\} dP_0(v)=\smallo_p(n^{-1/2})\ .$$
\end{assumptionEncourage}
Condition~\ref{e RC assumption: IER remainder} may, at first sight, appear to be difficult to verify and is discussed in detail in Section~\ref{section: discussion of IER remainder}. As shown in Theorem~\ref{theorem: remainder due to estimating optimal rule} of Section~\ref{section: discussion of IER remainder}, Condition~\ref{e RC assumption: IER remainder} may require faster rates on nuisance estimators than Condition~\ref{e RC assumption: remainder}. For example, convergence in the $L^2$-sense at a rate $\smallo_p(n^{-1/4})$ is sufficient for Condition~\ref{e RC assumption: remainder}, but a rate $\smallo_p(n^{-3/8})$ is needed in order to use Theorem~\ref{theorem: remainder due to estimating optimal rule} to show that Condition~\ref{e RC assumption: IER remainder} holds.

\begin{assumptionEncourage}[Donsker condition] \label{e RC assumption: Donsker condition}
    $\{ o\mapsto d_{n,k}(v) D_2(\hat{P}_n,\Q^C_n)(o): k \in [0,1] \}$ is a subset of a fixed $P_0$-Donsker class with probability tending to 1. Additionally, each of $D_1(\hat{P}_n,\Q^C_n)$, $D(\hat{P}_n,\edecision_n,\tau_0,\Q^C_n) - D(\hat{P}_n,\edecision^\RD_n,0,\Q^C_0)$ and $D_{n,\mathcal{R}}$ belongs to a (possibly different) fixed $P_0$-Donsker class with probability tending to 1.
\end{assumptionEncourage}

\begin{assumptionEncourage}[Glivenko-Cantelli condition] \label{e RC assumption: Glivenko-Cantelli}
    $\| \xi_n - \xi_0 \|_{1,P_0} = \smallo_p(1)$ and $\| \Delta^C_n - \Delta^C_0 \|_{1,P_0} = \smallo_p(1)$. Moreover, (i) if $\eta_0 > -\infty$, then, for any $\eta$ sufficiently close to $\eta_0$, $w \mapsto I(\xi_n(v)>\eta) \Delta^C_n(w)$ belongs to a $P_0$-Glivenko-Cantelli class with probability tending to 1; (ii) otherwise, if $\eta_0 = -\infty$, then, for any $\eta < 0$ with sufficiently large $|\eta|$, $w \mapsto I(\xi_n(v)>\eta) \Delta^C_n(w)$ belongs to a $P_0$-Glivenko-Cantelli class with probability tending to 1.
\end{assumptionEncourage}

The Donsker condition~\ref{e RC assumption: Donsker condition} and the Glivenko-Cantelli condition~\ref{e RC assumption: Glivenko-Cantelli} impose restrictions on the flexibility of the methods used to estimate nuisance functions. We refer readers to, for example, \citet{vandervaart1996}, for a more thorough introduction to such conditions.

All above conditions are similar to those in \citet{Qiu2021} except that Conditions~\ref{e RC assumption: consistency of strong encouragement} and \ref{e RC assumption: bounded encouragement effect} are additional in this paper because the assumption of more costly treatment was not needed and a boundedness condition similar to \ref{e RC assumption: bounded encouragement effect} was automatically satisfied with a binary cost.

\section{Sufficient condition for fast convergence rate of estimated optimal rule} \label{section: discussion of IER remainder}

Condition~\ref{e RC assumption: IER remainder}, which is required by Theorem~\ref{e RC theorem: asymptotic linearity}, may seem unintuitive and difficult to verify. In Theorem~\ref{theorem: remainder due to estimating optimal rule} below, we present sufficient conditions for Condition~\ref{e RC assumption: IER remainder} that are similar to those in \citeauthor{Qiu2021} \citep{Qiu2021}.

Throughout the rest of the Supplement, for two quantities $a,b \in \real$, we use $a \lesssim b$ to denote $a \leq \const b$ for some constant $\const>0$ that may depend on $P_0$.

\begin{theorem}[Sufficient condition for Condition~\ref{e RC assumption: IER remainder}] \label{theorem: remainder due to estimating optimal rule}
    Assume that $\int I(\xi_n(v) = \tau_n) dP_0(v) = \bigO_p(n^{-1/2})$. Further assume that each of $o \mapsto I(\xi_n(v) > \eta_n)$ and $o \mapsto I(\xi_n(v) > \eta_n)\delta^C_{0}(v)$ belongs to a (possibly different) fixed $P_0$-Donsker class with probability tending to 1. Suppose also that the distribution of $\xi_0(V)$ ($V\sim P_0$) has nonzero finite continuous Lebesgue density in a neighborhood of $\eta_0$ and a neighborhood of $\tau_0$. Under Condition~\ref{e RC assumption: bounded encouragement effect}, the following statements hold.
    \begin{itemize}
        \item If $\| \delta^Y_{n} - \delta^Y_{0} \|_{q,P_0} = \smallo_p(1)$ for some $q \ge 1$, then
        \begin{align*}
            | P_0 \{ (\edecision_n - \edecision_0) (\delta^Y_{0} - \tau_0 \delta^C_{0}) \} | \lesssim \| \delta^Y_{n} - \delta^Y_{0} \|_{q,P_0}^{2q/(q+1)} + \bigO_p(n^{-1}).
        \end{align*}
        \item If $\| \delta^Y_{n} - \delta^Y_{0} \|_{\infty,P_0} = \smallo_p(1)$, then
        \begin{align*}
            | P_0 \{ (\edecision_n - \edecision_0) (\delta^Y_{0} - \tau_0 \delta^C_{0}) \} | \lesssim \| \delta^Y_{n} - \delta^Y_{0} \|_{\infty,P_0}^2 + \bigO_p(n^{-1}).
        \end{align*}
    \end{itemize}
\end{theorem}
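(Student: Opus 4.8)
The plan is to decompose the target remainder $P_0\{(\edecision_n - \edecision_0)(\delta^Y_0 - \tau_0\delta^C_0)\}$ according to the structure of the two rules $\edecision_n$ and $\edecision_0$, both of which are essentially indicator functions of $\{\xi_n(v) > \eta_n\}$ and $\{\xi_0(v) > \eta_0\}$ respectively (the boundary-randomization pieces contributing only on a set of $P_0$-measure $\bigO_p(n^{-1/2})$, which is handled by the assumption $\int I(\xi_n(v)=\tau_n)\,dP_0(v) = \bigO_p(n^{-1/2})$ together with boundedness of $\delta^Y_0 - \tau_0\delta^C_0$ from Condition~\ref{e RC assumption: bounded encouragement effect} and Condition~\ref{IV assumption: strong encouragement}). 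Writing $g_0(v) := \delta^Y_0(v) - \tau_0\delta^C_0(v)$, I would note the crucial sign fact: on $\{\xi_0(v) > \eta_0\}$ we have $g_0(v) = \delta^C_0(v)(\xi_0(v) - \tau_0) \geq 0$ when $\tau_0 = \eta_0$, and more generally $g_0$ changes sign precisely at the level set $\{\xi_0 = \tau_0\}$ (using $\delta^C_0 > \epsilon_C > 0$). Hence $(\edecision_n - \edecision_0)(v)\, g_0(v)$ is only nonzero on the symmetric difference of the two superlevel sets, and there $|g_0(v)| \lesssim |\xi_0(v) - \tau_0|$, which in turn is controlled by how far $\xi_n(v)$ must be from $\eta_n$ for the two indicators to disagree.

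The key steps, in order: (1) Reduce to the indicator parts and bound the boundary terms by $\bigO_p(n^{-1/2}) \cdot \|g_0\|_\infty$; but in fact this needs more care — since $g_0$ vanishes at rate $|\xi_0 - \tau_0|$ near the boundary and $|\eta_n - \eta_0| = o_p(1)$, the boundary contribution is actually $\bigO_p(n^{-1})$ after using the Lipschitz-type bound on $g_0$ near $\tau_0$ together with the continuous-density assumption. (2) Control $|\eta_n - \eta_0|$ (and $|\tau_n - \tau_0|$): using the Donsker assumptions on $o\mapsto I(\xi_n(v)>\eta)$ and $o\mapsto I(\xi_n(v)>\eta)\delta^C_0(v)$ to pass from the empirical defining equation for $\eta_n$ to its population analog, and then inverting the population map $\eta \mapsto \expect_0[I(\xi_0(V)>\eta)\delta^C_0(W)]$ — which has nonzero continuous derivative near $\eta_0$ by the density assumption — I expect $|\eta_n - \eta_0| \lesssim \|\xi_n - \xi_0\|_{q,P_0}^{?} + \bigO_p(n^{-1/2})$, though actually a cleaner route is to bound things directly in terms of $\|\delta^Y_n - \delta^Y_0\|$ via the identity $\xi_n - \xi_0 = (\delta^Y_n - \delta^Y_0)/\delta^C_n - \delta^Y_0(\delta^C_n - \delta^C_0)/(\delta^C_n\delta^C_0)$ and whatever is assumed about $\delta^C_n$. (3) On the symmetric difference, write $(\edecision_n - \edecision_0)(v)g_0(v)$ and split into the region where $|\xi_0(v) - \tau_0|$ is small (say $\le t$) versus large. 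On the large region, disagreement of the two indicators forces $|\xi_n(v) - \xi_0(v)| \gtrsim t$ or $|\eta_n - \eta_0| \gtrsim t$, so by Markov/Chebyshev the $P_0$-measure of that region is $\lesssim (\|\xi_n - \xi_0\|_{q,P_0}/t)^q$, and $|g_0| \lesssim 1$ there. On the small region, $P_0(|\xi_0(V) - \tau_0| \le t) \lesssim t$ by the density assumption, and $|g_0| \lesssim t$ there, giving a contribution $\lesssim t^2$. (4) Optimize over $t$: balancing $t^2$ against $t^{-q}\|\delta^Y_n - \delta^Y_0\|_{q,P_0}^q$ gives $t \asymp \|\delta^Y_n - \delta^Y_0\|_{q,P_0}^{q/(q+1)}$, yielding the rate $\|\delta^Y_n - \delta^Y_0\|_{q,P_0}^{2q/(q+1)}$; when $q = \infty$, the large region is empty once $\|\xi_n - \xi_0\|_\infty < t$, so one directly takes $t \asymp \|\delta^Y_n - \delta^Y_0\|_{\infty,P_0}$ and gets the squared rate. (5) Assemble, absorbing the $\bigO_p(n^{-1})$ terms from the boundary analysis.

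The main obstacle I anticipate is Step~(3)–(4): carefully relating the disagreement event $\{\edecision_n(v) \neq \edecision_0(v)\}$ to a quantitative deviation of $\xi_n$ from $\xi_0$ while simultaneously tracking the (random) threshold gap $|\eta_n - \eta_0|$, and then making the truncation-level optimization rigorous in the presence of those random thresholds — in particular ensuring that $|\eta_n - \eta_0|$ is itself of smaller or comparable order to the truncation level $t$ we choose, so that it does not dominate. This is where the Donsker conditions on the indicator-valued classes and the continuous-density assumptions near both $\eta_0$ and $\tau_0$ do the real work; the empirical-process step controlling $|\eta_n - \eta_0|$ must be done with enough precision (likely via a one-step expansion of the defining equation and the inverse function theorem applied to the smooth population map) rather than merely qualitatively. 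A secondary technical nuisance is the bookkeeping when $\eta_0 = -\infty$ (the unconstrained or slack-constraint case), where the "boundary" degenerates and $\edecision_0 \equiv 1$ on the relevant set — but there Condition~\ref{e RC assumption: continuous weight} and Condition~\ref{e RC assumption: Glivenko-Cantelli}(ii) ensure $\edecision_n$ is eventually also the trivial rule on a set of full measure, making the remainder exactly zero for large $n$, so this case can be dispatched separately at the outset.
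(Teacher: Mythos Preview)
Your proposal is correct and follows essentially the same approach as the paper: the paper also reduces to indicators, writes $\delta^Y_0 - \tau_0\delta^C_0 = \delta^C_0(\xi_0 - \tau_0)$, decomposes the symmetric difference (via the three-term split $[I(\xi_n>\tau_n)-I(\xi_0>\tau_n)](\xi_0-\tau_n)$, $[I(\xi_0>\tau_n)-I(\xi_0>\tau_0)](\xi_0-\tau_0)$, and $(\tau_n-\tau_0)[I(\xi_n>\tau_n)-I(\xi_0>\tau_n)]$), and optimizes a truncation level $\epsilon_n$ exactly as in your Steps~(3)--(4), after first proving a lemma that gives $|\tau_n-\tau_0|\lesssim \|\xi_n-\xi_0\|_{q,P_0}^{q/(q+1)}+\bigO_p(n^{-1/2})$ corresponding to your Step~(2). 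One small note: the paper's argument is actually carried out entirely in terms of $\|\xi_n-\xi_0\|_{q,P_0}$ rather than $\|\delta^Y_n-\delta^Y_0\|_{q,P_0}$, so your instinct in Step~(2) that a conversion via $\xi_n-\xi_0 = (\delta^Y_n-\delta^Y_0)/\delta^C_n - \delta^Y_0(\delta^C_n-\delta^C_0)/(\delta^C_n\delta^C_0)$ (together with the assumed lower bound on $\delta^C_n$ from Condition~\ref{e RC assumption: consistency of strong encouragement}) is needed is exactly right---the paper leaves this step implicit.
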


The proof of Theorem~\ref{theorem: remainder due to estimating optimal rule} is very similar to Theorem~5 in \citeauthor{Qiu2021} \citep{Qiu2021} and can be found in Section~\ref{section: IER/ITR remainder proof}.

\section{Modified procedure with cross-fitting} \label{section: sample splitting}

In this section, we describe our proposed procedure to estimate the ATE with cross-fitting, which is mentioned in Remark~\ref{remark: sample splitting}. We use $\Lambda$ to denote a user-specified fixed number of folds to split the data. Common choices of $\Lambda$ used in practice include 5, 10 and 20.

\begin{enumerate}
    \item Use the empirical distribution $\hat{P}_{W,n}$ of $W$ as an estimate of the true marginal distribution of $W$. Compute estimates $\Q^Y_n$, $\Q^C_n$ and $\Q^T_n$ of $\Q^Y_0$, $\Q^C_n$ and $\Q^T_0$, respectively using flexible regression methods.

    \item Estimate an optimal individualized treatment rule for each observation:
    \begin{enumerate}
        \item \label{it:create folds} Create folds: split the set of observation indices $\{1,2,\ldots,n\}$ into $\Lambda$ mutually exclusive and exhaustive folds of (approximately) equal size. Denote these sets by $S_\lambda$, $\lambda=1,2,\ldots,\Lambda$. Define $S_{-\lambda}:=\cup_{\lambda' \neq \lambda} S_{\lambda'}$. For each $i=1,2,\ldots,n$, let $\lambda(i)$ be the index of the fold containing $i$; in other words, $\lambda(i)$ is the unique value of $\lambda$ such that $i \in S_{\lambda}$.
        \item Estimate $\xi_0(V_i)$ using sample splitting: for each $\lambda=1,2,\ldots,\Lambda$, compute estimates $\delta^Y_{n,S_{-\lambda}}$ and $\delta^C_{n,S_{-\lambda}}$ of $\delta^Y_{0}$ and $\Delta^C_{0,b}$ using flexible regression methods based on data $\{O_i: i \in S_{-\lambda}\}$. For each $i=1,2,\ldots,n$, let $\xi_{n,i} := \delta^Y_{n,S_{-\lambda(i)}}(V_i)/\delta^C_{n,S_{-\lambda(i)}}(V_i)$ be the sample splitting estimate of $\xi_0(V_i)$.
        \item Estimate $\phi_0$ with a one-step correction estimator
        $$\phi_n := \frac{1}{n} \sum_{i=1}^n \{\Q^C_n(0,W_i) + \frac{1-T_i}{1-\Q^T_n(W_i)} [C_i - \Q^C_n(0,W_i)]\}.$$
        \item Let $\Gamma_n : \tau \mapsto \frac{1}{n}\sum_{i : \xi_{n,i}>\tau} \Delta^C_n(1,W_i)$ and $\gamma_n : \tau\mapsto \frac{1}{n}\sum_{i : \xi_{n,i}=\tau} \Delta^C_n(W_i)$. For any $k \in [0,\infty)$, define $\eta_n(k) := \inf \{\tau: \Gamma_n(\tau) \leq k - \alpha \phi_n \}$, $\tau_n(k) := \max\{\eta_n(k),0\}$, and, for $i=1,2,\ldots,n$,
        $$d_{n,k,i} := \begin{cases}
            \frac{k - \alpha \phi_n - \Gamma_n(\eta_n(k))}{ \gamma_n(\eta_n(k))}, & \text{ if } \xi_{n,i}=\eta_n(k) \text{ and } \gamma_n(\eta_n(k)) > 0, \\
            I\{\xi_{n,i}>\eta_n(k)\}, & \text{ otherwise.}
        \end{cases}$$
        \item Compute $k_n$, which is used to define an estimate of $\edecision_0$ for which the plug-in estimator is asymptotically linear.
        \begin{itemize}
            \item If $\tau_n (\kappa) > 0$ \textit{and} there is a solution in $k \in [0,\infty)$ to
            \begin{equation} \label{e RC equation: sample splitting update quantile}
                \frac{1}{n}\sum_{i=1}^n  d_{n,k,i} \left[ \Delta^C_n(W_i) + \frac{1}{T_i+\Q^T_n(W_i)-1} [C_i - \Q^C_n(T_i,W_i)] \right] + \alpha \phi_n = \kappa,
            \end{equation}
            then take $k_n$ to be this solution.
            \item otherwise, set $k_n=\kappa$.
        \end{itemize}
        \item For each $i=1,2,\ldots,n$, estimate $\edecision_0(V_i)$ with
        $$\edecision_{n,i} := \begin{cases}
            \frac{k_n - \alpha \phi_n - \Gamma_n(\tau_n(k_n))}{\gamma_n(\tau_n(k_n))}, & \text{ if } \xi_{n,i}=\tau_n(k_n),\; \text{ and } \gamma_n(\tau_n(k_n))>0, \\
            I\{\xi_{n,i}>\tau_n(k_n)\}, & \text{ otherwise.}
        \end{cases}$$
    \end{enumerate}
    
    \item Obtain an estimate $\edecision^\mathcal{R}_n$ of the reference ITR $\edecision^\mathcal{R}_0$ as follows:
    \begin{itemize}
        \item For $\mathcal{R}=\FR$, take $\edecision^\mathcal{R}_n$ to be $\edecision^\FR$.
        \item For $\mathcal{R}=\RD$,
        \begin{enumerate}
            \item obtain a targeted estimate $\hat{\Q}^C_n$ of $\Q^C_0$: run an ordinary least-squared regression using observations $i=1,2,\ldots,n$ with outcome $C_i$, offset $\Q^C_n(T_i,W_i)$, no intercept and covariate $1/(T_i+\Q^T_n(W_i)-1)$. Take $\hat{\Q}^C_n$ to be the fitted mean model;
            \item take $\edecision^\mathcal{R}_n$ to be the constant function $o\mapsto \min \{1,(\kappa-\alpha\phi_n)/\hat{P}_{W,n} \hat{\Delta}^C_n\}$, where we define pointwise $\hat{\Delta}^C_n: w \mapsto \hat{\Q}^C_n(1,w) - \hat{\Q}^C_n(0,w)$.
        \end{enumerate}
        \item For $\mathcal{R}=\TP$, take $\edecision^\mathcal{R}_n$ to be $\Q^T_n$.
    \end{itemize}

    \item Estimate ATE of $\edecision_0$ relative to the reference ITR $\edecision^\mathcal{R}_0$ with a targeted minimum-loss based estimator (TMLE) $\psi_n$:
    \begin{enumerate}
        \item obtain a targeted estimate $\hat{\Q}^Y_n$ of $\Q^Y_0$: run an ordinary least-squares linear regression using observations $i=1,2,\ldots,n$ with outcome $Y_i$, offset $\Q^Y_n(T_i,W_i)$, no intercept and covariate $[\edecision_{n,i}-\edecision^\mathcal{R}_n(O_i)]/[T_i + \Q^T_n(W_i) - 1]$. Take $\hat{\Q}^Y_n$ to be the fitted mean function.
        
        \item with $\hat{P}_n$ being any distribution with components $\hat{\Q}^Y_n$ and $\hat{P}_{W,n}$, set $\psi_n:=\frac{1}{n} \sum_{i=1}^n \edecision_{n,i} \hat{\Delta}^Y_n(W_i) - \Psi_{\edecision^\mathcal{R}_n}(\hat{P}_n)$ where $\hat{\Delta}^Y_n: w \mapsto \hat{\Q}^Y_n(1,w)-\hat{\Q}^Y_n(0,w)$.
    \end{enumerate}
\end{enumerate}

\section{Proof of theorems} \label{section: proof}

\subsection{Identification results (Theorem~\ref{theorem: identify ATE} and \ref{theorem: true optimal IER})}

Theorem~\ref{theorem: identify ATE} is a simple corollary of the standard G-formula \citep{Robins1986}. We provide a complete proof below.
\begin{proof}[Proof of Theorem~\ref{theorem: identify ATE}]
    Note that
    \begin{align*}
        \cexpect[Y(1)  \mid  W] = \cexpect[Y(1) \mid T=1, W] = \expect_0[Y  \mid T=1,W] = \Q^Y_0(1,W).
    \end{align*}
    Similarly, $\cexpect[Y(0) \mid W] = \expect_0[Y \mid T=0,W] = \Q^Y_0(0,W)$. Hence, $\cexpect[Y(1)-Y(0) \mid W]=\Delta^Y_0(W)$. By the law of total expectation, this yields that $\cexpect[Y(1)-Y(0) \mid V]=\expect_0[\Delta^Y_0(W) \mid V] = \delta^Y_{0}(V)$. It then follows that
    \begin{align*}
        \cexpect[Y(\edecision) - Y(\edecision^\mathcal{R}_0)]&= \cexpect[\{\edecision(V)-\edecision^\mathcal{R}_0(W)\} \{Y(1) - Y(0)\}] \\
        &= \expect_0[\{\edecision(V)-\edecision^\mathcal{R}_0(W)\} \cexpect[Y(1) - Y(0) \mid W] ] \\
        &= \expect_0[\{\edecision(V)-\edecision^\mathcal{R}_0(W)\} \Delta^Y_0(W)].
    \end{align*}
    The results for the treatment cost can be proved similarly.
\end{proof}

We next prove Theorem~\ref{theorem: true optimal IER}.
\begin{proof}[Proof of Theorem~\ref{theorem: true optimal IER}]
    Let $\edecision$ be any ITR that satisfies the constraint that $\expect_0[\edecision(V)\delta^C_{0}(V)] + \alpha \phi_0 \le \kappa$. We will show that $\expect_0[\edecision_0(V) \delta^Y_{0}(V)] \ge \expect_0[\edecision(V) \delta^Y_{0}(V)]$, implying that $\edecision_0$ is a solution to \eqref{eq:ATEknapsack}.
    
    Observe that
    \begin{align*}
        & \expect_0[\edecision_0(V) \delta^Y_{0}(V)] - \expect_0[\edecision(V) \delta^Y_{0}(V)] \\
        &= \expect_0[\{\edecision_0(V)-\edecision(V)\} \delta^Y_{0}(V)] \\
        &= \expect_0[\{\edecision_0(V)-\edecision(V)\} \delta^Y_{0}(V) I(\xi_0(V) > \tau_0)] + \expect_0[\{\edecision_0(V)-\edecision(V)\} \delta^Y_{0}(V) I(\xi_0(V) < \tau_0)] \\
        &\hspace{.5in}+ \expect_0[\{\edecision_0(V)-\edecision(V)\} \delta^Y_{0}(V) I(\xi_0(V) = \tau_0)] \\
        &=\expect_0[\{\edecision_0(V)-\edecision(V)\} \xi_0(V) \delta^C_{0}(V) I(\xi_0(V) > \tau_0)] + \expect_0[\{\edecision_0(V)-\edecision(V)\} \xi_0(V) \delta^C_{0}(V) I(\xi_0(V) < \tau_0)] \\
        &\hspace{.5in}+ \expect_0[\{\edecision_0(V)-\edecision(V)\} \xi_0(V) \delta^C_{0}(V) I(\xi_0(V) = \tau_0)].
    \end{align*}
    Note that $\edecision_0(v) = 1 \geq \edecision(v)$ if $\xi_0(v) > \tau_0$ and $\edecision_0(v) = 0 \leq \edecision(v)$ if $\xi_0(v) < \tau_0$. Combining this observation with the fact that $\tau_0\ge 0$, the above shows that
    \begin{align*}
        & \expect_0[\edecision_0(V) \delta^Y_{0}(V)] - \expect_0[\edecision(V) \delta^Y_{0}(V)] \\
        &\geq \tau_0 \expect_0[\{\edecision_0(V)-\edecision(V)\} \delta^C_{0}(V) I(\xi_0(V) > \tau_0)] + \tau_0 \expect_0[\{\edecision_0(V)-\edecision(V)\} \delta^C_{0}(V) I(\xi_0(V) < \tau_0)] \\
        &\hspace{.5in}+ \tau_0 \expect_0[\{\edecision_0(V)-\edecision(V)\} \delta^C_{0}(V) I(\xi_0(V) = \tau_0)] \\
        &= \tau_0 \expect_0[\{\edecision_0(V)-\edecision(V)\} \delta^C_{0}(V)].
    \end{align*}
    If $\tau_0 = 0$, then $\expect_0[\edecision_0(V) \delta^Y_{0}(V)] - \expect_0[\edecision(V) \delta^Y_{0}(V)] \geq 0$, as desired; otherwise, $\tau_0 > 0$ and $\expect_0[\edecision(V) \delta^C_{0}(V)]\leq \kappa - \alpha \phi_0 = \expect_0[\edecision_0(V) \delta^C_{0}(V)]$, and so it follows that $\expect_0[\edecision_0(V) \delta^Y_{0}(V)] \geq \expect_0[\edecision(V) \delta^Y_{0}(V)]$. Therefore, we conclude that $\edecision_0$ is a solution to \eqref{eq:ATEknapsack}.
\end{proof}

\subsection{Pathwise differentiability of ATE parameter (Theorem~\ref{e RC theorem: differentiability})} \label{section: proof differentiability}

We follow existing literature on semiparametric efficiency theory closely to prove pathwise differentiability of our estimands and asymptotic efficiency of our estimators under nonparametric models. We refer readers to, for example, \citet{Pfanzagl1982,Pfanzagl1990,Bolthausen2002}, for a more thorough introduction to semiparametric efficiency.

To derive the canonical gradient of the ATE parameters, let $\mathcal{H} \subseteq L^2_0(P_0)$ be the set of score functions with range contained in $[-1,1]$ and we study the behavior of the parameters under perturbations in an arbitrary direction $H \in \mathcal{H}$. We note that the $L^2_0(P_0)$-closure of $\mathcal{H}$ is indeed $L^2_0(P_0)$.

We define $H_W: w \mapsto \expect_0[H(O) \mid W=w]$, $H_{T}: (t \mid w) \mapsto \expect_0[H(O) \mid T=t,W=w]$ and $P_{H,\epsilon}$ via its Radon-Nikodym derivative with respect to $P_0$:
\begin{align}
    \frac{dP_{H,\epsilon}}{dP_0} : o\mapsto \left[1+\epsilon H(o)-\epsilon H_{T}(t \mid w)-\epsilon H_{W}(w)\right] \left[1+\epsilon H_{T}(t \mid w)\right] \left[1+\epsilon H_{W}(w)\right] \label{eq:submodel}
\end{align}
for any $\epsilon$ in a sufficiently small neighborhood of 0 such that the right-hand side is positive for all $o \in \mathcal{W} \times \{0,1\} \times \{0,1\} \times \real$. It is straightforward to verify that the score function for $\epsilon$ at $\epsilon=0$ is indeed $H$. For the rest of this section, we may drop $H$ from the notation and use $P_\epsilon$ as a shorthand notation for $P_{H,\epsilon}$ when no confusion should arise.

We will see that each parameter evaluated at $P_\epsilon$ depends on the following marginal or conditional distributions in a clean way: the marginal distribution $P_{W,\epsilon}$ of $W$, the marginal distribution $P_{T,W,\epsilon}$ of $(T,W)$, the conditional distribution $P_{T,\epsilon}$ of $T$ given $W$, the conditional distribution $P_{C,\epsilon}$ of $C$ given $(T,W)$, and the conditional distribution $P_{Y,\epsilon}$ of $Y$ given $(T,W)$. We now derive their closed-form expressions. Let $H_C : (c \mid t,w)\mapsto \expect_0[H(O) \mid C=c,T=t,W=w] - H_{T}(t \mid w) - H_W(w)$, and $H_Y : (y \mid t,w)\mapsto \expect_0[H(O) \mid Y=y,T=t,W=w] - H_{T}(t \mid w) - H_W(w)$. We can then show that
\begin{align}
    \begin{split}
        \frac{dP_{W,\epsilon}}{dP_{W,0}} &: w\mapsto 1 + \epsilon H_W(w), \\
        \frac{dP_{T,W,\epsilon}}{dP_{T,W,0}} &: (t,w)\mapsto 1 + \epsilon H_T(t \mid w) + \epsilon H_{W}(w), \\
        \frac{dP_{T,\epsilon}}{dP_{T,0}}(\, \cdot\,  \mid w) &: t\mapsto 1 + \epsilon H_T(t \mid w), \\
        \frac{dP_{C,\epsilon}}{dP_{C,0}}(\, \cdot\,  \mid t,w) &: c\mapsto 1 + \epsilon H_C(c \mid t,w), \\
        \frac{dP_{Y,\epsilon}}{dP_{Y,0}}(\, \cdot\,  \mid t,w) &: y\mapsto 1 + \epsilon H_Y(y \mid t,w).
        \label{eq:epsMarginalsConditionals}
    \end{split}
\end{align}
Moreover, $\expect_0[H_W(W)]=0$, $\expect_0[H_{T}(T \mid W) \mid W]=0$ $P_0$-a.s., $\expect_0[H_C(C \mid T,W) \mid T,W]=0$ $P_0$-a.s., and $\expect_0[H_Y(Y \mid T,W) \mid T,W]=0$ $P_0$-a.s.

We finally introduce some additional notations that are used for the rest of the section. We use $\const$ to denote a generic positive constant that may vary line by line. Let $S_0$ be the survival function of the distribution of $\xi_0(V)$ when $V\sim P_0$. We also use the notation $\lesssim$ defined in Section~\ref{section: discussion of IER remainder}. For a generic function $f : \mathbb{R}\rightarrow\mathbb{R}$, we will use the big- and little-oh notations, namely $\bigO(f(\epsilon))$ and $\smallo(f(\epsilon))$, respectively, to denote the behavior of $f(\epsilon)$ as $\epsilon\rightarrow 0$. Finally, for a general function or quantity $f_P$ that depends on a distribution $P$, we use $f_\epsilon$ to denote $f_{P_\epsilon}$. For example, we may write $\Q^Y_\epsilon$ as a shorthand for $\Q^Y_{P_\epsilon}$. We will also write expectations under $P_\epsilon$ as $\expect_\epsilon$.

The derivation of the canonical gradients of $P \mapsto \Psi_{\edecision^\FR}(P)$ can be found in the Supplement of \citeauthor{Qiu2021} \citep{Qiu2021}. We now derive the canonical gradients of $P \mapsto \Psi_{\edecision^\TP_P}(P)$, $P \mapsto \Psi_{\edecision^\RD_P}(P)$ and $P \mapsto \Psi_{\edecision_P}(P)$, which are different from the parameters in \citet{Qiu2021}.

\subsubsection{Canonical gradient of \texorpdfstring{$P \mapsto \Psi_{\edecision^\TP_P}(P)$}{true propensity reference rule mean outcome} (Theorem~\ref{e RC theorem: differentiability})}

Fix a score $H\in\mathscr{H}$.  Note that, for all $P\in\mathscr{M}$,  $\Psi_{\edecision^\TP_P}(P) = \int \Q^T_P(w) \Delta^Y_P(w) P_{W}(dw)$. Combining this, \eqref{eq:epsMarginalsConditionals} and the chain rule yields that
\begin{align*}
    & \left. \frac{d}{d \epsilon} \Psi_{\edecision^\TP_{\epsilon}}(P_\epsilon) \right|_{\epsilon=0} \\
    &= \int \left. \frac{d}{d \epsilon} \left[\Q^T_{\epsilon}(w) \Delta^Y_{\epsilon}(w) P_{W,\epsilon}(dw)\right] \right|_{\epsilon=0} \\
    &= \int \left( \left. \frac{d}{d \epsilon} \Q^T_{\epsilon}(w) \right|_{\epsilon=0} \right) \Delta^Y_0(w) P_{W,0}(dw) + \int \Q^T_0(w) \left( \left. \frac{d}{d \epsilon} \Delta^Y_{\epsilon}(w) \right|_{\epsilon=0} \right) P_{W,0}(dw) \\
    &\quad+ \int \Q^T_0(w) \Delta^Y_0(w) \left. \frac{d}{d \epsilon} P_{W,\epsilon}(dw)\right|_{\epsilon=0} \\
    &= \iint (t-\Q^T_0(w)) H_T(t \mid w) \Delta^Y_0(w) P_{T,0}(dt \mid w) P_{W,0}(dw) \\
    &\quad+ \iiint \Q^T_0(w) \left( \frac{I(t=1)}{\Q^T_P(w)} - \frac{I(t=0)}{1-\Q^T_P(w)} \right) (y-\Q^Y_0(t,w)) H_Y(y \mid t,w) P_{Y,0}(dy \mid t,w) P_{T,0}(dt \mid w) P_{W,0}(dw) \\
    &\quad+ \int (\Q^T_0(w) \Delta^Y_0(w) - \Psi_{\edecision^\TP_0}(P_0)) H_{W}(w) P_{W,0}(dw) \\
    &= \int G_\TP(P_0)(o) H(o) P_0(do),
\end{align*}
where we have used the fact that $\expect_0[H_Y(Y \mid T,W) \mid T,W]=0$ $P_0$-a.s., $\expect_0[H_T(T \mid W) \mid W]=0$ $P_0$-a.s., and $\expect_0[H_{W}(W)]=0$. Therefore, the canonical gradient of $P\mapsto \Psi_{\edecision^\TP_P}(P)$ at $P_0$ is $G_\TP(P_0)$.

\subsubsection{Canonical gradient of \texorpdfstring{$P \mapsto \Psi_{\edecision^\RD_P}(P)$}{randomly distributed reference ITR mean outcome}}

Let $H$ be a score function in $\mathcal{H}$. We aim to show that
\begin{align}
    \left.\frac{d}{d\epsilon}\Psi_{\edecision^\RD_{\epsilon}}(P_\epsilon)\right|_{\epsilon=0} = \int G_\RD(P_0)(o) H(o) P_0(do), \label{eq:RDlim}
\end{align}
which shows that $P \mapsto \Psi_{\edecision^\RD_P}(P)$ is pathwise differentiable with canonical gradient $G_\RD(P_0)$ at $P_0$.

By similar arguments to those in Section~3.4 of \citeauthor{Kennedy2016} \citep{Kennedy2016}, we can show that
\begin{align}
    \left. \frac{d}{d\epsilon} P_\epsilon \Q^C_{\epsilon}(0,\cdot) \right|_{\epsilon=0}&= \int \left\{ \frac{1-t}{1-\Q^T_{0}(w)} [c-\Q^C_{0}(0,w)] + \Q^C_{0}(0,w) - P_0 \Q^C_{0}(0,\cdot) \right\} H(o) P_0(do), \label{eq:resourcegrad1} \\
    \left. \frac{d}{d\epsilon} P_\epsilon \Delta^C_{\epsilon} \right|_{\epsilon=0}&= \int \left\{ \frac{1}{t+\Q^T_{0}(w)-1} [c-\Q^C_{0}(t,w)] + \Delta^C_{0}(w) - P_0 \Delta^C_{0} \right\} H(o) P_0(do). \label{eq:resourcegrad2}
\end{align}
Consequently, $P_\epsilon \Q^C_{\epsilon}(0,\cdot)= P_0 \Q^C_{0}(0,\cdot) + \bigO(\epsilon)$ and $P_\epsilon \Delta^C_{\epsilon}= P_0 \Delta^C_{0} + \bigO(\epsilon)$. It follows that, for all $\epsilon$ in a sufficiently small neighborhood of zero, Condition~\ref{e RC assumption2: active constraint} implies that $ (\kappa-\alpha P_\epsilon \Q^C_{\epsilon}(0,\cdot))/P_\epsilon \Delta^C_{\epsilon} < 1$. Consequently, for each $\epsilon$ in this neighborhood, $\Psi_{\edecision^\RD_{\epsilon}}(P_\epsilon) = \frac{\kappa - \alpha P_\epsilon \Q^C_{\epsilon}(0,\cdot)}{P_\epsilon \Delta^C_{\epsilon}} \Psi_{v\mapsto 1}(P_\epsilon)$, where we have used that $P_\epsilon \Delta^Y_{\epsilon}=\Psi_{v\mapsto 1}(P_\epsilon)$. It follows that the derivative $\left. \frac{d}{d\epsilon} P_\epsilon \Q^C_{\epsilon}(0,\cdot) \right|_{\epsilon=0}$ is the same as the derivative of $f : \epsilon \mapsto \frac{\kappa - P_\epsilon \Q^C_{\epsilon}(0,\cdot)}{P_\epsilon \Delta^C_{\epsilon}} \Psi_{v\mapsto 1}(P_\epsilon)$ at $\epsilon=0$, provided this derivative exists. Noting that $v\mapsto 1$ is a particular instance of a fixed treatment rule, we may take $\edecision^\FR$ to be $v \mapsto 1$ in the results on pathwise differentiability of $P \mapsto \Psi_{\edecision^\FR}(P)$ and show that
\begin{align}
    \left.\frac{d}{d\epsilon} \Psi_{v\mapsto 1}(P_\epsilon) \right|_{\epsilon=0} = \int D(P_0,v\mapsto 1,0,\Q^C_0)(o) H(o) P_0(do). \label{eq:fixedref1grad}
\end{align}
As both the above derivative and the derivatives in \eqref{eq:resourcegrad1} and \eqref{eq:resourcegrad2} exist, by the chain rule, it follows that
\begin{align*}
    \left.\frac{d}{d\epsilon} f(\epsilon) \right|_{\epsilon=0}&= \frac{\kappa - \alpha P_0 \Q^C_0(0,\cdot)}{P_0 \Delta^C_0}\left.\frac{d}{d\epsilon} \Psi_{v\mapsto 1}(P_\epsilon) \right|_{\epsilon=0} - \frac{(\kappa - \alpha P_0 \Q^C_0(0,\cdot))\Psi_{v\mapsto 1}(P_0)}{(P_0 \Delta^C_0)^2} \left. \frac{d}{d\epsilon} P_\epsilon \Delta^C_{\epsilon} \right|_{\epsilon=0} \\
    &\hspace{.5in}- \alpha \frac{\Psi_{v\mapsto 1}(P_0)}{P_0 \Delta^C_0}\left.\frac{d}{d\epsilon} P_\epsilon \Q^C_{\epsilon}(0,\cdot) \right|_{\epsilon=0}.
\end{align*}
Note that $\phi_P=P \Q^C_P(0,\cdot)$. Plugging \eqref{eq:resourcegrad1}, \eqref{eq:resourcegrad2} and \eqref{eq:fixedref1grad} into the above and we can show that the right-hand side of the above is equal to the right-hand side of \eqref{eq:RDlim}. As $\left.\frac{d}{d\epsilon} f(\epsilon) \right|_{\epsilon=0}=\left.\frac{d}{d\epsilon}\Psi_{\edecision^\RD_{\epsilon}}(P_\epsilon)\right|_{\epsilon=0}$, we have shown that \eqref{eq:RDlim} holds, and the desired result follows.

\subsubsection{Canonical gradient of \texorpdfstring{$P \mapsto \Psi_{\edecision_P}(P)$}{optimal ITR mean outcome}}

Let $H$ be a score function in $\mathcal{H}$. The argument that we use parallels that of \citeauthor{Luedtke2016} \citep{Luedtke2016} and \citeauthor{Qiu2021} \citep{Qiu2021}, except that it is slightly modified to account for the fact that the resource constraint takes a different form in this paper.

We first note that all of following hold for all $\epsilon$ sufficiently close to zero:
\begin{align}
    \sup_w |\Delta^C_{\epsilon}(w) - \Delta^C_{0}(w)|&\lesssim |\epsilon|, \label{eq:DeltaAbd} \\
    \sup_v |\delta^Y_{\epsilon}(v) - \delta^Y_{0}(v)|&\lesssim |\epsilon|, \label{eq:DeltaYbd} \\
    \sup_v |\delta^C_{\epsilon}(v) - \delta^C_{0}(v)|&\lesssim |\epsilon|. \label{eq:DeltaAbbd}
\end{align}
The derivations of these inequalities are straightforward and hence omitted. Under Condition~\ref{IV assumption: strong encouragement}, the above inequalities imply that
\begin{align}
    \sup_v |\xi_{\epsilon}(v) - \xi_{0}(v)| = \left| \frac{\delta^Y_{\epsilon}(v)}{\delta^C_{\epsilon}(v)} - \frac{\delta^Y_{0}(v)}{\delta^C_{0}(v)} \right| \lesssim |\epsilon|. \label{eq:xibd}
\end{align}

For $\epsilon$ sufficiently close to zero, it will be useful to define
\begin{align*}
    \Gamma_\epsilon : \eta\mapsto \expect_\epsilon [I\{\xi_{\epsilon}(V) > \eta\} \delta^C_{\epsilon}(V)]
\end{align*}
for $\eta \in [-\infty,\infty)$. We also define $\Gamma_\epsilon' : \eta\mapsto \frac{d}{ds} \Gamma_\epsilon(s)|_{s=\eta}$ when the derivative exists.

We first show two lemmas. These two lemmas show that, under a perturbed distribution $P_\epsilon$ with magnitude $\epsilon$, the fluctuation in the threshold $\tau_\epsilon-\epsilon_0$ is of order $\epsilon$. This result is crucial in quantifying the convergence rate of two terms in the expansion of $\Psi_{\rho_\epsilon}(P_\epsilon) - \Psi_{\rho_0}(P_0)$, namely terms~1 and 3 in \eqref{eq:IEROptMainPDdisplay} below. In particular, term~1 is the main challenge in the analysis as it comes from the perturbation in the threshold and is unique in estimation problems involving the evaluation of optimal ITRs. The first studies the convergence of $\eta_\epsilon$ to $\eta_0$. Because it may be the case that $\eta_0=-\infty$, the convergence stated in this result is convergence in the extended real line.
\begin{lemma}\label{lem:etacontinuity}
    Under the conditions of Theorem~\ref{e RC theorem: differentiability}, $\eta_{\epsilon} \rightarrow \eta_{0}$ as $\epsilon \rightarrow 0$.
\end{lemma}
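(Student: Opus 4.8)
The plan is to carry over the population analog $\eta_\epsilon := \inf\{\eta \in [-\infty,\infty) : \Gamma_\epsilon(\eta) \le \kappa - \alpha\phi_\epsilon\}$ of $\eta_0$ and to distinguish three regimes: (a) $\kappa = \infty$; (b) $\kappa < \infty$ with $\eta_0 = -\infty$; and (c) $\eta_0 > -\infty$, which forces $\kappa < \infty$ since $\kappa = \infty$ makes the constraint vacuous and hence $\eta_0 = -\infty$. Regime (a) is immediate, as $\kappa - \alpha\phi_\epsilon = \infty$ for every $\epsilon$ and so $\eta_\epsilon = -\infty = \eta_0$. For the remaining two regimes I would first record that, for $|\epsilon|$ small, $\delta^C_\epsilon \ge 0$ by Condition~\ref{IV assumption: strong encouragement} and \eqref{eq:DeltaAbbd}, so $\Gamma_0$ and $\Gamma_\epsilon$ are nonincreasing on $[-\infty,\infty)$; both regimes then reduce to a single approximation fact, which I would establish first.

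The approximation step is: for each fixed finite $\eta$, and (in regime (c)) each $\eta$ in a neighborhood of $\eta_0$, $\Gamma_\epsilon(\eta) \to \Gamma_0(\eta)$ and $\phi_\epsilon \to \phi_0$ as $\epsilon \to 0$. Convergence of $\phi_\epsilon = P_\epsilon \Q^C_\epsilon(0,\cdot)$ follows from the existence of the derivative in \eqref{eq:resourcegrad1}. For $\Gamma_\epsilon(\eta) - \Gamma_0(\eta)$ I would telescope through $\expect_\epsilon[I\{\xi_\epsilon(V) > \eta\}\delta^C_0(V)]$ and $\expect_\epsilon[I\{\xi_0(V) > \eta\}\delta^C_0(V)]$, producing three pieces: (i) one bounded by $\sup_v|\delta^C_\epsilon(v) - \delta^C_0(v)| \lesssim |\epsilon|$ via \eqref{eq:DeltaAbbd}; (ii) a change-of-measure piece $\expect_\epsilon[g] - \expect_0[g]$ with $g$ bounded, using that $\delta^C_0$ is bounded (Condition~\ref{e RC assumption: bounded encouragement effect}), which is $\bigO(\epsilon)$ by \eqref{eq:epsMarginalsConditionals}; and (iii) the piece $\expect_\epsilon[(I\{\xi_\epsilon(V)>\eta\} - I\{\xi_0(V)>\eta\})\delta^C_0(V)]$, which by \eqref{eq:xibd} has absolute value at most a constant times $P_0(|\xi_0(V) - \eta| \le \const|\epsilon|) + \bigO(\epsilon)$, and this tends to zero because $\xi_0(V)$ has no atom near $\eta_0$ by Condition~\ref{e RC assumption: continuous density}. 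Piece (iii), the displacement of the indicator threshold, is the only non-mechanical point and is precisely why Condition~\ref{e RC assumption: continuous density} is imposed; the rest is order-$\epsilon$ bookkeeping already packaged in \eqref{eq:DeltaAbbd}, \eqref{eq:xibd}, and \eqref{eq:resourcegrad1}.

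In regime (c), where $\eta_0 > -\infty$, Conditions~\ref{e RC assumption: continuous density}--\ref{e RC assumption: continuous weight} make $\Gamma_0$ continuous and strictly decreasing in a neighborhood of $\eta_0$ (its derivative is nonzero by Condition~\ref{e RC assumption: continuous weight} and nonpositive since $\Delta^C_0 > 0$), so $\Gamma_0(\eta_0) = \kappa - \alpha\phi_0$, and for every small $\zeta > 0$ the function $g_0 : \eta \mapsto \Gamma_0(\eta) - (\kappa - \alpha\phi_0)$ has $g_0(\eta_0 - \zeta) > 0 > g_0(\eta_0 + \zeta)$. Writing $g_\epsilon : \eta \mapsto \Gamma_\epsilon(\eta) - (\kappa - \alpha\phi_\epsilon)$, the approximation step gives $g_\epsilon(\eta_0 \pm \zeta) \to g_0(\eta_0 \pm \zeta)$, so for all sufficiently small $\epsilon$, $g_\epsilon(\eta_0 - \zeta) > 0 > g_\epsilon(\eta_0 + \zeta)$; since $g_\epsilon$ is nonincreasing and $\eta_\epsilon = \inf\{\eta : g_\epsilon(\eta) \le 0\}$, this pins $\eta_\epsilon$ inside $[\eta_0 - \zeta, \eta_0 + \zeta]$. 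Letting $\zeta \downarrow 0$ gives $\eta_\epsilon \to \eta_0$.

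In regime (b), where $\eta_0 = -\infty$ and $\kappa < \infty$, Condition~\ref{e RC assumption: continuous weight} reads $\expect_0[\Delta^C_0(W)] < \kappa - \alpha\phi_0$. Since $\delta^C_\epsilon \ge 0$ for small $\epsilon$, $\Gamma_\epsilon$ is maximized at $\eta = -\infty$ with $\Gamma_\epsilon(-\infty) = \expect_\epsilon[\delta^C_\epsilon(V)] = \expect_\epsilon[\Delta^C_\epsilon(W)] = \expect_0[\Delta^C_0(W)] + \bigO(\epsilon)$, using \eqref{eq:DeltaAbd}, \eqref{eq:epsMarginalsConditionals}, and Condition~\ref{e RC assumption: bounded encouragement effect}; combined with $\kappa - \alpha\phi_\epsilon = \kappa - \alpha\phi_0 + \bigO(\epsilon)$ and the strict inequality, for all small $\epsilon$ we get $\Gamma_\epsilon(-\infty) \le \kappa - \alpha\phi_\epsilon$, hence $\eta_\epsilon = -\infty = \eta_0$. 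This exhausts all regimes, so $\eta_\epsilon \to \eta_0$; as noted, the only genuinely delicate step is piece (iii) of the approximation, and it is exactly what Condition~\ref{e RC assumption: continuous density} is there to control.
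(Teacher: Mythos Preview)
Your proof is correct and follows the same architecture as the paper's: the same case split, and in the case $\eta_0>-\infty$ the same bracketing of $\eta_\epsilon$ in $[\eta_0-\zeta,\eta_0+\zeta]$ via strict monotonicity of $\Gamma_0$ once $\Gamma_\epsilon(\eta_0\pm\zeta)+\alpha\phi_\epsilon$ is shown close to $\Gamma_0(\eta_0\pm\zeta)+\alpha\phi_0$. The only differences are tactical: the paper handles your piece~(iii) not via a no-atom argument but through the one-line sandwich $I\{\xi_\epsilon>\eta\}\le I\{\xi_0>\eta-\const|\epsilon|\}$ combined with continuity of $\Gamma_0$ from Condition~\ref{e RC assumption: continuous weight}, and in regime~(b) it shows $\eta_\epsilon\le\eta$ for each fixed real $\eta$ rather than evaluating $\Gamma_\epsilon(-\infty)$ directly as you do.
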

\begin{proof}[Proof of Lemma~\ref{lem:etacontinuity}]
    We separately consider the cases where $\eta_{0} > -\infty$ and $\eta_{0} = -\infty$.
    
    Suppose that $\eta_{0} > -\infty$. For all sufficiently small $\delta>0$ and sufficiently small $|\epsilon|$, by \eqref{eq:DeltaAbbd}, \eqref{eq:xibd} and the fact that the range of $H$ is contained in $[-1,1]$, we can show that
    \begin{align*}
        \Gamma_\epsilon(\eta_0 + \delta) + \alpha \phi_\epsilon &\leq (1+\const |\epsilon|) \expect_0[ I\{\xi_{\epsilon}(V) > \eta_0+\delta\} \delta^C_{0}(V)] + \alpha \phi_\epsilon \leq (1+\const |\epsilon|) \Gamma_0\left(\eta_0 + \delta - \const |\epsilon|\right) + \alpha \phi_\epsilon.
    \end{align*}
    Under Condition~\ref{e RC assumption: continuous weight}, as long as $\delta$ is small enough, the right-hand side converges to $\Gamma_0(\eta_0 + \delta) + \alpha \phi_0$ as $\epsilon\rightarrow 0$. Moreover, Conditions~\ref{e RC assumption: continuous weight}~and~\ref{IV assumption: strong encouragement} can be combined to show that the derivative of $\Gamma_0$ is strictly negative for all $x\in[\eta_0,\eta_0+\delta]$ for sufficiently small $\delta$, and so $\Gamma_0(\eta_0)>\Gamma_0(\eta_0 + \delta)$. Because $\Gamma_0(\eta_0)+\alpha\phi_0=\kappa$ by the definition of $\edecision_0$ under Condition~\ref{e RC assumption: continuous density}, it follows that, for all $\epsilon$ sufficiently close to zero, $\Gamma_\epsilon(\eta_0 + \delta)+\alpha\phi_\epsilon<\kappa$. By the definition $\eta_{\epsilon}:=\inf \{ \eta: \Gamma_\epsilon(\eta) \leq \kappa-\alpha\phi_\epsilon \}$,  it follows that, for all $\epsilon$ sufficiently close to zero, $\eta_0+\delta \ge \eta_{\epsilon}$, that is, $\eta_{\epsilon}-\eta_0\le \delta$.
    
    By similar arguments, we can show that, for all $\epsilon$ sufficiently close to zero, $\eta_{\epsilon}-\eta_0\ge -\delta$. Indeed,
    \begin{align*}
        \Gamma_\epsilon(\eta_0 - \delta) + \alpha\phi_\epsilon \ge (1-\const |\epsilon|)\Gamma_0(\eta_0 - \delta + \const |\epsilon|) + \alpha\phi_\epsilon.
    \end{align*}
    The right-hand side converges to $\Gamma_0(\eta_0-\delta) + \alpha\phi_0$ as $\epsilon\rightarrow 0$ provided $\delta$ is sufficiently small. The derivative of $\Gamma_0$ is strictly negative on $[\eta_0-\delta,\eta_0]$ provided $\delta$ is small enough, and therefore, $\Gamma_0(\eta_0-\delta) + \alpha\phi_0 > \Gamma_0(\eta_0) + \alpha\phi_0 = \kappa$. Hence, $\Gamma_\epsilon(\eta_0 - \delta) + \alpha\phi_\epsilon > \kappa$. By the definition of $\eta_{\epsilon}$, it follows that $\eta_{\epsilon} - \eta_0\ge -\delta$.
    
    Combining these two results, we see that, for all $\epsilon$ sufficiently close to zero, $|\eta_{\epsilon}-\eta_0|\le \delta$. Hence, $\limsup_{\epsilon\rightarrow 0} |\eta_{\epsilon}-\eta_0|\le \delta$. As $\delta>0$ is an arbitrary in a neighborhood of zero, it follows that $\limsup_{\epsilon\rightarrow 0} |\eta_{\epsilon}-\eta_0|=0$. That is, $\eta_{\epsilon}\rightarrow \eta_0$ as $\epsilon\rightarrow 0$ in the case that $\eta_{0} > -\infty$.
    
    We now study the case where $\eta_{0} = -\infty$. If $\kappa=\infty$, then it is trivial that $\eta_\epsilon=-\infty=\eta_0$ for all $\epsilon$, and so the desired result holds. Suppose now that $\kappa<\infty$. Fix a small enough $\delta>0$ so that the bound in \eqref{eq:xibd} is valid for all $\epsilon\in[-\delta,\delta]$. Also fix $\epsilon\in[-\delta,\delta]$ and $\eta\in\real$. By \eqref{eq:xibd} and the bound on the range of $H$,
    \begin{align*}
        \Gamma_\epsilon(\eta) + \alpha \phi_\epsilon \leq (1+\const |\epsilon|) \expect_0[I\{\xi_\epsilon(V)>\eta\} \Delta^C_{0,b}(V)] + \alpha \phi_\epsilon \leq (1+\const |\epsilon|) \Gamma_0(\eta-\const |\epsilon|) + \alpha \phi_\epsilon.
    \end{align*}
    Because $\Gamma_0$ is a nonnegative decreasing function, the right-hand side is no greater than $(1+\const |\epsilon|) \Gamma_0(\eta - \const \delta) + \alpha \phi_\epsilon$. This upper bound tends to $\Gamma_0(\eta - \const \delta) + \alpha \phi_0$ as $\epsilon\rightarrow 0$. Hence, $\limsup_{\epsilon\rightarrow 0} \Gamma_\epsilon(\eta) + \alpha \phi_\epsilon \le \Gamma_0(\eta - \const \delta) + \alpha \phi_0$. By Condition~\ref{e RC assumption: continuous weight} and the monotonicity of $\Gamma_0$, $\Gamma_0(\eta - \const \delta) + \alpha \phi_0 < \kappa$, and so $\Gamma_\epsilon(\eta) + \alpha \phi_\epsilon < \kappa$ for all $\epsilon$ sufficiently close to zero. By the definition of $\eta_\epsilon$, it follows that $\eta_\epsilon \leq \eta$ for all $\epsilon$ sufficiently close to zero. Since $\eta \in \real$ is arbitrary, the desired result follows.
\end{proof}

The next lemma establishes a rate of convergence of $\tau_\epsilon$ to $\tau_0$ as $\epsilon\rightarrow 0$.
\begin{lemma}\label{lem:taurate}
    Under conditions of Theorem~\ref{e RC theorem: differentiability}, $\tau_\epsilon=\tau_0 + \bigO(\epsilon)$.
\end{lemma}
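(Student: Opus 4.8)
The plan is to reduce the claim, via the identity $\tau_\epsilon = \max\{\eta_\epsilon,0\}$, to the single statement that $\eta_\epsilon = \eta_0 + \bigO(\epsilon)$ whenever $\eta_0 > -\infty$. Indeed, if $\eta_0 = -\infty$ then either $\kappa = \infty$, so $\eta_\epsilon = -\infty$ for all $\epsilon$ and $\tau_\epsilon = 0 = \tau_0$; or $\kappa < \infty$, in which case Condition~\ref{e RC assumption: continuous weight} gives $\expect_0[\Delta^C_0(W)] < \kappa - \alpha\phi_0$, and since $\lim_{\eta\to-\infty}\Gamma_\epsilon(\eta) = \expect_\epsilon[\delta^C_\epsilon(V)] = \expect_\epsilon[\Delta^C_\epsilon(W)] = \expect_0[\Delta^C_0(W)] + \bigO(\epsilon)$ (using the tower property and \eqref{eq:DeltaAbd}) while $\phi_\epsilon = \phi_0 + \bigO(\epsilon)$ (which follows from \eqref{eq:resourcegrad1}), for all small $\epsilon$ we get $\lim_{\eta\to-\infty}\Gamma_\epsilon(\eta) < \kappa - \alpha\phi_\epsilon$, forcing $\eta_\epsilon = -\infty = \eta_0$ and again $\tau_\epsilon = 0 = \tau_0$. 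When $\eta_0 > -\infty$: if $\eta_0 < 0$, then $\eta_\epsilon \to \eta_0 < 0$ by Lemma~\ref{lem:etacontinuity}, so $\tau_\epsilon = 0 = \tau_0$ for small $\epsilon$; and if $\eta_0 \ge 0$, then once $\eta_\epsilon = \eta_0 + \bigO(\epsilon)$ is known we have, for small $\epsilon$, $\tau_\epsilon = \max\{\eta_0 + \bigO(\epsilon),0\} = \max\{\eta_0,0\} + \bigO(\epsilon) = \tau_0 + \bigO(\epsilon)$.

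To establish $\eta_\epsilon = \eta_0 + \bigO(\epsilon)$ when $\eta_0 > -\infty$, I would first prove a uniform perturbation bound: there is a neighborhood $N$ of $\eta_0$ with $\sup_{\eta\in N}|\Gamma_\epsilon(\eta) - \Gamma_0(\eta)| \lesssim |\epsilon|$. Decompose $\Gamma_\epsilon(\eta) - \Gamma_0(\eta)$ into the change in the law of $V$ induced by $P_\epsilon$ (whose density relative to that under $P_0$ is $1 + \epsilon\,\expect_0[H_W(W)\mid V=\cdot] = 1 + \bigO(\epsilon)$ uniformly, since $\|H\|_\infty \le 1$), the change $\delta^C_\epsilon - \delta^C_0$ (which is $\bigO(\epsilon)$ uniformly by \eqref{eq:DeltaAbbd}), and the change in the indicator $I\{\xi_\epsilon(V)>\eta\} - I\{\xi_0(V)>\eta\}$. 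The first two pieces are $\bigO(\epsilon)$ because $\delta^C_0$ is bounded (Condition~\ref{e RC assumption: bounded encouragement effect}); the third is controlled by noting that the two indicators differ only on $\{v: |\xi_0(v) - \eta| \le \const|\epsilon|\}$ (using $\sup_v|\xi_\epsilon(v) - \xi_0(v)| \lesssim |\epsilon|$ from \eqref{eq:xibd}), whose $P_0$-probability is $\bigO(\epsilon)$ uniformly for $\eta$ near $\eta_0$ by the bounded-density Condition~\ref{e RC assumption: continuous density}, and then integrating against the bounded $\delta^C_0$. Combined with $\phi_\epsilon = \phi_0 + \bigO(\epsilon)$, this gives $\Gamma_0(\eta_\epsilon) = \Gamma_\epsilon(\eta_\epsilon) + \bigO(\epsilon) = (\kappa - \alpha\phi_\epsilon) + \bigO(\epsilon) = \Gamma_0(\eta_0) + \bigO(\epsilon)$; the middle equality uses $\Gamma_\epsilon(\eta_\epsilon) = \kappa - \alpha\phi_\epsilon$ exactly, which holds because for small $\epsilon$ the map $\Gamma_\epsilon$ inherits from $\Gamma_0$ the property of being continuous and strictly decreasing near $\eta_\epsilon$ (the perturbation bound plus $\delta^C_\epsilon > 0$ and positivity of the density of $\xi_\epsilon(V)$ near $\eta_0$, both valid for small $\epsilon$ by Conditions~\ref{IV assumption: strong encouragement} and \ref{e RC assumption: continuous density}), and $\eta_\epsilon \to \eta_0$ by Lemma~\ref{lem:etacontinuity}. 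Finally, Conditions~\ref{e RC assumption: continuous weight} and \ref{IV assumption: strong encouragement} make $\Gamma_0$ continuously differentiable with strictly negative derivative on $N$, hence bi-Lipschitz there, so $\const\,|\eta_\epsilon - \eta_0| \le |\Gamma_0(\eta_\epsilon) - \Gamma_0(\eta_0)| \lesssim |\epsilon|$, which yields $|\eta_\epsilon - \eta_0| = \bigO(\epsilon)$.

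I expect the main obstacle to be the indicator-perturbation term in the uniform bound on $\Gamma_\epsilon - \Gamma_0$: one must combine the uniform closeness $\sup_v|\xi_\epsilon(v) - \xi_0(v)| \lesssim |\epsilon|$ with the bounded-density condition to show that the ``boundary layer'' $\{v: |\xi_0(v) - \eta| \le \const|\epsilon|\}$ has $P_0$-mass $\bigO(\epsilon)$ \emph{uniformly} over a neighborhood of $\eta_0$, and to keep that neighborhood small enough that all local properties invoked ($\Gamma_0$ bi-Lipschitz, density bounded away from $0$ and $\infty$, $\Gamma_\epsilon$ strictly decreasing) hold simultaneously for all small $\epsilon$. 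Beyond this, and the bookkeeping of the several $\bigO(\epsilon)$ estimates already recorded in \eqref{eq:DeltaAbd}--\eqref{eq:xibd} and around \eqref{eq:resourcegrad1}, the argument is routine.
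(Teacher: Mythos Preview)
Your overall strategy is correct and close to the paper's, but the execution of the key step when $\eta_0\ge 0$ differs. The paper never appeals to $\Gamma_\epsilon(\eta_\epsilon)=\kappa-\alpha\phi_\epsilon$ or to any regularity of $\Gamma_\epsilon$; it works directly from the infimum definition, using $\Gamma_\epsilon(\eta_\epsilon-|\epsilon|)+\alpha\phi_\epsilon>\kappa$ and $\Gamma_\epsilon(\eta_\epsilon+|\epsilon|)+\alpha\phi_\epsilon\le\kappa$, then sandwiches $\Gamma_\epsilon(\cdot)$ between $(1\pm\const|\epsilon|)\,\Gamma_0(\cdot\mp\const|\epsilon|)$ via \eqref{eq:DeltaAbbd}, \eqref{eq:xibd} and the $(1+\bigO(\epsilon))$ density ratio, and finally Taylor-expands $\Gamma_0$ about $\eta_0$ (Condition~\ref{e RC assumption: continuous weight}) to isolate $\eta_\epsilon-\eta_0$. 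Your route---establish the uniform bound $\sup_{\eta\in N}|\Gamma_\epsilon(\eta)-\Gamma_0(\eta)|\lesssim|\epsilon|$ and then invert via the bi-Lipschitz property of $\Gamma_0$---is a clean alternative that trades the paper's Taylor bookkeeping for a single inversion step.

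There is, however, one gap: you justify the exact equality $\Gamma_\epsilon(\eta_\epsilon)=\kappa-\alpha\phi_\epsilon$ via ``positivity of the density of $\xi_\epsilon(V)$ near $\eta_0$'', but Conditions~\ref{IV assumption: strong encouragement} and~\ref{e RC assumption: continuous density} concern $\xi_0$, not $\xi_\epsilon$, and the uniform closeness $\sup_v|\xi_\epsilon(v)-\xi_0(v)|\lesssim|\epsilon|$ does not imply that $\xi_\epsilon(V)$ has a density or even that its law is atomless. You do not need this claim. From the infimum definition and right-continuity of $\eta\mapsto I\{\xi_\epsilon>\eta\}$ one has $\Gamma_\epsilon(\eta_\epsilon)\le\kappa-\alpha\phi_\epsilon$, while $\Gamma_\epsilon(\eta)>\kappa-\alpha\phi_\epsilon$ for all $\eta<\eta_\epsilon$; combining these two inequalities with your uniform bound and the continuity of $\Gamma_0$ (Condition~\ref{e RC assumption: continuous weight}) already yields $|\Gamma_0(\eta_\epsilon)-\Gamma_0(\eta_0)|\lesssim|\epsilon|$, after which your bi-Lipschitz step finishes the argument unchanged.
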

\begin{proof}[Proof of Lemma~\ref{lem:taurate}]
    We separately consider the cases where $\eta_{0} <0$ and $\eta_{0}\ge 0$.
    
    We start with the easier case where $\eta_{0} <0$. In this case, Lemma~\ref{lem:etacontinuity} shows that $\tau_{\epsilon}:=\max\{\eta_{\epsilon},0\}$ is equal to $\tau_{0} = 0$ for all $\epsilon$ sufficiently close to zero. Thus, $\tau_{\epsilon} - \tau_{0}=\bigO(|\epsilon|)$.
    
    Now consider the more difficult case where $\eta_{0} \ge 0$. By the Lipschitz property of the function $x\mapsto \max\{x,0\}$, we can show that $|\max\{\eta_\epsilon,0\}-\max\{\eta_0,0\}|\le |\eta_\epsilon - \eta_0|$. As a consequence, to show that $\tau_{\epsilon} - \tau_{0}=\bigO(\epsilon)$, it suffices to show that $\eta_\epsilon - \eta_0 = \bigO(\epsilon)$. We next establish this statement.
    
    Fix $\epsilon$ in a sufficiently small neighborhood of zero. By the definition $\eta_{\epsilon}:=\inf \{ \eta: \Gamma_\epsilon(\eta)  \leq \kappa - \phi_\epsilon \}$, the bound on the range of $H$, and \eqref{eq:xibd}, it holds that $\kappa<\Gamma_\epsilon(\eta_\epsilon - |\epsilon|) + \alpha \phi_\epsilon \le [1+\const |\epsilon|]\Gamma_0(\eta_\epsilon - [1+\const]|\epsilon|) + \alpha \phi_\epsilon$.  We use a Taylor expansion of $\Gamma_0$ about $\eta_0$, which is justified by Condition~\ref{e RC assumption: continuous weight} provided $|\epsilon|$ is small enough, and it follows that
    \begin{align*}
        \kappa< [1 + \const |\epsilon|]\left[\Gamma_0(\eta_0) + \{\eta_\epsilon-\eta_0 - (1-\const)|\epsilon|\}\{\Gamma_0'(\eta_0) + \smallo(1)\}\right] + \alpha \phi_0 + \bigO(\epsilon).
    \end{align*}
    By Condition~\ref{e RC assumption: continuous weight}, $\Gamma_0(\eta_0) + \alpha \phi_0=\kappa$. Plugging this into the above shows that
    \begin{align*}
        0&< \const \Gamma_0(\eta_0) |\epsilon| +  [1 + \const |\epsilon|]\left[\eta_\epsilon-\eta_0 - (1-\const)|\epsilon|\right]\left[\Gamma_0'(\eta_0) + \smallo(1)\right]  + \bigO(\epsilon).
    \end{align*}
    Note that Condition~\ref{e RC assumption: continuous weight} implies that $\Gamma_0'(\eta_0)\in(-\infty,0)$. Therefore, the above shows that, for all $\epsilon$ sufficiently close to zero, $0< [\eta_\epsilon-\eta_0]\Gamma_0'(\eta_0) + \const |\epsilon| + \smallo\left(\eta_\epsilon-\eta_0\right)$, which implies that there exists an $\bigO(\epsilon)$ sequence for which $\eta_\epsilon-\eta_0<\bigO(\epsilon)$.
    
    A similar argument, which is based on the observation that $\Gamma_\epsilon(\eta_\epsilon+|\epsilon|) + \phi_\epsilon \le \kappa$, can be used to show that there exists an $\bigO(\epsilon)$ sequence such that $\eta_\epsilon-\eta_0>\bigO(\epsilon)$. Combining these two bounds shows that $\eta_\epsilon-\eta_0=\bigO(\epsilon)$, as desired. This concludes the proof.
\end{proof}

Our derivation of the canonical gradient is based on the following decomposition:
\begin{align}
    \Psi_{\edecision_{\epsilon}}&(P_\epsilon) - \Psi_{\edecision_{0}}(P_0) \nonumber \\
    =& \Psi_{\edecision_{\epsilon}}(P_\epsilon) - \Psi_{\edecision_{0}}(P_\epsilon) + \Psi_{\edecision_{0}}(P_\epsilon) - \Psi_{\edecision_{0}}(P_0) \nonumber \\
    =& P_\epsilon \{ [\edecision_{\epsilon} - \edecision_{0}] \delta^Y_{\epsilon} \} + \Psi_{\edecision_{0}}(P_\epsilon) - \Psi_{\edecision_{0}}(P_0) \nonumber \\
    =& P_\epsilon \{ [\edecision_{\epsilon} - \edecision_{0}] (\delta^Y_{\epsilon} - \tau_{0} \delta^C_{\epsilon}) \} + \tau_{0} P_\epsilon \{ (\edecision_{\epsilon} - \edecision_{0}) \delta^C_{\epsilon} \} + \Psi_{\edecision_{0}}(P_\epsilon) - \Psi_{\edecision_{0}}(P_0) \nonumber \\
    =& P_\epsilon \{ [\edecision_{\epsilon} - \edecision_{0}] (\delta^Y_{\epsilon} - \tau_{0} \delta^C_{0}) \} + [\Psi_{\edecision_{0}}(P_\epsilon) - \Psi_{\edecision_{0}}(P_0)] + \tau_{0} \{ P_\epsilon [\delta^C_{\epsilon} \edecision_{\epsilon}] + \alpha \phi_\epsilon - P_0 [\delta^C_{0} \edecision_{0}] - \alpha \phi_0 \} \nonumber \\
    &- \tau_{0} P_\epsilon \{ (\delta^C_{\epsilon} - \Delta^C_{0,b}) \edecision_0 \} - \tau_{0} (P_\epsilon-P_0) \{ \delta^C_{0} \edecision_{0} \} - \alpha \tau_0 \{ \phi_\epsilon - \phi_0 \}. \label{eq:IEROptMainPDdisplay}
\end{align}
We separately study each of the six terms on the right-hand side, which we refer to as term~1 up to term~6.

\vspace{0.5em}\noindent\textbf{Study of term 1 in \eqref{eq:IEROptMainPDdisplay}:} We will show that this term is $\smallo(\epsilon)$. By Lemma~\ref{lem:taurate} and \eqref{eq:DeltaYbd},
$$\sup_v|\delta^Y_{\epsilon}(v) - \tau_{\epsilon} \delta^C_{\epsilon} - \delta^Y_{0}(v) + \tau_{0} \delta^C_{0}| \leq \sup_v|\delta^Y_{\epsilon}(v) - \delta^Y_{0}(v)| + \sup_v|\delta^C_{\epsilon}(v) - \delta^C_{0}(v)| + |\tau_{\epsilon} - \tau_{0}| \lesssim |\epsilon|.$$
Under Condition~\ref{e RC assumption: continuous density}, $P_0\{\xi_{0}(V)=\tau_{0}\}=0$. We apply a similar argument as that used to prove Lemma~2 in \citeauthor{VanderLaan2014} \cite{VanderLaan2014}:
\begin{align*}
    \big|P_\epsilon &\{ [\edecision_{\epsilon} - \edecision_{0}] (\delta^Y_{\epsilon} - \tau_{0} \delta^C_{\epsilon}) \}  \big| \\
    &= \left| \int [\edecision_{\epsilon}(v) - \edecision_{0}(v)] [\delta^Y_{\epsilon}(v) - \tau_{0} \delta^C_{\epsilon}(v)] P_{W,\epsilon}(dw) \right| \\
    &\leq \int \left|  \edecision_{\epsilon}(v) - \edecision_{0}(v) \right| \left| \delta^Y_{\epsilon}(v) - \tau_{0} \delta^C_{\epsilon}(v) \right| P_{W,\epsilon}(dw). \\
    \intertext{Because $\edecision_{\epsilon}(v) \neq \edecision_{0}(v)$ implies that either (i) $\xi_{\epsilon}(v) - \tau_{\epsilon}$ and $\xi_{0}(v) - \tau_{0}$ have different signs or (ii) only one of these quantities is zero, the display continues as}
    &\leq \int I\{|\xi_{0}(v) - \tau_{0}| \leq |\xi_{\epsilon}(v) - \tau_{\epsilon} - \xi_{0}(v) + \tau_{0}|\} \left| \delta^Y_{\epsilon}(v) - \tau_{0} \delta^C_{\epsilon}(v) \right| P_{W,\epsilon}(dw) \\
    &\leq \int I\{|\xi_{0}(v) - \tau_{0}| \leq \const |\epsilon|\} \left( \left| \delta^Y_{0}(v) - \tau_{0} \delta^C_{0}(v) \right| + \const |\epsilon| \right) P_{W,\epsilon}(dw). \\
    \intertext{Using the facts that $\inf_v \delta^C_{0}(v)>0$ by Condition~\ref{IV assumption: strong encouragement}, that $\sup_v \delta^C_{0}(v)\le 1$ since probabilities are no more than one, and that $\xi_0(v):=\delta^Y_{0}(v)/\delta^C_{0}(v)$, the display continues as}
    &\leq \int I\{|\xi_{0}(v) - \tau_{0}| \leq \const |\epsilon|\} \left( \left| \xi_0(v) - \tau_{0} \right| + \const |\epsilon| \right) P_{W,\epsilon}(dw) \\
    \intertext{Leveraging the bound on $|\xi_{0}(v) - \tau_{0}|$ that appears in the indicator function, we see that}
    &\leq \int I\{|\xi_{0}(v) - \tau_{0}| \leq \const |\epsilon|\} (\const |\epsilon|  + \const |\epsilon|)  P_{W,\epsilon}(dw) \\
    &\lesssim |\epsilon| \int I\{|\xi_{0}(v) - \tau_{0}| \leq \const |\epsilon|\} P_{W,0}(dw) \\
    &= |\epsilon| \int I\{0<|\xi_{0}(v) - \tau_{0}| \leq \const |\epsilon|\} P_{W,0}(dw),
\end{align*}
where the final equality holds by Condition~\ref{e RC assumption: non-exceptional law}. The integral in the final expression is $\smallo(1)$, and so this expression is $\smallo(\epsilon)$.

\vspace{0.5em}\noindent\textbf{Study of term 2 in \eqref{eq:IEROptMainPDdisplay}:} By the result on the pathwise differentiability of $P \mapsto \Psi_{\edecision^\FR}(P)$, setting $\edecision^\FR$ to be $\edecision_0$, we see that the second term satisfies $\Psi_{\edecision_{0}}(P_\epsilon) - \Psi_{\edecision_{0}}(P_0) = \epsilon\int G_2(o) H(o) P(do) + \smallo(\epsilon)$, 
where $G_2\in L^2_0(P_0)$ is equal to $D(P_0,\edecision_0,0,\Q^C_0)$.

\vspace{0.5em}\noindent\textbf{Study of term 3 in \eqref{eq:IEROptMainPDdisplay}:} We will show that the third term is identical to zero for any $\epsilon$ that is sufficiently close to zero. If $\tau_{0}=0$, then this term is trivially zero. Otherwise, $\tau_{0}=\eta_{0} > 0$. Lemma~\ref{lem:etacontinuity} shows that, in this case, $\eta_{\epsilon} > 0$ for $\epsilon$ sufficiently close to zero. Hence, $\expect_{\epsilon} [ \delta^C_{\epsilon}(V) \edecision_{\epsilon}(V) ] + \alpha \phi_\epsilon = \kappa = \expect_0 [ \delta^C_{0}(V) \edecision_{0}(V) ] + \alpha \phi_0$. Consequently, term~3 equals zero for all $\epsilon$ sufficiently close to zero.

\vspace{0.5em}\noindent\textbf{Study of term 4 in \eqref{eq:IEROptMainPDdisplay}:} We will show that this term can be writes as $\epsilon\int G_4(o) H(o) P_0(do) + \smallo(\epsilon)$ for an appropriately defined $G_4\in L^2_0(P_0)$ that does not depend on $H$. Note that there exists a function $H_W : (w \mid v)\mapsto H_W(w \mid v)$ for which $\int H_W(w \mid v) P_{W,0}(dw \mid v) = 0$, $\sup_{w,v}|H_W(w \mid v)| < \infty$, and, for all $v$,
\begin{align*}
    & P_{W,\epsilon}(dw \mid v)=(1+\epsilon H_{W}(w \mid v)+\smallo(\epsilon)) P_{W,0}(dw \mid v).
\end{align*}
The function $H_W$ can be chosen so that the above $\smallo(\epsilon)$ term indicates little-oh behavior uniformly over $w$ and $v$.
By the definition of $H_C$ from \eqref{eq:epsMarginalsConditionals}, we see that
\begin{align*}
    \delta^C_{\epsilon}(v) - \delta^C_{0}(v) 
    &= \iint c \Big\{[1+\epsilon H_C(c \mid 1,w)]  [1+\epsilon H_W(w \mid v)+\smallo(\epsilon)]-1\Big\}P_0(dc \mid 1,w) P_0(dw \mid v) \\
    &\quad- \iint c \Big\{[1+\epsilon H_C(c \mid 0,w)]  [1+\epsilon H_W(w \mid v)+\smallo(\epsilon)]-1\Big\}P_0(dc \mid 0,w) P_0(dw \mid v) \\
    &= \epsilon \Big\{ \iint c (H_C(c \mid 1,w) + H_W(w \mid v)+\smallo(1)) P_0(dc \mid 1,w) P_0(dw \mid v) \\
    &\quad- \iint c (H_C(c \mid 0,w) + H_W(w \mid v)+\smallo(1)) P_0(dc \mid 0,w) P_0(dw \mid v) \Big\} + \smallo(\epsilon),
\end{align*}
where the little-oh terms are uniform over $w$ and $v$. Hence,
\begin{align*}
    &\left.\frac{d}{d\epsilon} P_\epsilon \{ [\delta^C_{\epsilon} - \delta^C_{0}] \edecision_{0} \}\right|_{\epsilon=0} \\
    &= \iint \edecision_{0}(v) c \{H_C(c \mid 1,w) + H_W(w \mid v)\} P_0(dc \mid 1,w) P_0(dw) \\
    &\quad- \iint \edecision_{0}(v) c \{H_C(c \mid 0,w) + H_W(w \mid v)\} P_0(dc \mid 0,w) P_0(dw) \\
    &= \expect_0 \left[ \edecision_{0}(V)\left( \frac{1}{T+\mu^T_0(W)-1} \{C - \Q^C_{0}(T,W)\} H_C(C \mid 1,W) + \{\Delta^C_{0}(W) - \delta^C_{0}(V)\} H_W(W \mid V) \right)\right]. \\
    \intertext{Since $\expect_0[H_C(C \mid T,W) \mid T,W]=\expect_0[H_W(W \mid V) \mid V]=0$ $P_0$-a.s., the display continues as}
    &= \expect_0 \left[ \edecision_{0}(V) \left( \frac{1}{T+\mu^T_0(W)-1} \{C - \Q^C_{0}(T,W)\} + \Delta^C_{0}(W) - \delta^C_{0}(V)\right) H(O) \right].
\end{align*}
As a consequence, term~4 satisfies
\begin{align*}
    -\tau_0 P_\epsilon \{ [\delta^C_{\epsilon} - \delta^C_{0}] \edecision_{0} \}&= \epsilon \int G_4(o) H(o) P_0(do) + \smallo(\epsilon),
\end{align*}
where
\begin{align*}
    G_4 : o \mapsto&-\tau_{0} \edecision_{0}(v) \left\{ \frac{1}{t+\mu^T_0(w)-1} [ c - \Q^C_{0}(t,w) ] + \Delta^C_{0}(w) - \delta^C_{0}(v) \right\}.
\end{align*}

\vspace{0.5em}\noindent\textbf{Study of term 5 in \eqref{eq:IEROptMainPDdisplay}:} By \eqref{eq:epsMarginalsConditionals} and the fact that $P_0 \{ \delta^C_{0} \edecision_{0} \}=\kappa-\alpha \phi_0$ whenever $\tau_{0}>0$, we see that $-\tau_0 (P_\epsilon - P_0)\{\delta^C_{0} \edecision_0\}= \epsilon \int G_5(o) H_V(v) P_0(do)$, where $G_5\in L^2_0(P_0)$ is defined as $o\mapsto -\tau_{0} [\delta^C_{0}(v) \edecision_{0}(v) - \kappa + \alpha \phi_0]$. Since $H_V$ is defined as $v\mapsto \expect_0[H(O) \mid V=v]$, we see that it also holds that $-\tau_0 (P_\epsilon - P_0)\{\delta^C_{0} \edecision_0\}= \epsilon \int G_5(o) H(o) P_0(do)$.

\vspace{0.5em}\noindent\textbf{Study of term 6 in \eqref{eq:IEROptMainPDdisplay}:} We have shown that
$$\phi_\epsilon - \phi_0 = \epsilon \int \left\{ \frac{1-t}{1-\Q^T_0(w)} [c-\Q^C_0(0,w)] + \Q^C_0(0,w) - \phi_0 \right\} H(o) P_0(do) + \smallo(\epsilon).$$
Therefore, $-\tau_0 \alpha (\phi_\epsilon - \phi_0) = \epsilon \int G_6(o) H(o) P_0(do) + \smallo(\epsilon)$ where
$$G_6: o \mapsto -\tau_0 \alpha \left\{ \frac{1-t}{1-\Q^T_0(w)} [c-\Q^C_0(0,w)] + \Q^C_0(0,w) - \phi_0 \right\}.$$

\vspace{0.5em}\noindent\textbf{Conclusion of the derivation of the canonical gradient of $P\mapsto \Psi_{\edecision_P}(P)$:} Combining our results regarding the six terms in \eqref{eq:IEROptMainPDdisplay}, we see that
\begin{align*}
    \Psi_{\edecision_{\epsilon}}(P_\epsilon) - \Psi_{\edecision_{0}}(P_0)&= \epsilon \int \left[G_2(o) + G_4(o) + G_5(o) + G_6(o) \right] H(o) P_0(do) + \smallo(\epsilon).
\end{align*}
Dividing both sides by $\epsilon\not=0$ and taking the limit as $\epsilon\rightarrow 0$, we see that $G_2 + G_4 + G_5 + G_6=G(P_0)$ is the canonical gradient of $P\mapsto \Psi_{\edecision_P}(P)$ at $P_0$.

\subsection{Expansions based on gradients or pseudo-gradients} \label{section: pseudo-gradient expansion}

In this section, we present (approximate) first-order expansions of ATE parameters based on which we construct our proposed targeted minimum-loss based estimators (TMLE) and prove their asymptotic linearity. We refer the readers to Supplement~S5 of \citeauthor{Qiu2021} \citep{Qiu2021} for an overview of TMLE based on gradients and pseudo-gradients. The overall idea behind TMLE based on gradients is the following: the empirical mean of the gradient at the estimated distribution can be viewed as the first-order bias of the plug-in estimator; this bias can be removed by solving the estimating equation that equates the first-order bias to zero. The idea behind pseudo-gradients is similar, except that the gradient is replaced by an approximation that we term \textit{pseudo-gradient} so that the corresponding estimating equation is easy to solve with a single regression step.

For any ITR $\edecision: \mathcal{W} \rightarrow [0,1]$ that utilizes all covariates, we define
\begin{align*}
    R_{\edecision}(P,P_0) :=& \Psi_{\edecision}(P) - \Psi_{\edecision}(P_0) + P_0 D(P,\edecision,0,\Q^C) \\
    =& \expect_0 \Bigg[ \edecision(W) \Bigg\{ \frac{\Q^T_P(W)-\Q^T_0(W)}{\Q^T_P(W)} (\Q^Y_P(1,W) - \Q^Y_0(1,W)) \\
    &\quad\quad\quad\quad\quad+ \frac{\Q^T_P(W)-\Q^T_0(W)}{1-\Q^T_P(W)} (\Q^Y_P(0,W) - \Q^Y_0(0,W)) \Bigg\} \Bigg].
\end{align*}
For any ITR $\edecision: \mathcal{V} \rightarrow [0,1]$ that only utilizes $V$, for convenience, we define $R_{\edecision}(P,P_0) := R_{w \mapsto \edecision(V(w))}(P,P_0)$.

For $\Psi_{\edecision^\FR}$ and $\Psi_{\edecision^\TP_P}$, it is straightforward to show that the following expansions hold:
\begin{align*}
    \Psi_{\edecision^\FR}(P) - \Psi_{\edecision^\FR}(P_0) &= -P_0 G_\FR(P) + R_{\edecision^\FR}(P,P_0), \\
    \Psi_{\edecision^\TP_P}(P) - \Psi_{\edecision^\TP_{P_0}}(P_0) &= -P_0 G_\TP(P) + P_0 \left\{ \frac{\Q^T_P(\cdot)-\Q^T_0(\cdot)}{1-\Q^T_P(\cdot)} (\Q^Y_P(0,\cdot)-\Q^Y_0(0,\cdot)) \right\}.
\end{align*}

For $P \mapsto \Psi_{\edecision^\RD_P}(P)$, we expand this parameter sequentially as follows:
\begin{align*}
    &\Psi_{\edecision^\RD_P}(P) - \Psi_{\edecision^\RD_{P_0}}(P_0) = P_0 D(P,\edecision^\RD_P,0,\Q^C_P) + R_{\edecision^\RD_P}(P,P_0) + (\edecision^\RD_P - \edecision^\RD_0) P_0 \Delta^Y_0, \\
    & \edecision^\RD_P - \edecision^\RD_0 = \frac{\kappa - \phi_P}{P \Delta^C_P} - \frac{\kappa - \phi_0}{P_0 \Delta^C_0}, \\
    &(\kappa-\alpha \phi_P) - (\kappa - \alpha \phi_0) = \alpha \left\{ P_0 D_1(P,\Q^C) + P_0 \left\{ (\Q^C(0,\cdot) - \Q^C(1,\cdot)) \frac{\Q^T_P-\Q^T_0}{1-\Q^T_P} \right\} \right\}, \\
    &P \Delta^C_P - P_0 \Delta^C_0 = - P_0 D_2(P,\Q^C_P) + P_0 \left\{ (\Q^C_P(1,\cdot) - \Q^C_0(1,\cdot)) \frac{\Q^T_P-\Q^T_0}{\Q^T_P} + (\Q^C_P(0,\cdot) - \Q^C_0(0,\cdot)) \frac{\Q^T_P-\Q^T_0}{1-\Q^T_P} \right\}.
\end{align*}

For $P \mapsto \Psi_{\edecision_P}(P)$, straightforward but tedious calculation shows that the following expansion holds:
\begin{align*}
    \Psi_{\edecision_P}(P) - \Psi_{\edecision_{0}}(P_0) &= -P_0 D(P,\edecision,\tau_0,\Q^C) \\
    &\quad+ R_{\edecision}(P,P_0) + P_0 \{ (\edecision-\edecision_{0}) (\delta^Y_{0} - \tau_{0} \delta^C_{0}) \} \\
    &\quad- \tau_{0} \expect_0 \left[ \edecision(V) \frac{\Q^T_P(W) - \Q^T_0(W)}{\Q^T_P(W)} \{\Q^C(1,W) - \Q^C_0(1,W)\} \right] \\
    &\quad+ \tau_{0} \expect_0 \left[ (1-\edecision(V)) \frac{\Q^T_P(W) - \Q^T_0(W)}{1-\Q^T_P(W)} \{\Q^C(0,W) - \Q^C_0(0,W)\} \right].
\end{align*}

\subsection{Asymptotic linearity of proposed estimator (Theorem~\ref{e RC theorem: asymptotic linearity})} \label{proof: asymptotic linearity}

For convenience, we set $\hat{P}_n$ to have component $\Q^T_n$ and $\hat{\Q}^C_n$, even though the plug-in estimator does not explicitly involve these functions. We start with some lemmas that facilitate the proof of the main theorem. In this section, we define $\eta_n:=\eta_n(k_n)$ and $\tau_n:=\tau_n(k_n)$ to simplify notations.

Our proof is centered around the expansions in Supplement~\ref{section: pseudo-gradient expansion}. We first prove a few lemmas. Lemma~\ref{e RC lemma: AL of resource0 and additional resource} is a standard asymptotic linearity result on estimators $\phi_n$ and $P_n \hat{\Delta}^C_n$ about treatment resource being used for constant ITRs $v \mapsto 0$ and $v \mapsto 1$, respectively; Lemma~\ref{lemma: equivalence of L2 distance} is a technical convenient tool to convert conditions on norms in Condition~\ref{e RC assumption: remainder} between functions; Lemmas~\ref{lemma: bound for consistency of survival function}--\ref{e RC lemma: updated tau} are analysis results for estimators that are similar to Lemmas~\ref{lem:etacontinuity}--\ref{lem:taurate} for deterministic perturbations of $P_0$, and they lead to the crucial Lemma~\ref{e RC lemma: remainder} on the negligibility of the remainder $R_\edecision(\hat{P}_n,P_0)$ for an arbitrary ITR $\edecision$.

\begin{lemma}[Asymptotic linearity of $\phi_n$ and $P_n \hat{\Delta}^C_n$] \label{e RC lemma: AL of resource0 and additional resource}
    Under the conditions of Theorem~\ref{e RC theorem: asymptotic linearity},
    \begin{align*}
        \phi_n - \phi_0 &= (P_n-P_0) D_1(P_0,\Q^C_0) + \smallo_p(n^{-1/2}) = \bigO_p(n^{-1/2}), \\
        P_n \hat{\Delta}^C_n - P_0 \Delta^C_0 &= (P_n-P_0) D_2(P_0,\Q^C_0) + \smallo_p(n^{-1/2}) = \bigO_p(n^{-1/2}).
    \end{align*}
\end{lemma}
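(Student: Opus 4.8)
The plan is to use the fact that $\phi_n$ is a one-step correction estimator of $\phi_0 = \expect_0[\Q^C_0(0,W)]$ and expand it in the usual way into a leading empirical-average term, an empirical-process term, and a second-order remainder. Concretely, write $\phi_n = P_n f_n$ with $f_n(o) := \Q^C_n(0,w) + (1-t)\{c - \Q^C_n(0,w)\}/\{1-\Q^T_n(w)\}$, and let $f_0$ denote the same expression with $(\Q^C_0,\Q^T_0)$ in place of $(\Q^C_n,\Q^T_n)$. Conditioning on $W$ and using the tower property gives $P_0 f_0 = \phi_0$, so that $D_1(P_0,\Q^C_0) = f_0 - \phi_0$; this yields the decomposition
\[
\phi_n - \phi_0 = (P_n - P_0) D_1(P_0,\Q^C_0) + (P_n - P_0)(f_n - f_0) + P_0(f_n - f_0).
\]
I would bound the last two summands separately. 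For the empirical-process term, Condition~\ref{e RC assumption: Donsker condition} places $D_1(\hat{P}_n,\Q^C_n)$, hence $f_n$ (which differs from it by an additive constant), in a fixed $P_0$-Donsker class with probability tending to one, while $\|f_n - f_0\|_{2,P_0} = \smallo_p(1)$ follows from $L^2(P_0)$-consistency of $\Q^C_n(0,\cdot)$ and $\Q^T_n$ (Condition~\ref{e RC assumption: consistency of estimated influence function}) together with the eventual strict positivity of $1-\Q^T_n$ (Condition~\ref{e assumption: consistency of strong IV positivity}); standard empirical-process theory then gives $(P_n - P_0)(f_n - f_0) = \smallo_p(n^{-1/2})$. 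For the remainder, a short calculation conditioning on $W$ gives
\[
P_0(f_n - f_0) = \expect_0\!\left[\{\Q^C_n(0,W) - \Q^C_0(0,W)\}\,\frac{\Q^T_0(W) - \Q^T_n(W)}{1-\Q^T_n(W)}\right],
\]
which by Cauchy--Schwarz and the bounded denominator is $\lesssim \|\Q^C_n(0,\cdot) - \Q^C_0(0,\cdot)\|_{2,P_0}\,\|\Q^T_n - \Q^T_0\|_{2,P_0} = \smallo_p(n^{-1/2})$ by Condition~\ref{e RC assumption: remainder}. Combining the three pieces proves the first expansion, and since $D_1(P_0,\Q^C_0) \in L^2_0(P_0)$ the central limit theorem turns it into the $\bigO_p(n^{-1/2})$ bound.

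For $P_n\hat{\Delta}^C_n$ the argument is identical once one identifies the estimating equation solved by the targeting step. The ordinary least-squares fit defining $\hat{\Q}^C_n$ produces $\hat{\Q}^C_n(t,w) = \Q^C_n(t,w) + \hat\epsilon_n/\{t + \Q^T_n(w) - 1\}$ for a scalar $\hat\epsilon_n$ characterized by the normal equation $P_n[\{C - \hat{\Q}^C_n(T,W)\}/\{T + \Q^T_n(W) - 1\}] = 0$; hence $P_n\hat{\Delta}^C_n = P_n g_n$ with $g_n(o) := \{c - \hat{\Q}^C_n(t,w)\}/\{t + \Q^T_n(w) - 1\} + \hat{\Delta}^C_n(w)$. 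Writing $g_0$ for the analogous quantity built from the truth, one has $P_0 g_0 = P_0\Delta^C_0$ and $D_2(P_0,\Q^C_0) = g_0 - P_0\Delta^C_0$, so
\[
P_n\hat{\Delta}^C_n - P_0\Delta^C_0 = (P_n - P_0) D_2(P_0,\Q^C_0) + (P_n - P_0)(g_n - g_0) + P_0(g_n - g_0).
\]
The empirical-process term is $\smallo_p(n^{-1/2})$ by Condition~\ref{e RC assumption: Donsker condition} (which constrains the relevant estimated influence functions, including the one built from $D_2(\hat{P}_n,\hat{\Q}^C_n)$, to fixed $P_0$-Donsker classes) and the consistency of the involved nuisances, and conditioning on $W$ yields
\[
P_0(g_n - g_0) = \expect_0\!\left[\{\hat{\Q}^C_n(1,W) - \Q^C_0(1,W)\}\tfrac{\Q^T_n(W) - \Q^T_0(W)}{\Q^T_n(W)}\right] + \expect_0\!\left[\{\hat{\Q}^C_n(0,W) - \Q^C_0(0,W)\}\tfrac{\Q^T_n(W) - \Q^T_0(W)}{1-\Q^T_n(W)}\right],
\]
which is $\lesssim \|\hat{\Q}^C_n - \Q^C_0\|_{2,P_0}\,\|\Q^T_n - \Q^T_0\|_{2,P_0} = \smallo_p(n^{-1/2})$ by Condition~\ref{e RC assumption: remainder}. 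The central limit theorem again upgrades this to $\bigO_p(n^{-1/2})$.

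The only substantive computations are the two iterated-expectation identities for $P_0(f_n - f_0)$ and $P_0(g_n - g_0)$, both of which exhibit the familiar doubly robust product structure; everything else is bookkeeping. The one step requiring a little care is verifying that $f_n$ and $g_n$ are eventually confined to fixed $P_0$-Donsker classes and are $L^2(P_0)$-consistent for $f_0$ and $g_0$, which is exactly what Conditions~\ref{e RC assumption: Donsker condition}, \ref{e RC assumption: consistency of estimated influence function} and~\ref{e assumption: consistency of strong IV positivity} supply; under the cross-fitted variant of Remark~\ref{remark: sample splitting} this step is instead handled by a routine conditional-variance argument and the Donsker hypothesis may be dropped. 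I anticipate no real obstacle beyond this.
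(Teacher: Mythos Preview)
Your proposal is correct and spells out precisely the standard one-step/TMLE argument that the paper invokes but omits: the paper's entire proof reads ``This result follows from the facts that (i) $\phi_n$ is a one-step correction estimator of $\phi_0$, and (ii) $P_n \hat{\Delta}^C_n$ is a TMLE for $P_0 \Delta^C_0$. Therefore the proof is omitted.'' Your decomposition into leading term, empirical-process term controlled via the Donsker and consistency conditions, and doubly robust remainder controlled via Cauchy--Schwarz and Condition~\ref{e RC assumption: remainder} is exactly the argument those citations stand in for, so there is nothing to add.
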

This result follows from the facts that (i) $\phi_n$ is a one-step correction estimator of $\phi_0$ \citep{Pfanzagl1982}, and (ii) $P_n \hat{\Delta}^C_n$ is a TMLE for $P_0 \Delta^C_0$ \citep{VanderLaan2017,VanderLaan2018}. Therefore the proof is omitted.

\begin{lemma}[Lemma~S8 in \citeauthor{Qiu2021} \citep{Qiu2021}] \label{lemma: equivalence of L2 distance}
    Fix functions $\Q^C : \{0,1\} \times \mathcal{W}\rightarrow [0,1]$ and $\Q^Y : \{0,1\} \times \mathcal{W}\rightarrow \real$, and suppose that $P_0 \Q^Y(0,\cdot)^2<\infty$ and $P_0 \Q^Y(1,\cdot)^2<\infty$. If Condition~\ref{IV assumption: strong IV positivity} holds, then
    \begin{align*}
        & \| \Q^Y(1,\cdot) - \Q^Y_0(1,\cdot) \|_{2,P_0} + \| \Q^Y(0,\cdot) - \Q^Y_0(0,\cdot) \|_{2,P_0} \simeq \| \Q^Y - \Q^Y_0\|_{2,P_0}, \\
        & \| \Q^C(1,\cdot) - \Q^C_0(1,\cdot) \|_{2,P_0} + \| \Q^C(0,\cdot) - \Q^C_0(0,\cdot) \|_{2,P_0} \simeq \| \Q^C - \Q^C_0\|_{2,P_0},
    \end{align*}
    where $a \simeq b$ is defined as $a \lesssim b$ and $b \lesssim a$.
\end{lemma}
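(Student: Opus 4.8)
The plan is to reduce both claims to two elementary facts: conditioning on $(T,W)$ turns the squared $L^2(P_0)$ norm of a function of $(T,W)$ into a $\Q^T_0$-weighted combination of the treatment-arm and control-arm squared norms, and strong positivity keeps those weights bounded away from $0$ and $1$. I will carry out the argument for $\Q^Y$; the one for $\Q^C$ is identical, and there the range $[0,1]$ makes the integrability hypotheses automatic.

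First I would rewrite the norm on the right-hand side as an expectation over $O$ whose integrand depends only on $(T,W)$ and apply the tower property conditionally on $W$, using $P_0(T=1\mid W)=\Q^T_0(W)$, to obtain the identity
\[
\| \Q^Y - \Q^Y_0 \|_{2,P_0}^2 = \expect_0\!\left[ \Q^T_0(W)\{\Q^Y(1,W)-\Q^Y_0(1,W)\}^2 + \{1-\Q^T_0(W)\}\{\Q^Y(0,W)-\Q^Y_0(0,W)\}^2 \right].
\]
Write $S := \|\Q^Y(1,\cdot)-\Q^Y_0(1,\cdot)\|_{2,P_0}^2 + \|\Q^Y(0,\cdot)-\Q^Y_0(0,\cdot)\|_{2,P_0}^2$. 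Condition~\ref{IV assumption: strong IV positivity} bounds both $\Q^T_0(W)$ and $1-\Q^T_0(W)$ between $\epsilon_T$ and $1-\epsilon_T$ for $P_0$-almost every $W$, so the display yields $\epsilon_T S \le \| \Q^Y - \Q^Y_0 \|_{2,P_0}^2 \le (1-\epsilon_T) S$, i.e. $\| \Q^Y - \Q^Y_0 \|_{2,P_0}^2 \simeq S$ with constants depending only on $\epsilon_T$ (this holds even if both sides are infinite; the assumed square-integrability of $\Q^Y(t,\cdot)$, together with that of $\Q^Y_0(t,\cdot)=\expect_0[Y\mid T=t,W=\cdot]$ — a consequence of $Y\in L^2(P_0)$ and strong positivity — guarantees finiteness). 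Finally, the elementary inequalities $a^2+b^2\le (a+b)^2\le 2(a^2+b^2)$, valid for all $a,b\ge 0$, give $S^{1/2}\simeq \|\Q^Y(1,\cdot)-\Q^Y_0(1,\cdot)\|_{2,P_0}+\|\Q^Y(0,\cdot)-\Q^Y_0(0,\cdot)\|_{2,P_0}$, and chaining the two equivalences proves the claim.

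There is no genuinely hard step here; the only care required is with the convention that, for a function $f$ of $(t,w)$, the notation $\|f\|_{2,P_0}$ refers to the $L^2$ norm taken against the marginal law of $(T,W)$ under $P_0$, so that the tower-property expansion above is an exact identity rather than an inequality, together with checking that the quantities being sandwiched are well defined. Thus the main obstacle is essentially bookkeeping: performing the conditioning against the correct marginal and verifying that the equivalence constants depend only on the positivity bound $\epsilon_T$.
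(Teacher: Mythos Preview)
Your argument is correct. The paper itself does not supply a proof of this lemma---it simply quotes it as Lemma~S8 of \citet{Qiu2021}---so there is no in-paper proof to compare against, but the tower-property/strong-positivity argument you give is exactly the standard one and is what that reference does.
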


The following Lemmas~\ref{lemma: bound for consistency of survival function}--\ref{e RC lemma: updated tau} prove consistency of the estimated thresholds used to define the estimated optimal ITR $\edecision_n$.

\begin{lemma}[Lemma~S5 in \citeauthor{Qiu2021} \citep{Qiu2021}] \label{lemma: bound for consistency of survival function}
    Let $\epsilon>0$, $\eta \in \real$, $g: \mathscr{O} \rightarrow \real$ be bounded and functions $f_0 : \mathscr{O} \rightarrow \real$ and $f: \mathscr{O} \rightarrow \real$. Then
    \begin{align*}
        |P_0 ([I(f>\eta)-I(f_0>\eta)] g)|&\leq P_0 |[I(f>\eta)-I(f_0>\eta)] g| \\ &\lesssim P_0\{|f(O)-f_0(O)|>\epsilon\} + P_0\{|f_0(O)-\eta| \leq \epsilon\}.
    \end{align*}
    If $g$ takes values in $[-1,1]$, then $\lesssim$ can be replaced by $\leq$.
\end{lemma}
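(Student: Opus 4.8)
The plan is to reduce the whole statement to a pointwise event-inclusion and then integrate. Write $A := \{o : I(f(o)>\eta) \neq I(f_0(o)>\eta)\}$ for the set on which the two indicators disagree. Off $A$ the integrand $[I(f>\eta)-I(f_0>\eta)]g$ vanishes identically, and on $A$ its modulus is at most $\|g\|_\infty := \sup_o|g(o)|$ (at most $1$ when $g$ is valued in $[-1,1]$), since the difference of the two indicators takes values in $\{-1,0,1\}$. Hence the pointwise bound $\bigl|[I(f>\eta)-I(f_0>\eta)](o)\,g(o)\bigr| \leq \|g\|_\infty\,\mathbbm{1}_A(o)$ holds, and the leftmost inequality in the statement is simply $|P_0 X| \leq P_0|X|$.

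The one substantive step is the inclusion $A \subseteq \{|f-f_0|>\epsilon\}\cup\{|f_0-\eta|\leq\epsilon\}$. To prove it, fix $o \in A$ with $|f(o)-f_0(o)|\leq\epsilon$; it suffices to show $|f_0(o)-\eta|\leq\epsilon$. Membership in $A$ forces $f(o)$ and $f_0(o)$ onto opposite sides of $\eta$, i.e.\ either $f(o)>\eta\geq f_0(o)$ or $f_0(o)>\eta\geq f(o)$. In the first case $0 \leq \eta - f_0(o) \leq f(o)-f_0(o) = |f(o)-f_0(o)| \leq \epsilon$; in the second case $0 \leq f_0(o)-\eta \leq f_0(o)-f(o) = |f_0(o)-f(o)| \leq \epsilon$. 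Either way $|f_0(o)-\eta|\leq\epsilon$, which gives the inclusion.

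Integrating the pointwise bound against $P_0$ and applying the union bound $P_0(A) \leq P_0\{|f-f_0|>\epsilon\} + P_0\{|f_0-\eta|\leq\epsilon\}$ then yields
$$P_0\bigl|[I(f>\eta)-I(f_0>\eta)]g\bigr| \;\leq\; \|g\|_\infty\bigl(P_0\{|f-f_0|>\epsilon\} + P_0\{|f_0-\eta|\leq\epsilon\}\bigr),$$
which is the claimed $\lesssim$ bound with implicit constant $\|g\|_\infty$ (finite by boundedness of $g$); when $g$ takes values in $[-1,1]$ the constant is $1$ and $\lesssim$ improves to $\leq$.

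I expect essentially no obstacle: the lemma is an elementary measure-theoretic estimate. The only points needing a moment's care are getting the directions of the inequalities right in the two cases of the event-inclusion argument --- i.e.\ that $\eta$ being sandwiched between $f(o)$ and $f_0(o)$ together with $|f(o)-f_0(o)|\leq\epsilon$ forces $f_0(o)$ within $\epsilon$ of $\eta$ --- and noting that $A$, $\{|f-f_0|>\epsilon\}$ and $\{|f_0-\eta|\leq\epsilon\}$ are $P_0$-measurable (from measurability of $f$ and $f_0$) so that the union bound applies.
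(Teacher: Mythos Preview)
Your proof is correct and is exactly the standard elementary argument for this type of bound: reduce to the event $A=\{I(f>\eta)\neq I(f_0>\eta)\}$, establish the set inclusion $A\subseteq\{|f-f_0|>\epsilon\}\cup\{|f_0-\eta|\le\epsilon\}$ by noting that disagreement forces $\eta$ to lie between $f(o)$ and $f_0(o)$, and then apply the union bound. The paper itself does not supply a proof of this lemma --- it is quoted verbatim as Lemma~S5 of \citet{Qiu2021} --- so there is nothing to compare against here; your argument is precisely the intended one.
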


\begin{lemma}[Consistency of $\eta_n(\kappa)$] \label{e RC lemma: consistency of eta}
    Under Conditions~\ref{e RC assumption: continuous density},~\ref{e RC assumption: continuous weight}~and~\ref{e RC assumption: Glivenko-Cantelli}, $\eta_n(\kappa) \overset{p}{\rightarrow} \eta_0$.
\end{lemma}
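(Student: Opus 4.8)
My plan is to argue exactly as in the proof of Lemma~\ref{lem:etacontinuity}, replacing the deterministic perturbations $P_\epsilon$ by the data-dependent estimates and each ``for all $\epsilon$ sufficiently close to zero'' statement by a ``with probability tending to one'' statement. Let $\Gamma_0:\eta\mapsto \expect_0[I\{\xi_0(V)>\eta\}\,\delta^C_0(V)]=\expect_0[I\{\xi_0(V)>\eta\}\,\Delta^C_0(W)]$, so that $\eta_0=\inf\{\eta:\Gamma_0(\eta)\le\kappa-\alpha\phi_0\}$, and note that $\Gamma_n(\tau)=P_n[I\{\xi_n(V)>\tau\}\,\Delta^C_n(W)]$. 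First I would reduce the claim to: (i) $\phi_n=\phi_0+\smallo_p(1)$, which is immediate from Lemma~\ref{e RC lemma: AL of resource0 and additional resource}; and (ii) $\Gamma_n(\eta)\overset{p}{\rightarrow}\Gamma_0(\eta)$ for each fixed $\eta$ in a neighborhood of $\eta_0$. Granting these, the conclusion in the case $\eta_0>-\infty$ follows as in Lemma~\ref{lem:etacontinuity}: Condition~\ref{e RC assumption: continuous density} gives that $\Gamma_0$ is continuous at $\eta_0$ with $\Gamma_0(\eta_0)=\kappa-\alpha\phi_0$, while Conditions~\ref{e RC assumption: continuous weight} and~\ref{IV assumption: strong encouragement} give that $\Gamma_0$ is continuously differentiable with strictly negative derivative near $\eta_0$ (the sign because $\Delta^C_0>0$), so that $\Gamma_0(\eta_0+\delta)<\kappa-\alpha\phi_0<\Gamma_0(\eta_0-\delta)$ for every small $\delta>0$. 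By (i) and (ii), $\Gamma_n(\eta_0+\delta)+\alpha\phi_n\overset{p}{\rightarrow}\Gamma_0(\eta_0+\delta)+\alpha\phi_0<\kappa$, so with probability tending to one $\Gamma_n(\eta_0+\delta)\le\kappa-\alpha\phi_n$, whence $\eta_n(\kappa)\le\eta_0+\delta$; similarly $\Gamma_n(\eta_0-\delta)+\alpha\phi_n>\kappa$ with probability tending to one, and since $\Gamma_n$ is a non-increasing right-continuous step function --- its jumps occurring at the values $\xi_n(V_i)$ and its summands $\Delta^C_n(W_i)$ being positive with probability tending to one, e.g.\ under Condition~\ref{e RC assumption: consistency of strong encouragement} --- this forces $\Gamma_n(\tau)>\kappa-\alpha\phi_n$ for all $\tau\le\eta_0-\delta$, whence $\eta_n(\kappa)\ge\eta_0-\delta$. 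Sending $\delta\downarrow 0$ gives $\eta_n(\kappa)\overset{p}{\rightarrow}\eta_0$.

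To establish (ii) I would use the decomposition
\[
\Gamma_n(\eta)-\Gamma_0(\eta)=(P_n-P_0)\bigl[I\{\xi_n(V)>\eta\}\,\Delta^C_n(W)\bigr]+P_0\bigl[I\{\xi_n(V)>\eta\}\{\Delta^C_n-\Delta^C_0\}(W)\bigr]+P_0\bigl[\{I\{\xi_n(V)>\eta\}-I\{\xi_0(V)>\eta\}\}\,\Delta^C_0(W)\bigr].
\]
The first term is $\smallo_p(1)$ by the Glivenko-Cantelli clause of Condition~\ref{e RC assumption: Glivenko-Cantelli}; the second is at most $\|\Delta^C_n-\Delta^C_0\|_{1,P_0}=\smallo_p(1)$, again by Condition~\ref{e RC assumption: Glivenko-Cantelli}. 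For the third term I would apply Lemma~\ref{lemma: bound for consistency of survival function} with $g=\Delta^C_0$, which is bounded by Condition~\ref{e RC assumption: bounded encouragement effect}: for every $\epsilon>0$ the term is $\lesssim P_0\{|\xi_n(V)-\xi_0(V)|>\epsilon\}+P_0\{|\xi_0(V)-\eta|\le\epsilon\}$, where the first probability is $\smallo_p(1)$ by Markov's inequality together with $\|\xi_n-\xi_0\|_{1,P_0}=\smallo_p(1)$ (Condition~\ref{e RC assumption: Glivenko-Cantelli}), and the second is $\bigO(\epsilon)$ uniformly over $\eta$ in a neighborhood of $\eta_0$ by the density bound in Condition~\ref{e RC assumption: continuous density}; letting $\epsilon\downarrow 0$ shows the third term is $\smallo_p(1)$, which completes (ii).

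Finally, the case $\eta_0=-\infty$ I would treat separately, mirroring the corresponding part of Lemma~\ref{lem:etacontinuity}: if $\kappa=\infty$ then $\eta_n(\kappa)=-\infty=\eta_0$ deterministically, and if $\kappa<\infty$ I would combine (i), the ``$\eta_0=-\infty$'' form of (ii) provided by the last clause of Condition~\ref{e RC assumption: Glivenko-Cantelli}, and the inequality $\expect_0[\Delta^C_0(W)]<\kappa-\alpha\phi_0$ from the second clause of Condition~\ref{e RC assumption: continuous weight} to conclude that, for each fixed $\eta\in\real$, $\Gamma_n(\eta)+\alpha\phi_n\le\kappa$ with probability tending to one, hence $\eta_n(\kappa)\le\eta$ --- which is exactly the statement $\eta_n(\kappa)\overset{p}{\rightarrow}-\infty$. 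The main obstacle is ingredient (ii), and within it the indicator-difference term: controlling it requires trading the $L^1$-consistency of $\xi_n$ against the smoothness of the law of $\xi_0(V)$ through Lemma~\ref{lemma: bound for consistency of survival function}, rather than a direct comparison. A secondary point to handle carefully is the reduction of the two one-sided bounds to evaluating $\Gamma_n$ at the single points $\eta_0\pm\delta$, which rests on the monotonicity of the step function $\Gamma_n$.
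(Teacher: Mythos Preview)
Your proposal is correct and follows essentially the same approach as the paper: the same three-term decomposition of $\Gamma_n(\eta)-\Gamma_0(\eta)$, the same use of the Glivenko--Cantelli clause and $\|\Delta^C_n-\Delta^C_0\|_{1,P_0}$, $\|\xi_n-\xi_0\|_{1,P_0}$ consistency together with Lemma~\ref{lemma: bound for consistency of survival function} (and Condition~\ref{e RC assumption: bounded encouragement effect}) for the indicator-difference term, followed by the same $\eta_0\pm\delta$ sandwich via Condition~\ref{e RC assumption: continuous weight} and Lemma~\ref{e RC lemma: AL of resource0 and additional resource}. If anything, you are a bit more careful than the paper in explicitly invoking the positivity of $\Delta^C_n(W_i)$ (Condition~\ref{e RC assumption: consistency of strong encouragement}) to ensure $\Gamma_n$ is non-increasing, which is what makes the lower bound $\eta_n(\kappa)\ge\eta_0-\delta$ follow from the single evaluation $\Gamma_n(\eta_0-\delta)>\kappa-\alpha\phi_n$; the paper's proof just says ``by the definition of $\eta_n(\kappa)$'' at this step.
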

This lemma is a stochastic variant of the deterministic result in Lemma~\ref{lem:etacontinuity} and has a similar proof. Therefore, the arguments are slightly abbreviated here.
\begin{proof}[Proof of Lemma~\ref{e RC lemma: consistency of eta}]
    We separately consider the cases where $\eta_0 > -\infty$ and $\eta_0=-\infty$.
    
    First consider the case where $\eta_0 > -\infty$. We start by showing that, for any $\eta$ sufficiently close to $\eta_0$, it holds that $\Gamma_n(\eta) - \Gamma_0(\eta)=\smallo_p(1)$. Fix an $\eta$ in a neighborhood of $\eta_0$. By the triangle inequality,
    \begin{align}
        |\Gamma_n(\eta) - \Gamma_0 (\eta)|&\leq |P_0 [I\{\xi_n(V(\cdot)) > \eta\} - I(\xi_0(V(\cdot)) > \eta)] \Delta^C_0| \nonumber \\
        &\quad+ |P_0 I\{\xi_n(V(\cdot)) > \eta\} [\Delta^C_n - \Delta^C_0]| \nonumber \\
        &\quad+ |(P_n-P_0) I\{\xi_n(V(\cdot)) > \eta\} \Delta^C_n|. \label{eq:gammancons}
    \end{align}
    We will show that the right-hand side is $\smallo_p(1)$. By Condition~\ref{e RC assumption: Glivenko-Cantelli}, the third term on the right is $\smallo_p(1)$ for $\eta$ sufficiently close to $\eta_0$. Moreover, because the second term is no greater than $\| \Delta^C_n - \Delta^C_0 \|_{1,P_0}$, Condition~\ref{e RC assumption: Glivenko-Cantelli} also implies that this second term  is also $\smallo_p(1)$. We will now argue that the first term is $\smallo_p(1)$. By Lemma~\ref{lemma: bound for consistency of survival function} and Condition~\ref{e RC assumption: bounded encouragement effect}, for any $\epsilon'>0$,
    \begin{align*}
        |P_0 [I(\xi_n(V(\cdot)) > \eta) - I(\xi_0(V(\cdot)) > \eta)] \Delta^C_0| 
        &\lesssim P_0 |I(\xi_n > \eta) - I(\xi_0 > \eta)| \\
        &\leq P_0 I(|\xi_n-\xi_0|>\epsilon') + P_0 I(|\xi_0-\eta| \leq \epsilon') \\
        &\leq \frac{\|\xi_n-\xi_0\|_{1,P_0}}{\epsilon'} + P_0 I(|\xi_0-\eta| \leq \epsilon'),
    \end{align*}
    where the final relation follows from Markov's inequality. We next show that the last line is $\smallo_p(1)$. Fix $\epsilon>0$. For $\eta$ that is sufficiently close to $\eta_0$ and $\epsilon'$ that is sufficiently small, by Condition~\ref{e RC assumption: continuous density}, we see that $S_0$ is continuous in $[\eta-\epsilon',\eta+\epsilon']$ and hence, for all sufficiently small $\epsilon'>0$, it holds that $P_0 I(|\xi_0-\eta| \leq \epsilon') \leq \epsilon/2$. Therefore,
    \begin{align*}
        P_0\left\{|P_0 [I(\xi_n>\eta)-I(\xi_0>\eta)]|>\epsilon\right\} &\leq  P_0\left\{\frac{\| \xi_n-\xi_0 \|_{1,P_0}}{\epsilon'} + P_0 I(|\xi_0-\eta| \leq \epsilon')>\epsilon\right\} \\
        &\leq  P_0\left\{\frac{\| \xi_n-\xi_0 \|_{1,P_0}}{\epsilon'}>\epsilon/2\right\}.
    \end{align*}
    Since $\| \xi_n-\xi_0\|_{1,P_0}=\smallo_p(1)$ by Condition~\ref{e RC assumption: Glivenko-Cantelli}, the right-hand side of the above display converges to zero as $n \rightarrow \infty$. Therefore, $|P_0 [I(\xi_n(V(\cdot)) > \eta) - I(\xi_0(V(\cdot)) > \eta)] \Delta^C_0| = \smallo_p(1)$. Recalling \eqref{eq:gammancons}, the above results imply that $\Gamma_n(\eta) - \Gamma_0(\eta)=\smallo_p(1)$ for any $\eta$ that is sufficiently close to $\eta_0$.
    
    Fix $\epsilon>0$. For any $\epsilon$ sufficiently small, the above result and Lemma~\ref{e RC lemma: AL of resource0 and additional resource} imply that $\Gamma_n(\eta_0 - \epsilon) + \alpha \phi_n = \Gamma_0(\eta_0 - \epsilon) + \alpha \phi_0 + \smallo_p(1)$ and $\Gamma_n(\eta_0 + \epsilon) + \alpha \phi_n = \Gamma_0(\eta_0 + \epsilon) + \alpha \phi_0 + \smallo_p(1)$. By Condition~\ref{e RC assumption: continuous weight}, $\Gamma_0(\eta_0 - \epsilon) + \alpha \phi_0 > \kappa > \Gamma_0(\eta_0 + \epsilon) + \alpha \phi_0$ provided $\epsilon$ is sufficiently small. It follows that, with probability tending to one, $\Gamma_n(\eta_0 - \epsilon) + \alpha \phi_n > \kappa > \Gamma_n(\eta_0 + \epsilon) + \alpha \phi_n$, and hence $\eta_0 - \epsilon \leq \eta_n(\kappa) \leq \eta_0 + \epsilon$ by the definition of $\eta_n(\kappa)$. Because $\epsilon$ is arbitrary, it follows that $\eta_n(\kappa) \overset{p}{\rightarrow} \eta_0$.
    
    The case where $\eta_0 = -\infty$ can be proved similarly. If $\kappa=\infty$, then it trivially holds that $\eta_n(\kappa) = -\infty = \eta_0$ for all $n$ and the desired result holds. Otherwise, for any $\eta<0$ for which $|\eta|$ is sufficiently large, a nearly identical argument to that used above shows that $[\Gamma_n(\eta) + \alpha \phi_n] - [\Gamma_0(\eta) + \alpha \phi_0]=\smallo_p(1)$. By Condition~\ref{e RC assumption: continuous weight} and monotonicity of $\Gamma_0$, it follows that $\Gamma_0(\eta) + \alpha \phi_0 < \kappa$, and so, with probability tending to one, $\Gamma_n(\eta) + \alpha \phi_n < \kappa$ and hence $\eta_n(\kappa) \leq \eta$ by the definition of $\eta_n(\kappa)$. Because $\eta$ is arbitrary, we have shown that $\eta_n(\kappa) \overset{p}{\rightarrow} -\infty = \eta_0$.
\end{proof}

\begin{lemma}[Consistency of $\tau_n$ and existence of solution to Eq.~\ref{e RC equation: update quantile} when $\eta_0>-\infty$] \label{e RC lemma: updated tau}
    Assume that the conditions of Theorem~\ref{e RC theorem: asymptotic linearity} hold. The following statements hold:
    \begin{enumerate}[i)]
        \item\label{it:etaknEventually} if $\eta_0 > -\infty$, then, with probability tending to one, a solution $k_n' \in [0,\infty)$ to \eqref{e RC equation: update quantile} exists. Note that we let $k_n=k_n'$ when $\eta_n(\kappa)>0$ and $k_n'$ exists. Hence, if $\eta_0 > 0$, $\eta_n=\eta_n(k_n)=\eta_n(k_n')$ with probability tending to one;
        \item\label{it:etaSaturatesEventually} if a solution $k_n'$ to \eqref{e RC equation: update quantile} exists, then with probability tending to one, $P_0 \{d_{n,k_n'} \Delta^C_n\} + \alpha \phi_0 = \kappa + \bigO_p(n^{-1/2})$;
        \item\label{it:tauEcons} $\tau_n - \tau_0 = \smallo_p(1)$.
    \end{enumerate}
\end{lemma}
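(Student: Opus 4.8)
The plan is to reduce all three parts to properties of the function $g_n$ given by the left-hand side of \eqref{e RC equation: update quantile}. Write $\chi_n(o):=\{c-\Q^C_n(t,w)\}/\{t+\Q^T_n(w)-1\}$, so that $g_n(k)=P_n\{d_{n,k}(V)[\Delta^C_n(W)+\chi_n(O)]\}+\alpha\phi_n$. On the event that $\Delta^C_n>0$ holds $P_0$-almost everywhere, which by Condition~\ref{e RC assumption: consistency of strong encouragement} has probability tending to one, the recipe defining $d_{n,k}$ (serve units in decreasing order of $\xi_n$, with a fractional last unit, until the estimated budget $k-\alpha\phi_n$ is exhausted) yields the exact identity $P_n\{d_{n,k}\Delta^C_n\}=\min\{\max\{k-\alpha\phi_n,0\},P_n\Delta^C_n\}$; hence $g_n(k)=\min\{\max\{k,\alpha\phi_n\},P_n\Delta^C_n+\alpha\phi_n\}+\Xi_n(k)$ with $\Xi_n(k):=P_n\{d_{n,k}\chi_n\}$. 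Since $d_{n,k}(v)$ is continuous in $k$ for each $v$, $g_n$ is continuous on $[0,\infty)$. Finally $\Xi_n(k)=\smallo_p(1)$ uniformly in $k$: its empirical-process part is $\bigO_p(n^{-1/2})$ uniformly by Condition~\ref{e RC assumption: Donsker condition} applied to $\{o\mapsto d_{n,k}(v)D_2(\hat P_n,\Q^C_n)(o)\}$ (using $D_2(\hat P_n,\Q^C_n)=\chi_n+\Delta^C_n-P_n\Delta^C_n$) and a standard Vapnik--Chervonenkis bound for the thresholds of $\xi_n$, while its $P_0$-expectation equals $P_0\{d_{n,k}(\Delta^C_0-\Delta^C_n)\}$ up to a product remainder that is $\smallo_p(n^{-1/2})$ by Conditions~\ref{e RC assumption: remainder} and \ref{e assumption: consistency of strong IV positivity}, and $\|\Delta^C_n-\Delta^C_0\|_{1,P_0}=\smallo_p(1)$ by Condition~\ref{e RC assumption: Glivenko-Cantelli}.

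For part (i), I evaluate $g_n$ at the ends of the clamped range. Since $d_{n,k}$ vanishes on the sample at $k=0$, $g_n(0)=\alpha\phi_n$, which is strictly below $\kappa$ with probability tending to one by Condition~\ref{IV assumption: existence of nontrivial feasible IER} and the consistency of $\phi_n$ from Lemma~\ref{e RC lemma: AL of resource0 and additional resource}. For $k$ above a fixed constant $d_{n,k}\equiv 1$, so $g_n(k)=P_n\Delta^C_n+\Xi_n(k)+\alpha\phi_n\overset{p}{\rightarrow}\expect_0[\Delta^C_0(W)]+\alpha\phi_0$; and when $\eta_0>-\infty$ this limit strictly exceeds $\kappa$, since $\expect_0[\Delta^C_0(W)]+\alpha\phi_0\le\kappa$ would give $\Gamma_0(\eta)\le\expect_0[\Delta^C_0(W)]\le\kappa-\alpha\phi_0$ for all $\eta$ and hence $\eta_0=-\infty$. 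Continuity of $g_n$ and the intermediate value theorem then give a solution $k_n'\in[0,\infty)$ of \eqref{e RC equation: update quantile} with probability tending to one. If moreover $\eta_0>0$, then $\eta_n(\kappa)\overset{p}{\rightarrow}\eta_0>0$ by Lemma~\ref{e RC lemma: consistency of eta}, so $\tau_n(\kappa)>0$ eventually; combined with the existence of $k_n'$, the definition of $k_n$ forces $k_n=k_n'$, whence $\eta_n=\eta_n(k_n)=\eta_n(k_n')$ with probability tending to one.

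For part (ii), suppose $k_n'$ exists. From $g_n(0)=\alpha\phi_n$ and the large-$k$ value above, $k_n'$ lies in the interior of the clamped range with probability tending to one, so the exact identity gives $P_n\{d_{n,k_n'}\Delta^C_n\}=k_n'-\alpha\phi_n$ and the equation $g_n(k_n')=\kappa$ becomes $k_n'=\kappa-\Xi_n(k_n')$. Substituting and writing $d_{n,k_n'}[\Delta^C_n+\chi_n]=d_{n,k_n'}D_2(\hat P_n,\Q^C_n)+(P_n\Delta^C_n)\,d_{n,k_n'}$, I pass from $P_n$ to $P_0$: the resulting empirical-process terms are $\bigO_p(n^{-1/2})$ by Condition~\ref{e RC assumption: Donsker condition} and the Vapnik--Chervonenkis bound for $\{d_{n,k}\}$ (the fractional modification on the level set $\{\xi_n=\eta_n(k)\}$ contributing negligibly), the residual $P_0$-expectation $P_0\{d_{n,k_n'}\chi_n\}$ is a genuine second-order remainder controlled by Condition~\ref{e RC assumption: remainder}, and $\phi_n-\phi_0=\bigO_p(n^{-1/2})$ by Lemma~\ref{e RC lemma: AL of resource0 and additional resource}; rearranging gives $P_0\{d_{n,k_n'}\Delta^C_n\}+\alpha\phi_0=\kappa+\bigO_p(n^{-1/2})$. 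The structural point is that $\Delta^C_n+\chi_n$ is exactly the one-step-corrected pseudo-outcome for the additional treatment cost, so its conditional $P_0$-mean agrees with the true contrast up to terms handled by Condition~\ref{e RC assumption: remainder}; this is what the calibration step \eqref{e RC equation: update quantile} is designed to exploit.

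For part (iii), Lipschitzness of $x\mapsto\max\{x,0\}$ reduces the claim to $\eta_n:=\eta_n(k_n)\overset{p}{\rightarrow}\eta_0$ in the extended real line. If $k_n=\kappa$ this is exactly Lemma~\ref{e RC lemma: consistency of eta}. Otherwise $k_n=k_n'$, in which case $\Xi_n(k_n')=\smallo_p(1)$ from the first paragraph gives $k_n'\overset{p}{\rightarrow}\kappa$ and, necessarily, $\eta_0\ge 0$; I then repeat the proof of Lemma~\ref{e RC lemma: consistency of eta} with $k_n'$ in place of $\kappa$, noting that for small $\epsilon>0$ the quantities $\Gamma_n(\eta_0\pm\epsilon)+\alpha\phi_n$ converge in probability to $\Gamma_0(\eta_0\pm\epsilon)+\alpha\phi_0$, which by Conditions~\ref{e RC assumption: continuous weight}, \ref{e RC assumption: continuous density} and \ref{IV assumption: strong encouragement} lie strictly on the correct side of $\kappa=\lim_n k_n'$, so that $\eta_n(k_n')\in[\eta_0-\epsilon,\eta_0+\epsilon]$ with probability tending to one. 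The main obstacle is part (ii): upgrading the calibrated resource error from the routine $\smallo_p(1)$ to $\bigO_p(n^{-1/2})$. This rests on exploiting the \emph{exact} knapsack identity on $P_n$ so that the plug-in bias coincides with the empirical mean of a correction whose $P_0$-mean is a true second-order remainder, and on making every empirical-process bound uniform over $k$ because the estimated rule $d_{n,k_n'}$ is itself a complicated data-dependent object; this step is the reason \eqref{e RC equation: update quantile} is introduced in the first place and is what ultimately makes the extra term identified in Remark~\ref{remark: additional term in gradient from knapsack} asymptotically negligible in the analysis of $\psi_n$.
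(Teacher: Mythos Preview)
Your argument is broadly correct and reaches the right conclusions, but it takes a somewhat different route than the paper in two places, and leans on one ingredient that is not among the stated conditions.

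\textbf{Comparison with the paper.} The paper establishes a single uniform bound
\[
\sup_{k\in[0,\infty]}\bigl|P_n f_{n,k}-P_0\{d_{n,k}\Delta^C_0\}\bigr|=\bigO_p(n^{-1/2}),
\]
where $f_{n,k}$ is your $d_{n,k}(\Delta^C_n+\chi_n)$, and then reads off all three parts from it: evaluating at $k=0$ and $k=\infty$ gives (i) via the intermediate value theorem; plugging in $k=k_n'$ and using $\phi_n-\phi_0=\bigO_p(n^{-1/2})$ gives (ii) in one line; and for (iii) with $\eta_0>0$ it runs a contradiction argument---if $\eta_n$ stayed at least $\epsilon$ away from $\eta_0$ on a set of nonvanishing probability, then $P_0\{d_{n,k_n}\Delta^C_0\}+\alpha\phi_0-\kappa$ would be bounded away from zero, contradicting (ii). Your approach instead exploits the exact knapsack identity $P_n\{d_{n,k}\Delta^C_n\}=\min\{\max\{k-\alpha\phi_n,0\},P_n\Delta^C_n\}$ to rewrite $g_n$ as a clamped linear function plus the correction $\Xi_n$, and for (iii) you first show $k_n'\overset{p}{\rightarrow}\kappa$ and then rerun the consistency argument of Lemma~\ref{e RC lemma: consistency of eta} at the random budget $k_n'$. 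Your identity is a nice structural observation that the paper does not use, and your route to (iii) is perfectly valid; the paper's contradiction argument is shorter because it never needs $k_n'\rightarrow\kappa$ explicitly.

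\textbf{One gap to be aware of.} In both your uniform control of $\Xi_n$ and your proof of (ii), you decompose $d_{n,k}(\Delta^C_n+\chi_n)=d_{n,k}D_2(\hat P_n,\Q^C_n)+(P_n\Delta^C_n)d_{n,k}$ and then need $\sup_k|(P_n-P_0)d_{n,k}|=\bigO_p(n^{-1/2})$, which you justify via ``a standard Vapnik--Chervonenkis bound for the thresholds of $\xi_n$''. This is not among Conditions~\ref{e RC assumption: non-exceptional law}--\ref{e RC assumption: Glivenko-Cantelli}: Condition~\ref{e RC assumption: Donsker condition} only asserts that $\{d_{n,k}D_2(\hat P_n,\Q^C_n):k\}$ is Donsker, and Condition~\ref{e RC assumption: Glivenko-Cantelli} is only Glivenko--Cantelli. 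The paper's one-line decomposition sidesteps spelling this out, but in effect the same control is being used there as well; if you want your argument to stand on exactly the stated conditions, either strengthen the Donsker assumption to cover $\{f_{n,k}:k\}$ directly, or note that the threshold class $\{v\mapsto I(\xi_n(v)>\eta):\eta\in\real\}$ is VC for each fixed $\xi_n$ and the fractional piece lives on a single level set of $\xi_n$. Also, your case split in (iii) (``necessarily $\eta_0\ge 0$'') should be phrased the other way around: first split on the sign of $\eta_0$, then note that when $\eta_0<0$ Lemma~\ref{e RC lemma: consistency of eta} forces $\tau_n(\kappa)=0$ and hence $k_n=\kappa$ with probability tending to one.
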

We separately prove \ref{it:etaknEventually}, \ref{it:etaSaturatesEventually}, and \ref{it:tauEcons} in the case that $\eta_0>-\infty$, and then we separately prove \ref{it:tauEcons} in the cases where $\eta_0>0$, $\eta_0=0$ and $\eta_0<0$.

\begin{proof}[Proof of \ref{it:etaknEventually} from Lemma~\ref{e RC lemma: updated tau}.]
    Our strategy for showing the existence of a solution to \eqref{e RC equation: update quantile} is as follows. First, we show that the left-hand side of \eqref{e RC equation: update quantile} consistently estimates the treatment resource being used uniformly over rules $\{d_{n,k}: k \in [0,\infty]\}$. Next, we show that the left-hand side of \eqref{e RC equation: update quantile} is a continuous function in $k$ that takes different signs at $k=0$ and $k=\infty$ with probability tending to one.
    
    Define $f_{n,k} : o\mapsto d_{n,k}(v) \left[ \Delta^C_n(w) + \frac{1}{t+\Q^T_n(w)-1} [c - \Q^C_n(t,w)] \right]$. 
    We first show that
    \begin{align}
        \sup_{k\in[0,\infty]}|P_n f_{n,k} - P_0 \{ d_{n,k} \Delta^C_0 \}| =  \bigO_p(n^{-1/2}). \label{eq:fnkallk}
    \end{align}
    We rely on the fact that, for fixed $d_{n,k}$, $P_n f_{n,k}$ is a one-step estimator of $P_0 \{ d_{n,k} \Delta^C_0 \}$.
    Note that
    \begin{align*}
        \sup_{k\in[0,\infty]}\left|P_n f_{n,k} - P_0 \{ d_{n,k} \Delta^C_0 \}\right| &\le \sup_{k\in[0,\infty]}\left|(P_n-P_0) d_{n,k} D_2(\hat{P}_n,\Q^C_n)\right| \\
        &\quad+ \sup_{k\in[0,\infty]}\left|P_0 \left\{ d_{n,k}(V(\cdot)) \frac{\Q^T_n(\cdot) - \Q^T_0(\cdot)}{\Q^T_n(\cdot)} [\Q^C_n(1,\cdot) - \Q^C_0(1,\cdot)] \right\}\right| \\
        &\quad+ \sup_{k\in[0,\infty]}\left|P_0 \left\{ d_{n,k}(V(\cdot)) \frac{\Q^T_n(\cdot) - \Q^T_0(\cdot)}{1-\Q^T_n(\cdot)} [\Q^C_n(0,\cdot) - \Q^C_0(0,\cdot)] \right\}\right|.
    \end{align*}
    Conditions~\ref{e RC assumption: consistency of estimated influence function}~and~\ref{e RC assumption: Donsker condition} along with Lemma~\ref{lemma: equivalence of L2 distance} imply that the first term on the right-hand side is $\bigO_p(n^{-1/2})$. For the second term, we note that the fact that $d_{n,k}(V(w)) \in [0,1]$ for all $w \in \mathcal{W}$ and all $k \in [0,\infty]$ and the Cauchy-Schwarz inequality imply that
    \begin{align*}
        & \sup_{k \in [0,\infty]} \left| P_0 \left\{ d_{n,k}(V(\cdot)) \frac{\Q^T_n(\cdot) - \Q^T_0(\cdot)}{\Q^T_n(\cdot)} [\Q^C_n(1,\cdot) - \Q^C_0(1,\cdot)] \right\} \right| \\
        &\leq P_0 \left| \frac{\Q^T_n(\cdot) - \Q^T_0(\cdot)}{\Q^T_n(\cdot)} [\Q^C_n(1,\cdot) - \Q^C_0(1,\cdot)] \right| \lesssim \| \Q^T_n - \Q^T_0 \|_{2,P_0} \| \Q^C_n(1,\cdot) - \Q^C_0(1,\cdot) \|_{2,P_0}.
    \end{align*}
    Hence, the second term is $\smallo_p(n^{-1/2})$ by Condition~\ref{e RC assumption: remainder}. The third term is also $\smallo_p(n^{-1/2})$ by an almost identical argument. Combining the previous two displays shows that \eqref{eq:fnkallk} holds.
    
    Applying \eqref{eq:fnkallk} at $k=0$ shows that $P_n f_{n,0} + \alpha \phi_n = P_0 \{ d_{n,0} \Delta^C_0 \} + \alpha \phi_0 + \bigO_p(n^{-1/2}) = \alpha \phi_0 + \bigO_p(n^{-1/2})$. Therefore, $P_n f_{n,0} + \alpha \phi_n <\kappa$ with probability tending to one. Applying this result at $k=\infty$ shows that $P_n f_{n,1} + \alpha \phi_n = P_0 \{ d_{n,\infty} \Delta^C_0 \} + \alpha \phi_0 + \bigO_p(n^{-1/2}) = P_0 \Delta^C_0 + \alpha \phi_0 + \bigO_p(n^{-1/2})$. Combining this fact with the fact that $P_0 \Delta^C_0 + \alpha \phi_0 > \kappa$ whenever $\eta_0 > -\infty$ shows that $P_n f_{n,1} + \alpha \phi_n > \kappa$ with probability tending to one. Combining these results at $k=0$ and $k=\infty$ with the fact that $k\mapsto P_n f_{n,k}$ is a continuous function shows that, with probability tending to one, there exists a $k_n' \in [0,\infty)$ such that $P_n f_{n,k_n'}=\kappa - \alpha \phi_n$. Lemma~\ref{e RC lemma: consistency of eta} then implies $\eta_n=\eta_n(k_n)$ with probability tending to 1.
\end{proof}

\begin{proof}[Proof of \ref{it:etaSaturatesEventually} from Lemma~\ref{e RC lemma: updated tau}.].
    By Lemma~\ref{e RC lemma: AL of resource0 and additional resource}, Eq. \ref{eq:fnkallk} and part \ref{it:etaknEventually} of this lemma, we see that $P_0 \{ d_{n,k_n} \Delta^C_0 \} + \alpha \phi_0 = P_n f_{n,k_n} + \alpha \phi_n + \bigO_p(n^{-1/2}) = \kappa + \bigO_p(n^{-1/2})$, as desired.
\end{proof}
\begin{proof}[Proof of \ref{it:tauEcons} from Lemma~\ref{e RC lemma: updated tau} when $\eta_0>0$.]
    In this proof, we use $P_0^n$ to denote a probability statement over the draws of $O_1,\ldots,O_n$. Fix $\epsilon>0$. We will argue by contradiction to show that $P_0^n\left\{\eta_n\ge \eta_0 + \epsilon\right\}\rightarrow 0$ and $P_0^n\left\{\eta_n \leq \eta_0 - \epsilon\right\}\rightarrow 0$ as $n\rightarrow\infty$, implying the consistency of $\eta_n$. The consistency of $\tau_n$ then follows. We study these two events separately. First, we suppose that
    \begin{align}
        \limsup_n P_0^n\left\{\eta_n \geq \eta_0 + \epsilon\right\} > 0. \label{eq:limsupetan}
    \end{align}
    Then there exists $\delta>0$ such that, for all $n$ in an infinite sequence $N\subseteq\mathbb{N}$, the probability $P_0^n\left\{\eta_n \geq \eta_0 + \epsilon\right\}$ is at least $\delta$. Consequently, for any $n\in N$, the following holds with probability at least $\delta$:
    \begin{align}
        P_0 \{ d_{n,k_n} \Delta^C_0 \} + \alpha \phi_0 - \kappa
        &\leq P_0 \{ I(\xi_n > \eta_0 + \epsilon/2) \Delta^C_0 \} + \alpha \phi_0 - \kappa \nonumber \\
        &= P_0 \{ [I(\xi_n > \eta_0 + \epsilon/2) - I(\xi_0 > \eta_0 + \epsilon/2)] \Delta^C_0 \} + \Gamma_0(\eta_0 + \epsilon/2) + \alpha \phi_0 - \kappa. \label{eq:dnkGap}
    \end{align}
    We now show that the first term is $\smallo_p(1)$. For any $x>0$ and $n\in\mathbb{N}$, by Lemma~\ref{lemma: bound for consistency of survival function} and Condition~\ref{e RC assumption: bounded encouragement effect},
    \begin{align*}
        |P_0 \{ [I(\xi_n > \eta_0 + \epsilon/2) - I(\xi_0 > \eta_0 + \epsilon/2)] \Delta^C_0 \}| 
        &\lesssim P_0 I(|\xi_n-\xi_0| > x) + P_0 I(|\xi_0-\eta_n+\epsilon/2| \leq x) \\
        &\leq \frac{\| \xi_n-\xi_0 \|_{1,P_0}}{x} + P_0 I(|\xi_0-\eta_n+\epsilon/2| \leq x).
    \end{align*}
    Similarly to the proof of Lemma~\ref{e RC lemma: consistency of eta}, the fact that $\| \xi_n - \xi_0 \|_{1,P_0} = \smallo_p(1)$ (Condition~\ref{e RC assumption: Glivenko-Cantelli}) ensures that $P_0 \{ [I(\xi_n > \eta_0 + \epsilon/2) - I(\xi_0 > \eta_0 + \epsilon/2)] \Delta^C_0=\smallo_p(1)$. By Condition~\ref{e RC assumption: continuous weight}, $\Gamma_0(\eta_0 + \epsilon/2) - \Gamma_0(\eta_0)$ is a negative constant. Because \eqref{eq:dnkGap} holds with probability at least $\delta>0$ for infinitely many $n$, this shows that $P_0 \{ d_{n,k_n} \Delta^C_0 \} + \alpha \phi_0 - \kappa$ \textit{is not} $\smallo_p(1)$. This contradicts our result from part~\ref{it:etaSaturatesEventually} of this lemma. Therefore, \eqref{eq:limsupetan} is false, that is, $\limsup_n P_0^n\left\{\eta_n \geq \eta_0 + \epsilon\right\} = 0$.
    
    Now we assume that, for some $\epsilon>0$, $\limsup_n P_0^n\left\{\eta_n \leq \eta_0 - \epsilon\right\} > 0$. Then there exists $\delta>0$ such that, for all $n$ in an infinite sequence $N\subseteq\mathbb{N}$, $P_0^n\left\{\eta_n \leq \eta_0 - \epsilon\right\}\ge \delta$. Now, for any $n\in N$, the following holds with probability at least $\delta$:
    \begin{align*}
        P_0 \{ d_{n,k_n} \Delta^C_0 \} + \alpha \phi_0 - \kappa
        &\geq P_0 \{ I(\xi_n > \eta_0 - \epsilon) \Delta^C_0 \} + \alpha \phi_0 - \kappa \nonumber \\
        &= P_0 \{ [I(\xi_n > \eta_0 - \epsilon) - I(\xi_0 > \eta_0 - \epsilon)] \nu_0 \} + \Gamma_0(\eta_0 - \epsilon) + \alpha \phi_0 - \kappa.
    \end{align*}
    The rest of the argument is almost identical to the contradiction argument for the previous event, and is therefore omitted.
    
    Since $\epsilon$ is arbitrary, combining the results of these two contradiction arguments shows that $|\tau_n-\tau_0| \leq |\eta_n-\eta_0|=\smallo_p(1)$, as desired.
\end{proof}

\begin{proof}[Proof of \ref{it:tauEcons} from Lemma~\ref{e RC lemma: updated tau} when $\eta_0=0$.]
    If $\eta_0 =0$, then the construction of $\eta_n$ implies that $\eta_n$ takes values from two sequences: $\eta_n(\kappa)$ and $\eta_n(k_n)$ where $k_n$ is a solution to \eqref{e RC equation: update quantile}. By Lemma~\ref{e RC lemma: consistency of eta}, $\eta_n(\kappa)$ is consistent for $\eta_0$. When a solution to \eqref{e RC equation: update quantile} exists and equals $k_n$, the proof of part~\ref{it:tauEcons} from Lemma~\ref{e RC lemma: updated tau} when $\eta_0>0$ shows that $\eta_n(k_n)$ is consistent for $\eta_0$ and the desired result follows.
\end{proof}

\begin{proof}[Proof of \ref{it:tauEcons} from Lemma~\ref{e RC lemma: updated tau} when $\eta_0<0$.]
    If $\eta_0 < 0$, then by Lemma~\ref{e RC lemma: consistency of eta}, $\eta_n(\kappa) \leq 0$ with probability tending to one. Hence, with probability tending to one, $\tau_n=0=\tau_0$. Therefore, part~\ref{it:tauEcons} holds.
\end{proof}

The following Lemma~\ref{e RC lemma: remainder} show that certain remainders in the expansions in Section~\ref{section: pseudo-gradient expansion} are $\smallo_p(n^{-1/2})$.

\begin{lemma} \label{e RC lemma: remainder}
    Under Conditions~\ref{IV assumption: strong IV positivity},~\ref{e assumption: consistency of strong IV positivity}~and~\ref{e RC assumption: remainder},
    $$\sup_{\edecision: \mathcal{W} \rightarrow [0,1]} \left| R_\edecision(\hat{P}_n,P_0) \right|=\smallo_p(n^{-1/2}).$$
\end{lemma}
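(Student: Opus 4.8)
The plan is to exploit the second-order (doubly robust) representation of $R_\edecision$ displayed immediately before the lemma, which expresses $R_\edecision(\hat P_n,P_0)$ as a $P_0$-expectation of a product of two nuisance-estimation errors, weighted by the rule $\edecision(W)$ and by reciprocals of the estimated propensity. Because every rule under consideration takes values in $[0,1]$, we have $0\le\edecision(W)\le 1$, so that after one triangle inequality the resulting bound is free of $\edecision$; this is exactly what makes the supremum over all rules $\edecision:\mathcal W\to[0,1]$ harmless. Writing $\hat\Q^Y_n=\Q^Y_{\hat P_n}$ and $\Q^T_n=\Q^T_{\hat P_n}$, and recalling that $D(\cdot,\cdot,0,\cdot)$ does not depend on its $\Q^C$ argument, I would first bound
\[
  \sup_{\edecision:\mathcal W\to[0,1]}\bigl|R_\edecision(\hat P_n,P_0)\bigr|
  \;\le\; \expect_0\!\left[\frac{|\Q^T_n(W)-\Q^T_0(W)|}{\Q^T_n(W)}\,|\hat\Q^Y_n(1,W)-\Q^Y_0(1,W)|
  +\frac{|\Q^T_n(W)-\Q^T_0(W)|}{1-\Q^T_n(W)}\,|\hat\Q^Y_n(0,W)-\Q^Y_0(0,W)|\right].
\]

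Next I would control the denominators. By Condition~\ref{IV assumption: strong IV positivity}, $\Q^T_0$ lies in $(\epsilon_T,1-\epsilon_T)$ $P_0$-almost everywhere, and by Condition~\ref{e assumption: consistency of strong IV positivity} the same two-sided bound holds for $\Q^T_n$, $P_0$-almost everywhere, on an event $E_n$ with $P_0(E_n)\to 1$. On $E_n$ both $\Q^T_n(W)$ and $1-\Q^T_n(W)$ are at least $\epsilon_T$, so the right-hand side above is at most $\epsilon_T^{-1}\,\expect_0\bigl[|\Q^T_n(W)-\Q^T_0(W)|\,(|\hat\Q^Y_n(1,W)-\Q^Y_0(1,W)|+|\hat\Q^Y_n(0,W)-\Q^Y_0(0,W)|)\bigr]$. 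Applying the Cauchy--Schwarz inequality in $L^2(P_0)$ to each of the two products bounds this by
\[
  \epsilon_T^{-1}\,\|\Q^T_n-\Q^T_0\|_{2,P_0}\bigl(\|\hat\Q^Y_n(1,\cdot)-\Q^Y_0(1,\cdot)\|_{2,P_0}+\|\hat\Q^Y_n(0,\cdot)-\Q^Y_0(0,\cdot)\|_{2,P_0}\bigr),
\]
and then Lemma~\ref{lemma: equivalence of L2 distance} (valid under Condition~\ref{IV assumption: strong IV positivity}) converts the sum of the two conditional $L^2$-errors into $\lesssim\|\hat\Q^Y_n-\Q^Y_0\|_{2,P_0}$. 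Condition~\ref{e RC assumption: remainder} then yields $\|\Q^T_n-\Q^T_0\|_{2,P_0}\,\|\hat\Q^Y_n-\Q^Y_0\|_{2,P_0}=\smallo_p(n^{-1/2})$, so the displayed bound is $\smallo_p(n^{-1/2})$ on $E_n$. Finally, since $P_0(E_n^c)\to 0$, the event $E_n^c$ contributes negligibly: for each $\delta>0$, $P_0\bigl(\sqrt n\,\sup_\edecision|R_\edecision(\hat P_n,P_0)|>\delta,\,E_n^c\bigr)\le P_0(E_n^c)\to 0$. Combining the two cases gives $\sup_{\edecision:\mathcal W\to[0,1]}|R_\edecision(\hat P_n,P_0)|=\smallo_p(n^{-1/2})$, as claimed.

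The argument is essentially routine, and, unlike for the pseudo-gradient remainders, no empirical-process (Donsker or Glivenko--Cantelli) machinery is needed here, because $R_\edecision(\hat P_n,P_0)$ is a deterministic $P_0$-integral given the nuisance estimators rather than an empirical process term. The only points that require a little care are (i) obtaining a bound uniform over all rules $\edecision$ simultaneously, handled by discarding the $[0,1]$-valued factor $\edecision(W)$ before any further estimation; and (ii) the fact that the estimated propensity $\Q^T_n$ is guaranteed to be bounded away from $0$ and $1$ only with probability tending to one, which forces the final step to be split on the positivity event $E_n$.
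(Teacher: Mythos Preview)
Your proof is correct and follows essentially the same approach as the paper's own argument: drop the $[0,1]$-valued factor $\edecision(W)$ to obtain a rule-free bound, control the propensity denominators via Condition~\ref{e assumption: consistency of strong IV positivity}, apply Cauchy--Schwarz, invoke Lemma~\ref{lemma: equivalence of L2 distance}, and conclude with Condition~\ref{e RC assumption: remainder}. Your explicit split on the positivity event $E_n$ is a slightly more careful rendering of what the paper handles implicitly through the phrase ``with probability tending to one,'' but the substance is identical.
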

\begin{proof}[Proof of Lemma~\ref{e RC lemma: remainder}]
    By the boundedness of the range of $\edecision$, we see that
    \begin{align*}
        &\sup_{\edecision: \mathcal{W} \rightarrow [0,1]} \left| R_\edecision(\hat{P}_n,P_0) \right| \\
        &= \sup_{\edecision: \mathcal{W} \rightarrow [0,1]} P_0 \left| \edecision(\cdot) \left[ \frac{\Q^T_n(\cdot)-\Q^T_0(\cdot)}{\Q^T_n(\cdot)} \{\hat{\Q}^Y_n(1,\cdot) - \Q^Y_0(1,\cdot)\} + \frac{\Q^T_n(\cdot)-\Q^T_0(\cdot)}{1-\Q^T_n} \{\hat{\Q}^Y_n(0,\cdot) - \Q^Y_0(0,\cdot)\} \right] \right| \\
        &\leq P_0 \left| \left[ \frac{\Q^T_n(\cdot)-\Q^T_0(\cdot)}{\Q^T_n(\cdot)} \{\hat{\Q}^Y_n(1,\cdot) - \Q^Y_0(1,\cdot)\} + \frac{\Q^T_n(\cdot)-\Q^T_0(\cdot)}{1-\Q^T_n(\cdot)} \{\hat{\Q}^Y_n(0,\cdot) - \Q^Y_0(0,\cdot)\} \right] \right|. \\
        \intertext{Using Condition~\ref{e assumption: consistency of strong IV positivity} and Lemma~\ref{lemma: equivalence of L2 distance}, the display continues as}
        \lesssim& P_0 \left| (\Q^T_n(\cdot)-\Q^T_0(\cdot)) [\hat{\Q}^Y_n(1,\cdot) - \Q^Y_0(1,\cdot)] \right| + P_0 \left| (\Q^T_n(\cdot)-\Q^T_0(\cdot)) [\hat{\Q}^Y_n(0,\cdot) - \Q^Y_0(0,\cdot)] \right| \\
        \leq& \|\Q^T_n-\Q^T_0\|_{2,P_0} \|\hat{\Q}^Y_n(1,\cdot) - \Q^Y_0(1,\cdot)\|_{2,P_0} + \|\Q^T_n-\Q^T_0\|_{2,P_0} \|\hat{\Q}^Y_n(0,\cdot) - \Q^Y_0(0,\cdot)\|_{2,P_0} \\
        \lesssim& \|\Q^T_n-\Q^T_0\|_{2,P_0} \|\hat{\Q}^Y_n-\Q^Y_0\|_{2,P_0}.
    \end{align*}
    The right-hand side is $\smallo_p(n^{-1/2})$ by Condition~\ref{e RC assumption: remainder}.
\end{proof}

We next prove Theorem~\ref{e RC theorem: asymptotic linearity}.
\begin{proof}[Proof of Theorem~\ref{e RC theorem: asymptotic linearity}]
    By the expansion of $P \mapsto \Psi_{\edecision_P}(P)$ presented in Section~\ref{section: pseudo-gradient expansion},
    \begin{align*}
        \Psi_{\edecision_{n}}&(\hat{P}_n) - \Psi_{\edecision_{0}}(P_0) \\
        &= P_0 D(\hat{P}_n,\edecision_{n},\tau_0,\Q^C_n) + R_{\edecision_{n}}(\hat{P}_n,P_0) + P_0 \{ (\edecision_{n} - \edecision_{0}) (\delta^Y_{0} - \tau_0 \delta^C_{0}) \} \\
        &\quad- \tau_0 P_0 \left\{ \edecision_{n}(\cdot) \frac{\Q^T_n(\cdot) - \Q^T_0(\cdot)}{\Q^T_n(\cdot)} [\Q^C_n(1,\cdot) - \Q^C_0(1,\cdot)] \right\} \\
        &\quad+ \tau_0 P_0 \left\{ (1-\edecision_{n}(\cdot)) \frac{\Q^T_n(\cdot) - \Q^T_0(\cdot)}{1-\Q^T_n(\cdot)} [\Q^C_n(0,\cdot) - \Q^C_0(0,\cdot)] \right\} \\
        &= (P_n - P_0) D(P_0,\edecision_{0},\tau_0,\Q^C_0) - P_n D(\hat{P}_n,\edecision_{n},\tau_0,\Q^C_n) \\
        &\quad+ (P_n - P_0) \left[ D(\hat{P}_n,\edecision_{n},\tau_0,\Q^C_n) - D(P_0,\edecision_{0},\tau_0,\Q^C_0) \right] \\
        &\quad+ R_{\edecision_{n}}(\hat{P}_n,P_0) + P_0 \{ (\edecision_{n} - \edecision_{0}) (\delta^Y_{0} - \tau_0 \delta^C_{0}) \} \\
        &\quad- \tau_0 P_0 \left\{ \edecision_{n}(\cdot) \frac{\Q^T_n(\cdot) - \Q^T_0(\cdot)}{\Q^T_n(\cdot)} [\Q^C_n(1,\cdot) - \Q^C_0(1,\cdot)] \right\} \\
        &\quad+ \tau_0 P_0 \left\{ (1-\edecision_{n}(\cdot)) \frac{\Q^T_n(\cdot) - \Q^T_0(\cdot)}{1-\Q^T_n(\cdot)} [\Q^C_n(0,\cdot) - \Q^C_0(0,\cdot)] \right\}.
    \end{align*}
    Similarly,
    \begin{align*}
        \Psi_{\edecision^\FR}(\hat{P}_n) - \Psi_{\edecision^\FR}(P_0) 
        &= (P_n-P_0) D(P_0,\edecision^\FR,0,\Q^C_0) - P_n D(\hat{P}_n,\edecision^\FR,0,\Q^C_0) \\
        &\quad+ (P_n-P_0) \left[ D(\hat{P}_n,\edecision^\FR,0,\Q^C_0) - D(P_0,\edecision^\FR,0,\Q^C_0) \right] + R_{\edecision^\FR}(\hat{P}_n,P_0); \\
        \Psi_{\edecision^\RD_n}(\hat{P}_n) - \Psi_{\edecision^\RD_0}(P_0) &= (P_n-P_0) D(P_0,\edecision^\RD_0,0,\Q^C_0) - P_n D(\hat{P}_n,\edecision^\RD_n,0,\Q^C_0) \\
        &\quad+ (P_n-P_0) \left[ D(\hat{P}_n,\edecision^\RD_n,0,\Q^C_0) - D(P_0,\edecision^\RD_0,0,\Q^C_0) \right] \\
        &\quad+ R_{\edecision^\RD_n}(\hat{P}_n,P_0) + (\edecision^\RD_n - \edecision^\RD_0) P_0 \Delta^Y_0; \\
        \Psi_{\edecision^\TP_n}(\hat{P}_n) - \Psi_{\edecision^\TP_0}(P_0) &= (P_n-P_0) G_\TP(P_0) - P_n G_\TP(\hat{P}_n) + (P_n-P_0) \left[ G_\TP(\hat{P}_n) - G_\TP(P_0) \right] \\
        &\quad+ R_{\edecision^\TP_0}(\hat{P}_n,P_0).
    \end{align*}
    
    First, we note the following facts, which will be sufficient to ensure that the remainders and empirical process terms in all of the first-order expansions given above are $\smallo_p(n^{-1/2})$. By Condition~\ref{e RC assumption: remainder}, Lemmas~\ref{lemma: equivalence of L2 distance} and \ref{e RC lemma: remainder}, the Cauchy-Schwarz inequality, and boundedness of the range of an ITR, the following terms are all $\smallo_p(n^{-1/2})$:
    \begin{align*}
        & R_{\edecision}(\hat{P}_n,P_0) \ \textnormal{for $\edecision=\edecision_n,\edecision^\FR,\edecision^\RD_n$}, \\
        & P_0 \left\{ \frac{\Q^T_n(\cdot)-\Q^T_0(\cdot)}{1-\Q^T_n(\cdot)} (\hat{\Q}^Y_P(0,\cdot)-\hat{\Q}^Y_0(0,\cdot)) \right\}, \\
        &\tau_0 P_0 \left\{ \edecision_{n}(\cdot) \frac{\Q^T_n(\cdot) - \Q^T_0(\cdot)}{\Q^T_n(\cdot)} [\Q^C_n(1,\cdot) - \Q^C_0(1,\cdot)] \right\},  \\
        &\tau_0 P_0 \left\{ (1-\edecision_{n}(\cdot)) \frac{\Q^T_n(\cdot) - \Q^T_0(\cdot)}{1-\Q^T_n(\cdot)} [\Q^C_n(0,\cdot) - \Q^C_0(0,\cdot)] \right\}. 
    \end{align*}
    Moreover, by Condition~\ref{e RC assumption: IER remainder}, $P_0 \{ (\edecision_{n} - \edecision_{0}) (\delta^Y_{0} - \tau_0 \delta^C_{0}) \} = \smallo_p(n^{-1/2})$; by Conditions~\ref{e RC assumption: consistency of estimated influence function}~and~\ref{e RC assumption: Donsker condition}, $(P_n-P_0) \left[ D_{n,\mathcal{R}} - D_\mathcal{R}(P_0) \right]= \smallo_p(n^{-1/2})$ for all $\mathcal{R} \in \{\FR,\RD,\TP\}$ and
    $$(P_n-P_0) \left\{ [D(\hat{P}_n,\edecision_n,\tau_0,\Q^C_n) - D(\hat{P}_n,\edecision^\RD_n,0,\Q^C_0)] - [D(P_0,\edecision_0,\tau_0,\Q^C_0) - D(P_0,\edecision^\RD_0,0,\Q^C_0)] \right\} = \smallo_p(n^{-1/2}).$$
    Therefore, all relevant remainders and empirical process terms are $\smallo_p(n^{-1/2})$.
    
    We separately study the three cases where $\mathcal{R}=\FR$, $\mathcal{R}=\RD$, and $\mathcal{R}=\TP$. 
    
    \vspace{0.5em}\noindent\textit{Case I: $\mathcal{R}=\FR$.} 
    It holds that
    \begin{align*}
        \psi_n - \psi_0
        &= (P_n-P_0) D_\FR(P_0) - P_n D_{n,\FR} + \smallo_p(n^{-1/2}) \\
        &= (P_n-P_0) D_\FR(P_0) \\
        &\quad+ \tau_0 \Bigg\{ \frac{1}{n}\sum_{i=1}^n \Bigg\{ \edecision_{n}(V_i) \left[ \Delta^C_n(W_i) + \frac{1}{T_i+\Q^T_n(W_i)-1} [C_i - \Q^C_n(T_i,W_i)] \right] \\
        &\hspace{.5in}+ \alpha \left[ \Q^C_n(0,W_i) + \frac{1-T_i}{1-\Q^T_n(W_i)} [C_i - \Q^C_n(0,W_i)] \right] \Bigg\} - \kappa \Bigg\} + \smallo_p(n^{-1/2}),
    \end{align*}
    where the last step follows from the TMLE construction of $\hat{P}_n$ (Step~\ref{ie:targQY} of our estimator), which implies that
    $$\frac{1}{n} \sum_{i=1}^n \left\{ \frac{\edecision_{n}(V_i) - \edecision^\FR(V)}{T_i+\Q^T_n(W_i)-1} [Y_i - \hat{\Q}^Y_n(T_i,W_i)] \right\} = 0.$$
    We now show that the second term on the right-hand side is zero with probability tending to one. If $\tau_0=0$, then this term is zero. Otherwise, $\tau_0 = \eta_0 > 0$. By Lemma~\ref{e RC lemma: updated tau}, the following holds with probability tending to one:
    $$\frac{1}{n}\sum_{i=1}^n \left\{ \edecision_{n}(V_i) \left[ \Q^C_n(1,W_i) + \frac{T_i}{\Q^T_n(W_i)} [C_i - \Q^C_n(1,W_i)] \right] + \alpha \left[ \Q^C_n(0,W_i) + \frac{1-T_i}{1-\Q^T_n(W_i)} [C_i - \Q^C_n(0,W_i)] \right] \right\} = \kappa,$$
    and hence the second term is zero with probability tending to one, as desired. Therefore, $\psi_n - \psi_0 = (P_n-P_0) D_\FR(P_0) + \smallo_p(n^{-1/2})$.
    
    \vspace{0.5em}\noindent\textit{Case II: $\mathcal{R}=\RD$.}
    It holds that
    \begin{align*}
        \psi_n - \psi_0 &= (P_n-P_0) \{D(P_0,\edecision_0,\tau_0,\Q^C_0) - D(P_0,\edecision^\RD_0,0,\Q^C_0)\} \\
        &\quad- P_n \{ D(\hat{P}_n,\edecision_n,\tau_0,\Q^C_n) - D(\hat{P}_n,\edecision^\RD_n,0,\Q^C_0) \} \\
        &\quad- (\edecision^\RD_n-\edecision^\RD_0) P_0 \Delta^Y_0 + \smallo_p(n^{-1/2}),
    \end{align*}
    where we have used $\edecision^\RD_n$ and $\edecision^\RD_0$ to denote the values that the two functions take, respectively. The TMLE construction of $\hat{P}_n$ (Step~\ref{ie:targQY} of our estimator) implies that
    $$\frac{1}{n}\sum_{i=1}^n \frac{\edecision_{n}(V_i) - \edecision^\RD_n(V_i)}{T_i+\Q^T_n(W_i)-1} [Y_i - \hat{\Q}^Y_n(T_i,W_i)] = 0,$$
    and hence
    \begin{align*}
        & P_n \{ D(\hat{P}_n,\edecision_n,\tau_0,\Q^C_n) - D(\hat{P}_n,\edecision^\RD_n,0,\Q^C_0) \} \\
        &= -\tau_0 \Bigg\{ \frac{1}{n}\sum_{i=1}^n \Bigg\{ \edecision_{n}(V_i) \left[ \Delta^C_n(W_i) + \frac{1}{T_i+\Q^T_n(W_i)-1} [C_i - \Q^C_n(T_i,W_i)] \right] \\
        &\hspace{.5in}+ \alpha \left[ \Q^C_n(0,W_i) + \frac{1-T_i}{1-\Q^T_n(W_i)} [C_i - \Q^C_n(0,W_i)] \right] \Bigg\} - \kappa \Bigg\},
    \end{align*}
    which is zero with probability tending to one as proved above. By Condition~\ref{e RC assumption2: active constraint}, Lemma~\ref{e RC lemma: AL of resource0 and additional resource} and the delta method for influence functions, the value that $\edecision^\RD_n$ takes is an asymptotic linear estimator of the value that $\edecision^\RD_0$ takes. Straightforward application of the delta method for influence functions implies that
    $$\psi_n - \psi_0 = (P_n-P_0) D_\RD(P_0) + \smallo_p(n^{-1/2}).$$
    
    \vspace{0.5em}\noindent\textit{Case 3: $\mathcal{R}=\TP$.}
    It holds that
    $$\psi_n - \psi_0 = (P_n-P_0)D_{\TP}(P_0) - P_n D_{n,\TP} + \smallo_p(n^{-1/2}).$$
    The TMLE construction of $\hat{P}_n$ (Step~\ref{ie:targQY} of our estimator) implies that
    $$\frac{1}{n} \sum_{i=1}^n \left\{ \frac{\edecision_{n}(V_i) - \Q^T_n(W_i)}{T_i+\Q^T_n(W_i)-1} [Y_i - \hat{\Q}^Y_n(T_i,W_i)] \right\} = 0,$$
    so
    \begin{align*}
        P_n D_{n,\TP} &= -\tau_0 \Bigg\{ \frac{1}{n}\sum_{i=1}^n \Bigg\{ \edecision_{n}(V_i) \left[ \Delta^C_n(W_i) + \frac{1}{T_i+\Q^T_n(W_i)-1} [C_i - \Q^C_n(T_i,W_i)] \right] \\
        &\hspace{.5in}+ \alpha \left[ \Q^C_n(0,W_i) + \frac{1-T_i}{1-\Q^T_n(W_i)} [C_i - \Q^C_n(0,W_i)] \Bigg\} \right] - \kappa \Bigg\},
    \end{align*}
    which is zero with probability tending to one as proved above. Therefore,
    $$\psi_n - \psi_0 = (P_n-P_0) D_\TP(P_0) + \smallo_p(n^{-1/2}).$$
    
    \vspace{0.5em}\noindent\textit{Conclusion:} The asymptotic linearity result on $\psi_n$ follows from the above results. Consequently, the asymptotic normality result on $\psi_n$ holds by the central limit theorem and Slutsky's theorem.
\end{proof}

\subsection{Proof of Theorem~\ref{theorem: remainder due to estimating optimal rule}} \label{section: IER/ITR remainder proof}

In this section, we prove Theorem~\ref{theorem: remainder due to estimating optimal rule}. The arguments are almost identical to those in Supplement~S9 \citeauthor{Qiu2021} \citep{Qiu2021} with adaptations to the different treatment resource constraint.

\begin{lemma}[Convergence rate of $\tau_n$ if $\eta_0 > -\infty$] \label{e RC lemma: convergence rate of sample quantile}
    Assume that the conditions for Theorem~\ref{e RC theorem: asymptotic linearity} hold. Suppose that $\eta_0 > -\infty$, that the Lebesgue density of the distribution of $\xi_0(V)$ under $V\sim P_0$ is well-defined, nonzero and finite in a neighborhood of and that $P_0 I(\xi_n=\eta_n) = \bigO_p(n^{-1/2})$.
    Under these conditions, the following implications hold with probability tending to one:
    \begin{itemize}
        \item If $\| \xi_n - \xi_0 \|_{q,P_0}=\smallo_p(1)$ for some $0 < q < \infty$, then $|\tau_n - \tau_0| \lesssim \| \xi_n - \xi_0 \|_{q,P_0}^{q/{q+1}} + \bigO_p(n^{-1/2})$.
        \item If $\| \xi_n - \xi_0 \|_{\infty,P_0} = \smallo_p(1)$, then $|\tau_n - \tau_0| \lesssim \| \xi_n - \xi_0 \|_{\infty,P_0} + \bigO_p(n^{-1/2})$.
    \end{itemize}
\end{lemma}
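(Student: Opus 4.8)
The plan is to bound $|\tau_n-\tau_0|$ by $|\eta_n-\eta_0|$, to show that the population budget functional $\Gamma_0:\eta\mapsto P_0\{I(\xi_0>\eta)\Delta^C_0\}$ satisfies $\Gamma_0(\eta_n)=\Gamma_0(\eta_0)+\bigO_p(n^{-1/2})+\bigO_p(\|\xi_n-\xi_0\|_{q,P_0}^{q/(q+1)})$ (and the analogue with the $\|\cdot\|_{\infty,P_0}$-rate), and then to invert $\Gamma_0$ using that it is locally bi-Lipschitz around $\eta_0$. First I would use that $x\mapsto\max\{x,0\}$ is $1$-Lipschitz, so $|\tau_n-\tau_0|\le|\eta_n-\eta_0|$. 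If $\eta_0<0$, then $k_n=\kappa$ and $\eta_n=\eta_n(\kappa)\to\eta_0<0$ by Lemma~\ref{e RC lemma: consistency of eta}, so $\tau_n=0=\tau_0$ with probability tending to one; the same holds on the event $\{\eta_n(\kappa)\le0\}$ when $\eta_0=0$. I may therefore restrict to the event on which $k_n$ equals the solution $k_n'$ of the estimating equation~\eqref{e RC equation: update quantile} --- which by part~\ref{it:etaknEventually} of Lemma~\ref{e RC lemma: updated tau} has probability tending to one when $\eta_0>0$ and is the only remaining possibility when $\eta_0=0$ --- and on which, by the proof of part~\ref{it:tauEcons} of the same lemma, $\eta_n$ is consistent for $\eta_0$.

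On this event I would first establish a budget-saturation estimate. Combining~\eqref{e RC equation: update quantile}, the uniform approximation~\eqref{eq:fnkallk}, and the asymptotic linearity of $\phi_n$ (Lemma~\ref{e RC lemma: AL of resource0 and additional resource}) gives $P_0\{d_{n,k_n}\Delta^C_0\}+\alpha\phi_0=\kappa+\bigO_p(n^{-1/2})$, which is essentially part~\ref{it:etaSaturatesEventually} of Lemma~\ref{e RC lemma: updated tau}. Since $d_{n,k_n}(v)$ coincides with $I(\xi_n(v)>\eta_n)$ except on the set $\{\xi_n=\eta_n\}$, on which both take values in $[0,1]$, the hypothesis $P_0 I(\xi_n=\eta_n)=\bigO_p(n^{-1/2})$ together with the boundedness of $\Delta^C_0$ (Condition~\ref{e RC assumption: bounded encouragement effect}) upgrades this to
\[
    P_0\{I(\xi_n>\eta_n)\Delta^C_0\}+\alpha\phi_0=\kappa+\bigO_p(n^{-1/2}).
\]

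Next I would replace $\xi_n$ by $\xi_0$. Applying Lemma~\ref{lemma: bound for consistency of survival function} with $f=\xi_n$, $f_0=\xi_0$, $g=\Delta^C_0$, and $\eta=\eta_n$ yields, for every $\epsilon'>0$,
\[
    \bigl|\Gamma_0(\eta_n)-P_0\{I(\xi_n>\eta_n)\Delta^C_0\}\bigr|\ \lesssim\ P_0\{|\xi_n-\xi_0|>\epsilon'\}+P_0\{|\xi_0-\eta_n|\le\epsilon'\}.
\]
Eventually $\eta_n$ lies in the neighborhood of $\eta_0$ on which $\xi_0(V)$ has a finite Lebesgue density, so the second term is $\lesssim\epsilon'$, while Markov's inequality bounds the first by $(\epsilon')^{-q}\|\xi_n-\xi_0\|_{q,P_0}^q$; taking $\epsilon'$ of the order of $\|\xi_n-\xi_0\|_{q,P_0}^{q/(q+1)}$, which is admissible since $\|\xi_n-\xi_0\|_{q,P_0}=\smallo_p(1)$, balances the two and bounds the left-hand side by a constant times $\|\xi_n-\xi_0\|_{q,P_0}^{q/(q+1)}$. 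When instead $\|\xi_n-\xi_0\|_{\infty,P_0}=\smallo_p(1)$, taking $\epsilon'$ just above $\|\xi_n-\xi_0\|_{\infty,P_0}$ makes the first term vanish and leaves a bound $\lesssim\|\xi_n-\xi_0\|_{\infty,P_0}$. Since $\Gamma_0(\eta_0)=\kappa-\alpha\phi_0$, combining with the previous display gives $\Gamma_0(\eta_n)-\Gamma_0(\eta_0)=\bigO_p(n^{-1/2})+\bigO_p(\|\xi_n-\xi_0\|_{q,P_0}^{q/(q+1)})$, respectively with the $\|\cdot\|_{\infty,P_0}$-rate. By Condition~\ref{e RC assumption: continuous weight}, $\Gamma_0$ is continuously differentiable with nonzero derivative in a neighborhood of $\eta_0$, hence bi-Lipschitz there, so $|\eta_n-\eta_0|\lesssim|\Gamma_0(\eta_n)-\Gamma_0(\eta_0)|$ once $\eta_n$ lies in that neighborhood; together with $|\tau_n-\tau_0|\le|\eta_n-\eta_0|$ this yields both asserted bounds.

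I expect the budget-saturation step to be the main obstacle. It requires reconciling the two possible definitions of $\eta_n$ (via~\eqref{e RC equation: update quantile} versus via $\eta_n(\kappa)$), passing from the empirical, estimated-nuisance object $P_n f_{n,k_n}$ to the population functional uniformly over $k$ --- which is precisely what~\eqref{eq:fnkallk} provides, drawing on Conditions~\ref{e RC assumption: remainder}, \ref{e RC assumption: consistency of estimated influence function} and~\ref{e RC assumption: Donsker condition} together with Lemma~\ref{e RC lemma: AL of resource0 and additional resource} --- and absorbing the boundary contribution on $\{\xi_n=\eta_n\}$ via the hypothesis on $P_0 I(\xi_n=\eta_n)$. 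The remainder is the classical quantile-perturbation argument, whose only non-routine ingredient is the $\epsilon'$-optimization yielding the exponent $q/(q+1)$.
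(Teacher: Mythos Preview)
Your proposal is correct and follows essentially the same route as the paper's proof: split into the cases $\eta_0<0$, $\eta_0=0$, $\eta_0>0$; on the nontrivial event use the budget-saturation estimate from Lemma~\ref{e RC lemma: updated tau}\ref{it:etaSaturatesEventually} (upgraded via the hypothesis $P_0 I(\xi_n=\eta_n)=\bigO_p(n^{-1/2})$ and Condition~\ref{e RC assumption: bounded encouragement effect}) to reach $P_0\{I(\xi_n>\eta_n)\Delta^C_0\}=\Gamma_0(\eta_0)+\bigO_p(n^{-1/2})$; then bound $|\Gamma_0(\eta_n)-P_0\{I(\xi_n>\eta_n)\Delta^C_0\}|$ by Lemma~\ref{lemma: bound for consistency of survival function} and optimize the auxiliary threshold to obtain the exponent $q/(q+1)$; finally invert $\Gamma_0$ using Condition~\ref{e RC assumption: continuous weight}. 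Your framing via the $1$-Lipschitz property of $x\mapsto\max\{x,0\}$ and bi-Lipschitz inversion of $\Gamma_0$ is a slightly cleaner packaging of what the paper does by explicit case analysis and a Taylor expansion of $\Gamma_0$, but the ingredients and logic are the same.
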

The condition that $P_0 I(\xi_n=\eta_n) = \bigO_p(n^{-1/2})$ is reasonable if $\xi_n(V)$ has a continuous distribution when $V\sim P_0$, in which case $P_0 I(\xi_n=\eta_n) = 0$.

\begin{proof}[Proof of Lemma~\ref{e RC lemma: convergence rate of sample quantile}]
    We study the three cases where $\eta_0>0$, $\eta_0 < 0$ and $\eta_0=0$ separately.
    
    We first study the case where $\eta_0>0$. By Lemma~\ref{e RC lemma: updated tau}, with probability tending to one, $\eta_n = \eta_n(k_n)$ where $k_n$ is a solution to \eqref{e RC equation: update quantile}, and
    \begin{align*}
        P_0 \{ [I(\xi_n > \eta_n) - I(\xi_0 > \eta_0)] \Delta^C_0 \} = P_0 \{ d_{n,k_n} \Delta^C_0 \} - (\kappa - \alpha \phi_0) + \bigO_p(n^{-1/2}) = \bigO_p(n^{-1/2}). 
    \end{align*}
    We argue conditionally on the event that $k_n$ is a solution to \eqref{e RC equation: update quantile}. Adding $\Gamma_0(\eta_n)-P_0\{I(\xi_n > \eta_n)\Delta^C_0\}$ to both sides shows that $\Gamma_0(\eta_n) - \Gamma_0(\eta_0)= -P_0 \{ [I(\xi_n > \eta_n) - I(\xi_0 > \eta_n)] \Delta^C_0 \}  + \bigO_p(n^{-1/2})$. 
    By a Taylor expansion of $\Gamma_0$ under Conditions~\ref{e RC assumption: continuous density},~\ref{e RC assumption: continuous weight}~and~\ref{IV assumption: strong encouragement}, the left-hand side is equal to $-C (\eta_n-\eta_0) + \smallo_p(\eta_n - \eta_0)$ for some $C>0$, yielding that
    \begin{align*}
        [C + \smallo_p(1)][\eta_n-\eta_0]&= P_0 \{ [I(\xi_n > \eta_n) - I(\xi_0 > \eta_n)] \Delta^C_0 \}  + \bigO_p(n^{-1/2}), \nonumber
    \end{align*}
    which immediately implies that
    \begin{align}
        \eta_n-\eta_0&= \bigO_p\Big(P_0 \{ [I(\xi_n > \eta_n) - I(\xi_0 > \eta_n)] \Delta^C_0 \}\Big)  + \bigO_p(n^{-1/2}). \label{eq:etaEmainTaylor}
    \end{align}
    The rest of the proof for this case and the proof for the other two cases are identical to the proof of Lemma~S14 in \citeauthor{Qiu2021} \citep{Qiu2021}.
   We present the argument below for completeness. By Lemma~\ref{lemma: bound for consistency of survival function} and Condition~\ref{e RC assumption: bounded encouragement effect}, for any $\epsilon>0$ it holds that
   \begin{align*}
       &|P_0 \{ [I(\xi_n > \eta_n) - I(\xi_0 > \eta_n)] \Delta^C_0 \}| \\
       &\lesssim |P_0 \{ [I(\xi_n > \eta_n) - I(\xi_0 > \eta_n)] \}| \\
       &\leq P_0 I(|\xi_n-\xi_0|>\epsilon) + P_0 I(|\xi_0 - \eta_n| \leq \epsilon).
   \end{align*}
   Fix a positive sequence $\{\epsilon_n\}_{n=1}^\infty$, where each $\epsilon_n$ may be random through observations $O_1,\ldots,O_n$, such that $\epsilon_n\overset{p}{\rightarrow} 0$ as $n\rightarrow\infty$. By a Taylor expansion of $S_0$, the survival function of the distribution of $\xi_0(V)$ when $V\sim P_0$, around $\eta_0$, which is valid under Condition~\ref{e RC assumption: continuous density} provided $\epsilon_n$ is sufficiently small, it follows that
   \begin{align*}
       |P_0 \{ [I(\xi_n > \eta_n) - I(\xi_0 > \eta_n)] \Delta^C_0 \}|&\lesssim P_0 I(|\xi_n-\xi_0|>\epsilon_n) - 2(S_0)'(\eta_0) \epsilon_n + \smallo_p(\epsilon_n).
   \end{align*}
   Here we recall that $(S_0)'(\eta_0)$ is finite by Condition~\ref{e RC assumption: continuous density}. Returning to \eqref{eq:etaEmainTaylor},
   \begin{align*}
       \eta_n-\eta_0&= \bigO_p\Big(P_0 I(|\xi_n-\xi_0|>\epsilon_n)\Big)  - [2(S_0)'(\eta_0)+\smallo_p(1)] \epsilon_n + \bigO_p(n^{-1/2}).
   \end{align*}
   If $\|\xi_n-\xi_0\|_{q,P_0}=\smallo_p(1)$ for some $0<q<\infty$, by Markov's inequality, $P_0 I(|\xi_n-\xi_0|>\epsilon_n) \leq \|\xi_n-\xi_0\|_{q,P_0}^q/\epsilon_n^q$. In this case, taking $\epsilon_n = \|\xi_n-\xi_0\|_{q,P_0}^{q/(q+1)}$ yields that $|\eta_n-\eta_0| \lesssim \|\xi_n-\xi_0\|_{q,P_0}^{q/(q+1)}+\bigO_p(n^{-1/2})$ with probability tending to one. If $\|\xi_n-\xi_0\|_{\infty,P_0}=\smallo_p(1)$, then taking $\epsilon_n = \|\xi_n-\xi_0\|_{\infty,P_0}$ yields that $P_0 I(|\xi_n-\xi_0|>\epsilon_n)=0$, and hence that $|\eta_n-\eta_0| \lesssim \|\xi_n-\xi_0\|_{\infty,P_0}^2+\bigO_p(n^{-1/2})$ with probability tending to one. The desired result follows by noting that $\tau_0=\eta_0$ and in both cases, $\tau_n=\eta_n(k_n)$ with probability tending to one.
   
   We now study the case where $\eta < 0$. By Lemma~\ref{e RC lemma: consistency of eta}, with probability tending to one, $\eta_n < 0$ and hence $\tau_n=0=\tau_0$, as desired.
   
   We finally study the case where $\eta_0=0$. We argue conditional on the event that a solution $k_n'$ to \eqref{e RC equation: update quantile} exists, which happens with probability tending to one by Lemma~\ref{e RC lemma: updated tau}. Recall that for convenience we let $k_n=k_n'$ when $\eta_n(\kappa)>0$. Then, exactly one of the following two events happen: (i)  $\eta_n(\kappa) \leq 0$ or $\eta_n(k_n') \leq 0$, in which case $\tau_n=0=\tau_0$; (2) $\eta_n(\kappa) > 0$ and $\eta_n(k_n') > 0$, in which case a similar argument as the above proof for the case where $\eta_0>0$ shows that the distance between $\tau_n=\eta_n(k_n')$ and $\tau_0$ has the desired bound. The desired result holds conditional on either event, so it holds unconditional on either event.
\end{proof}

We finally prove Theorem~\ref{theorem: remainder due to estimating optimal rule}.

\begin{proof}[Proof of Theorem~\ref{theorem: remainder due to estimating optimal rule}]
    Observe that
    \begin{align} \label{eq:IER remainder expansion}
        \begin{split}
            | P_0 \{ (\edecision_{n} - \edecision_{0}) (\delta^Y_{0} - \tau_0 \Delta^C_0) \} |
            &\leq  P_0  |\{ I(\xi_n > \tau_n) - I(\xi_0 > \tau_0) \} (\xi_0 - \tau_0) \Delta^C_0|  \\
            &\lesssim  P_0  |\{ I(\xi_n > \tau_n) - I(\xi_0 > \tau_0) \} (\xi_0 - \tau_0) | \\
            &\leq  P_0  |\{ I(\xi_n > \tau_n) - I(\xi_0 > \tau_n) \} (\xi_0 - \tau_n) | \\
            &\quad+  P_0 |\{ I(\xi_0 > \tau_n) - I(\xi_0 > \tau_0) \} (\xi_0 - \tau_0) | \\
            &\quad+  |\tau_n - \tau_0| P_0 | I(\xi_n > \tau_n) - I(\xi_0 > \tau_n) |.
        \end{split}
    \end{align}
    Starting from this inequality, the rest of the proof is identical to that of Theorem~5 in \citeauthor{Qiu2021} \citep{Qiu2021}. We present the argument below for completeness. Let $\{\epsilon_n\}_{n=1}^\infty$ be a positive sequence, where each $\epsilon_n$ is random through the observations $O_1,\ldots,O_n$, such that $\epsilon_n\overset{p}{\rightarrow} 0$ as $n\rightarrow\infty$.
   
   We denote the three terms on the right-hand side by terms 1, 2, and 3, and study these terms separately. It is useful to note that $\tau_n - \tau_0 = \smallo_p(1)$, so the Lebesgue density of the distribution of $\xi_0(V)$, $V\sim P_0$, is finite in a neighborhood of $\tau_n$ with probability tending to one.
   
   \vspace{0.5em}\noindent\textbf{Study of term~1 in \eqref{eq:IER remainder expansion}:}
   Observe that
   \begin{align*}
       P_0  &|\{ I(\xi_n > \tau_n) - I(\xi_0 > \tau_n) \} (\xi_0 - \tau_n) | 
       \\
       &= P_0 | \{ I(\xi_n > \tau_n) - I(\xi_0 > \tau_n) \} (\xi_0 - \tau_n) | I(0 < |\xi_0 - \tau_n|).
   \end{align*}
   First consider the bound with the $L^q(P_0)$-distance. Because $I(\xi_n(v)>\tau_n) \neq I(\xi_0(v)>\tau_n)$ if and only if (i) $\xi_n(v)-\tau_n$ and $\xi_0(v)-\tau_n$ take different signs or (ii) only one of them is zero, this event implies $|\xi_0(v) - \tau_n| \leq |\xi_n(v)-\xi_0(v)|$, and so this term is upper bounded by
   \begin{align*}
       & P_0 | \{ I(\xi_n > \tau_n) - I(\xi_0 > \tau_n) \} (\xi_0 - \tau_n) | I(0 < |\xi_0 - \tau_n| \leq \epsilon_n) \\
       &\quad+ P_0 | \{ I(\xi_n > \tau_n) - I(\xi_0 > \tau_n) \} (\xi_0 - \tau_n) | I( |\xi_0 - \tau_n| > \epsilon_n) \\
       &\leq P_0 | \xi_n - \xi_0 | I(0 < |\xi_0 - \tau_n| \leq \epsilon_n) + P_0 | \xi_n - \xi_0 | I(|\xi_n - \xi_0| > \epsilon_n) \\
       &\leq \| \xi_n - \xi_0 \|_{q,P_0} \left\{ P_0(0 < |\xi_0(V) - \tau_n| \leq \epsilon_n) \right\}^{(q-1)/q} + \frac{P_0 |\xi_n - \xi_0|^q}{\epsilon_n^{q-1}} \\
       &\lesssim \| \xi_n - \xi_0 \|_{q,P_0} \cdot \epsilon_n^{(q-1)/q} + \frac{\| \xi_n - \xi_0 \|_{q,P_0}^q}{\epsilon_n^{q-1}},
   \end{align*}
   where second to last relation holds by H\"{o}lder's inequality and Markov's inequality, and the last relation holds with probability tending to one by the assumption that the distribution of $\xi_0(V)$, $V \sim P_0$, has a continuous finite Lebesgue density in a neighborhood of $\tau_0$ and Lemma~\ref{e RC lemma: updated tau}. Taking $\epsilon_n = \| \xi_n - \xi_0 \|_{q,P_0}^{q/(q+1)}$ yields that $| P_0 \{ I(\xi_n > \tau_n) - I(\xi_0 > \tau_n) \} (\xi_0 - \tau_n) | \lesssim \| \xi_n - \xi_0 \|_{q,P_0}^{2q/(q+1)}$.
   
   Next consider the bound with the $L^\infty(P_0)$-distance. We have that
   \begin{align*}
       P_0 | \{ I(\xi_n > \tau_n) - I(\xi_0 > \tau_n) \} (\xi_0 - \tau_n) | 
       &\leq P_0 I(|\xi_0 - \tau_n| \leq |\xi_n - \xi_0|) | \xi_0 - \tau_n | \\
       &= P_0 I(0 < |\xi_0 - \tau_n| \leq |\xi_n - \xi_0|) | \xi_0 - \tau_n | \\
       &\leq P_0 I(0 < |\xi_0 - \tau_n| \leq \| \xi_n - \xi_0 \|_{\infty,P_0}) | \xi_0 - \tau_n | \\
       &\leq \| \xi_n - \xi_0 \|_{\infty,P_0} P_0(0 < |\xi_0(V) - \tau_n| \leq \| \xi_n - \xi_0 \|_{\infty,P_0}) \\
       &\lesssim \| \xi_n - \xi_0 \|_{\infty,P_0}^2.
   \end{align*}
   Therefore, the first term is upper bounded by both $\| \xi_n - \xi_0 \|_{q,P_0}^{2q/(q+1)}$ and $\| \xi_n - \xi_0 \|_{\infty,P_0}^2$, up to an absolute constant.
   
   \vspace{0.5em}\noindent\textbf{Study of term~2 in \eqref{eq:IER remainder expansion}:}
   Because $I(\xi_0(v) > \tau_n) \neq I(\xi_0(v) > \tau_0)$ if and only if the two indicators take different signs or only one of them is zero, these indicators only take different values if $|\xi_0(v) - \tau_0| \leq |\tau_n - \tau_0|$. Therefore, term~2 bounds as
   \begin{align*}
       P_0 |\{ I(\xi_0 > \tau_n) - I(\xi_0 > \tau_0) \} (\xi_0 - \tau_0) |&\leq P_0 I(|\xi_0 - \tau_0| \leq |\tau_n - \tau_0|) |\xi_0 - \tau_0| \\
       &\leq |\tau_n - \tau_0| P_0 I(|\xi_0 - \tau_0| \leq |\tau_n - \tau_0|) \\
       &\lesssim |\tau_n - \tau_0|^2,
   \end{align*}
   where the last step holds for with probability tending to one by the assumption that the distribution of $\xi_0(V)$, $V \sim P_0$, has a continuous finite Lebesgue density in a neighborhood of $\tau_0$ and Lemma~\ref{e RC lemma: updated tau}. If $\eta_0 > -\infty$, by Lemma~\ref{e RC lemma: convergence rate of sample quantile}, with probability tending to one,
   $$ P_0 | I(\xi_0 > \tau_n) - I(\xi_0 > \tau_0) | |\xi_0 - \tau_0 | \lesssim \begin{cases}
       \| \xi_n - \xi_0 \|_{q,P_0}^{2q/(q+1)} + \bigO_p(n^{-1}), & \text{if } \| \xi_n - \xi_0 \|_{q,P_0} = \smallo_p(1) \\
       \| \xi_n - \xi_0 \|_{\infty,P_0}^2 + \bigO_p(n^{-1}), & \text{if } \| \xi_n - \xi_0 \|_{\infty,P_0} = \smallo_p(1)
   \end{cases}.$$
   Otherwise, by Lemma~\ref{e RC lemma: consistency of eta}, with probability tending to one, $\tau_n=0=\tau_0$ and the above result still holds.
   
   \vspace{0.5em}\noindent\textbf{Study of term~3 in \eqref{eq:IER remainder expansion}:} By Lemma~\ref{lemma: bound for consistency of survival function},
   $$ P_0 | I(\xi_n > \tau_n) - I(\xi_0 > \tau_n)  | \leq P_0 I(|\xi_n - \xi_0| > \epsilon_n) + P_0 I(|\xi_0 - \tau_n| \leq \epsilon_n).$$
   By a Taylor expansion of $S_0$ around $\tau_0$, similarly to the proof of Lemma~\ref{e RC lemma: convergence rate of sample quantile}, with probability tending to one,
   $$P_0 I(|\xi_0 - \tau_n| \leq \epsilon_n) = -2 (S_0)'(\tau_0) \epsilon_n + \smallo_p(|\tau_n - \tau_0 | + \epsilon_n),$$
   where $|(S_0)'(\tau_0)| < \infty$. If $\| \xi_n - \xi_0 \|_{q,P_0} = \smallo_p(1)$ for some $1 < q < \infty$, then $P_0 I(|\xi_n - \xi_0| > \epsilon_n) \leq \| \xi_n - \xi_0 \|_{q,P_0}^q/\epsilon_n^q$. Taking $\epsilon_n = \| \xi_n - \xi_0 \|_{q,P_0}^{q/(q+1)}$ yields that $| P_0 \{ I(\xi_n > \tau_n) - I(\xi_0 > \tau_n) \} | \lesssim \| \xi_n - \xi_0 \|_{q,P_0}^{q/(q+1)}$. If $\| \xi_n - \xi_0 \|_{\infty,P_0} = \smallo_p(1)$, then taking $\epsilon_n = \| \xi_n - \xi_0 \|_{\infty,P_0}$ yields that $| P_0 \{ I(\xi_n > \tau_n) - I(\xi_0 > \tau_n) \} | \lesssim \| \xi_n - \xi_0 \|_{\infty,P_0}$ with probability tending to one. Also note that, by Lemma~\ref{e RC lemma: convergence rate of sample quantile}, if $\eta_0 > -\infty$, then, with probability tending to one,
   $$|\tau_n - \tau_0| \leq |\eta_n - \eta_0| \lesssim \begin{cases}
       \| \xi_n - \xi_0 \|_{q,P_0}^{q/(q+1)} + \bigO_p(n^{-1/2}), & \text{if } \| \xi_n - \xi_0 \|_{q,P_0} = \smallo_p(1), \\
       \| \xi_n - \xi_0 \|_{\infty,P_0} + \bigO_p(n^{-1/2}), & \text{if } \| \xi_n - \xi_0 \|_{\infty,P_0} = \smallo_p(1).
   \end{cases}$$
   The same holds when $\eta_0 = -\infty$ since then $|\tau_n - \tau_0|=0$ with probability tending to one.
   
   Therefore, with probability tending to one,
   \begin{align*}
       | &\tau_n - \tau_0| P_0 | I(\xi_n > \tau_n) - I(\xi_0 > \tau_n) | \\
       &\lesssim \begin{cases}
           \| \xi_n - \xi_0 \|_{q,P_0}^{2q/(q+1)} + \| \xi_n - \xi_0 \|_{q,P_0}^{q/(q+1)} \bigO_p(n^{-1/2}) & \text{ if } \| \xi_n - \xi_0 \|_{q,P_0} = \smallo_p(1), \\
           \| \xi_n - \xi_0 \|_{\infty,P_0}^2 + \| \xi_n - \xi_0 \|_{\infty,P_0} \bigO_p(n^{-1/2}) & \text{ if } \| \xi_n - \xi_0 \|_{\infty,P_0} = \smallo_p(1).
       \end{cases}
   \end{align*}
   
   \vspace{0.5em}\noindent\textbf{Conclusion of the bound in \eqref{eq:IER remainder expansion}:} We finally combine the bounds for all three terms. Note that $a_n \bigO_p(b_n) \lesssim a_n^2 + \bigO_p(b_n^2)$ for any sequence of non-negative random variables $a_n$ and sequence of constants $b_n$. It follows that, with probability tending to one,
   $$| P_0 \{ (\edecision_{n} - \edecision_{0}) (\delta^Y_{0} - \tau_0 \nu_0) \} | \lesssim \begin{cases}
       \| \xi_n - \xi_0 \|_{q,P_0}^{2q/(q+1)} + \bigO_p(n^{-1}), & \text{if } \| \xi_n - \xi_0 \|_{q,P_0} = \smallo_p(1), \\
       \| \xi_n - \xi_0 \|_{\infty,P_0}^2 + \bigO_p(n^{-1}), & \text{if } \| \xi_n - \xi_0 \|_{\infty,P_0} = \smallo_p(1).
   \end{cases}$$
\end{proof}

\section{Additional simulations} \label{section: simulation2}

\subsection{Results of simulation with nuisance functions being truth} \label{section: simulation2 truth}

In this section, we present the results of the simulation with an identical setting as that in Section~\ref{section: simulation} in the main text except that the nuisance functions are taken to be the truth rather than estimated via machine learning. The purpose of this simulation is to show that the performance of our proposed estimator may be significantly improved by using machine learning estimators of nuisance functions that outperform those used in the simulation study reported in the main text.

Table~\ref{table: simulation2 truth} presents the performance of our proposed estimator in this simulation. The Wald CI coverage is close to 95\% for sample sizes of 1000 or more. The coverage of the confidence lower bounds is also close to the nominal coverage of 97.5\%. Therefore, our proposed procedure appears to have the potential to be significantly improved when using improved estimators of nuisance functions. Figure~\ref{figure: CI width2 truth} presents the width of our 95\% Wald CI scaled by the square root of sample size $n$.
For each estimand, the scaled width appears to stabilize as $n$ grows and to be similar to the scaled width observed in the simulation reported in Section~\ref{section: simulation}, where nuisance functions are estimated from data.

\begin{table}[bt]
    \caption{Performance of estimators of average causal effects in the simulation with nuisance functions being the truth.}
    \label{table: simulation2 truth}
    \begin{center}
        \begin{tabular}{lr|r|r|r}
            Performance measure & Sample size & $\FR$ & $\RD$ & $\TP$ \\ \hline \hline
            95\% Wald CI coverage & 500 & $93\%$ & $90\%$ & $90\%$ \\
            & 1000 & $94\%$ & $94\%$ & $93\%$ \\
            & 4000 & $96\%$ & $95\%$ & $95\%$ \\
            & 16000 & $94\%$ & $96\%$ & $95\%$ \\ \hline
            97.5\% confidence lower & 500 & $96\%$ & $96\%$ & $95\%$ \\
            bound coverage & 1000 & $97\%$ & $97\%$ & $96\%$ \\
            & 4000 & $98\%$ & $97\%$ & $97\%$ \\
            & 16000 & $97\%$ & $97\%$ & $97\%$ \\ \hline
            bias & 500 & $0.0023$ & $0.0004$ & $0.0010$ \\
            & 1000 & $0.0013$ & $0.0008$ & $0.0007$ \\
            & 4000 & $0.0003$ & $0.0003$ & $0.0003$ \\
            & 16000 & $0.0002$ & $0.003$ & $0.0002$ \\ \hline
            RMSE & 500 & $0.048$ & $0.023$ & $0.029$ \\
            & 1000 & $0.033$ & $0.015$ & $0.020$ \\
            & 4000 & $0.016$ & $0.008$ & $0.010$ \\
            & 16000 & $0.009$ & $0.004$ & $0.005$ \\
            \hline
            Ratio of mean standard error & 500 & $0.964$ & $0.911$ & $0.928$ \\
            to standard deviation & 1000 & $0.967$ & $0.998$ & $0.958$ \\
            & 4000 & $1.028$ & $0.983$ & $0.995$ \\
            & 16000 & $0.963$ & $1.007$ & $0.996$ \\
        \end{tabular}
    \end{center}
\end{table}

\begin{figure}[bt]
    \begin{center}
        \includegraphics{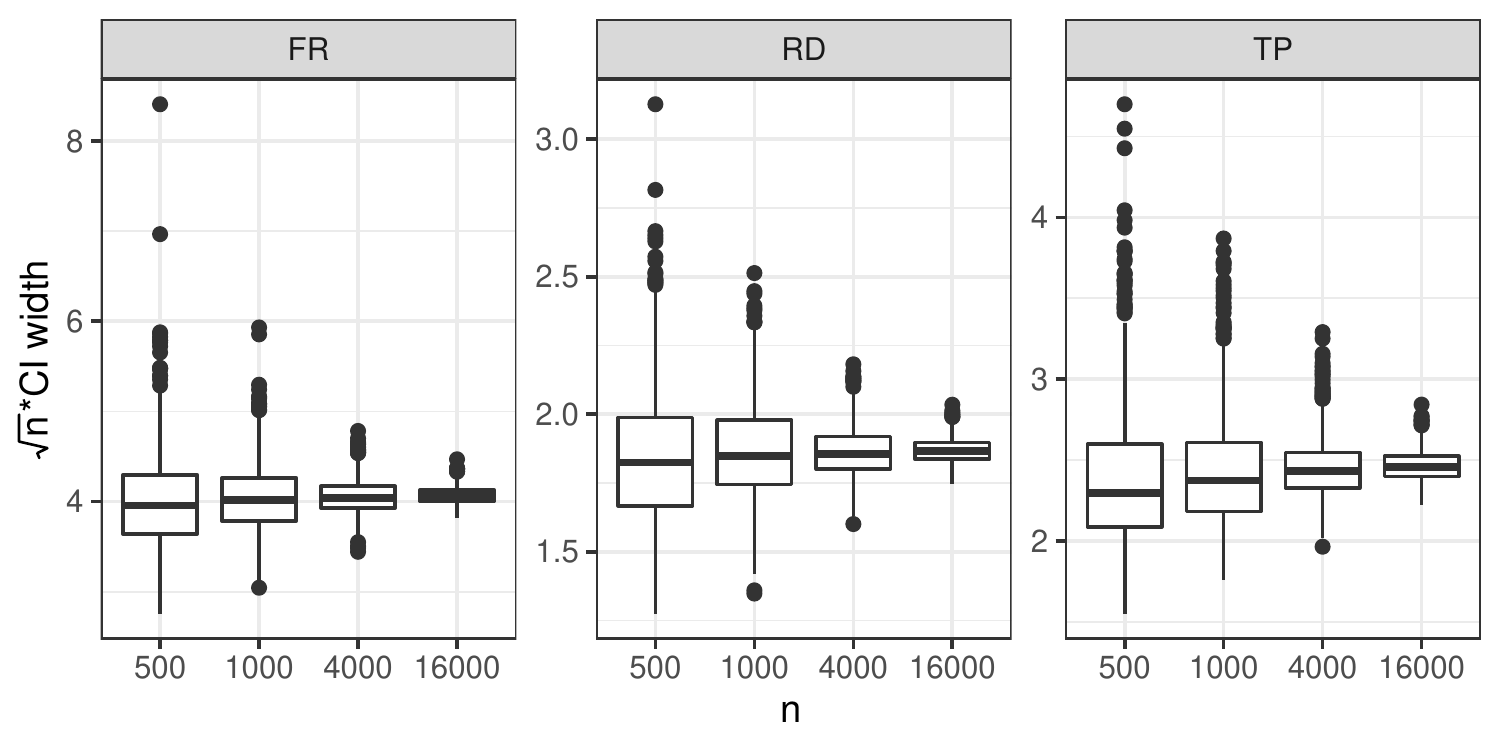}
    \end{center}
    \caption{Boxplot of $\sqrt{n} \times$ CI width for ATE relative to each reference ITR in the simulation with nuisance functions being the truth. \label{figure: CI width2 truth}}
\end{figure}

\subsection{Simulation under a low dimension and a parametric model} \label{section: simulation2 parametric}

In this section, we describe the additional simulation in a setting with a low dimension and a parametric model as well as the simulation results.

The data is generated as follows. We first generate a univariate covariate $W \sim \mathrm{Unif}(-1,1)$. We then generate $T$, $C$ and $Y$ as follows:
\begin{alignat*}{2}
    & T \mid W\  &&\sim\  \text{Bernoulli}\left(\expit(W)\right), \\
    & C \mid T,W\  &&\sim\  \text{Bernoulli}\left(\expit(2T-1+W)\right), \\
    & Y \mid T,W\   &&\sim\ \text{Bernoulli}\left(\expit(1.4T-0.7-0.3W)\right),
\end{alignat*}
where $C$ and $Y$ are independent conditional on $(W,T)$. We set $\edecision^\FR: v \mapsto 0$, $V=W$, and $\kappa=0.35$, which is an active constraint with $\tau_0>0$ and $\edecision^\RD_0 < 1$. We use logistic regression to estimate functions $\Q^T_0$, $\Q^C_0$ and $\Q^C_0$. All other simulation settings are identical to that in Section~\ref{section: simulation}.

The simulation results are presented in Table~\ref{table: simulation2 parametric} and Figure~\ref{figure: CI width2 parametric}. The performance is generally between the nonparametric setting in Section~\ref{section: simulation} and the oracle setting in Section~\ref{section: simulation2 truth}. The CI coverage is much better than the nonparametric case, thus suggesting that our method might perform better with improved estimators of nuisance functions $\Q^T_0$, $\Q^C_0$ and $\Q^C_0$.

\begin{table}[bt]
    \caption{Performance of estimators of average causal effects in the simulation with nuisance functions in a parametric model.}
    \label{table: simulation2 parametric}
    \begin{center}
        \begin{tabular}{lr|r|r|r}
            Performance measure & Sample size & $\FR$ & $\RD$ & $\TP$ \\ \hline \hline
            95\% Wald CI coverage & 500 & $95\%$ & $83\%$ & $96\%$ \\
            & 1000 & $91\%$ & $83\%$ & $93\%$ \\
            & 4000 & $94\%$ & $88\%$ & $93\%$ \\
            & 16000 & $94\%$ & $94\%$ & $95\%$ \\ \hline
            97.5\% confidence lower & 500 & $99\%$ & $99\%$ & $99\%$ \\
            bound coverage & 1000 & $99\%$ & $99\%$ & $99\%$ \\
            & 4000 & $99\%$ & $99\%$ & $98\%$ \\
            & 16000 & $98\%$ & $99\%$ & $99\%$ \\ \hline
            bias & 500 & $-0.0177$ & $-0.0161$ & $-0.0167$ \\
            & 1000 & $-0.0122$ & $-0.0113$ & $-0.0125$ \\
            & 4000 & $-0.0037$ & $-0.0036$ & $-0.0035$ \\
            & 16000 & $-0.0009$ & $-0.0010$ & $-0.0008$ \\ \hline
            RMSE & 500 & $0.035$ & $0.029$ & $0.040$ \\
            & 1000 & $0.026$ & $0.021$ & $0.029$ \\
            & 4000 & $0.012$ & $0.009$ & $0.013$ \\
            & 16000 & $0.006$ & $0.004$ & $0.006$ \\
            \hline
            Ratio of mean standard error & 500 & $1.122$ & $0.908$ & $1.046$ \\
            to standard deviation & 1000 & $0.988$ & $0.857$ & $0.984$ \\
            & 4000 & $0.978$ & $0.891$ & $0.942$ \\
            & 16000 & $0.986$ & $0.979$ & $0.979$ \\
        \end{tabular}
    \end{center}
\end{table}

\begin{figure}[bt]
    \begin{center}
        \includegraphics{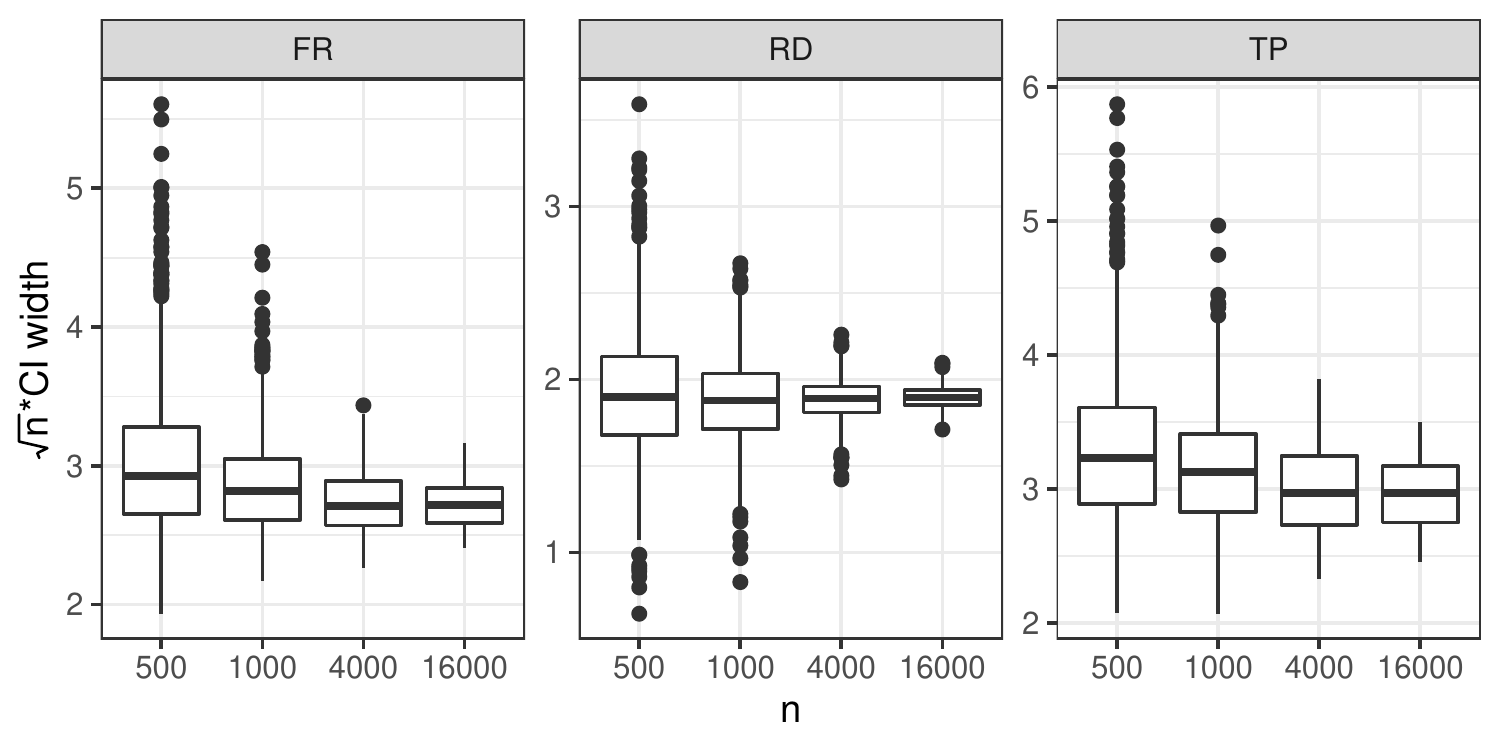}
    \end{center}
    \caption{Boxplot of $\sqrt{n} \times$ CI width for ATE relative to each reference ITR in the simulation with nuisance functions in a parametric model. \label{figure: CI width2 parametric}}
\end{figure}

\end{document}